\documentclass{article} 
\usepackage{iclr2026_conference,times}


\usepackage{hyperref}
\usepackage{url}
\usepackage[utf8]{inputenc} 
\usepackage[T1]{fontenc}    
\usepackage{hyperref}       
\usepackage{url}            
\usepackage{booktabs}       
\usepackage{amsfonts}       
\usepackage{nicefrac}       
\usepackage{microtype}      
\usepackage{xcolor}         

\usepackage[labelformat=parens]{subcaption}
\usepackage{enumitem}

\usepackage{comment}
\usepackage{amsthm, thm-restate}
\newtheorem{theorem}{Theorem}
\newtheorem{lemma}{Lemma}
\newtheorem{proposition}{Proposition}
\newtheorem{corollary}{Corollary}

\usepackage{wrapfig}

\newtheorem{definition}{Definition}
\usepackage[noend]{algpseudocode}
\usepackage{amsfonts}
\usepackage{algorithm}
\usepackage{amsmath}
\usepackage{graphicx}
\usepackage{amssymb}
\usepackage{bbm}
\usepackage{listings}
\usepackage{xcolor}
\usepackage{multirow} 

\usepackage{tabularx, makecell, booktabs}

\usepackage{mathtools}
\captionsetup[subfigure]{aboveskip=0pt,belowskip=0pt}

\usepackage{siunitx}

\definecolor{codegreen}{rgb}{0,0.6,0}
\definecolor{codegray}{rgb}{0.5,0.5,0.5}
\definecolor{codepurple}{rgb}{0.58,0,0.82}
\definecolor{backcolour}{rgb}{0.95,0.95,0.92}

\lstdefinestyle{mystyle}{
    backgroundcolor=\color{backcolour},   
    commentstyle=\color{codegreen},
    keywordstyle=\color{magenta},
    numberstyle=\tiny\color{codegray},
    stringstyle=\color{codepurple},
    basicstyle=\ttfamily\footnotesize,
    breakatwhitespace=false,         
    breaklines=true,                 
    captionpos=b,                    
    keepspaces=true,                 
    numbers=left,                    
    numbersep=5pt,                  
    showspaces=false,                
    showstringspaces=false,
    showtabs=false,                  
    tabsize=2
}

\lstset{style=mystyle}

\usepackage{xspace}
\makeatletter
\DeclareRobustCommand\onedot{\futurelet\@let@token\@onedot}
\def\@onedot{\ifx\@let@token.\else.\null\fi\xspace}

\def\eg{{e.g}\onedot} 
\def\ie{{i.e}\onedot}

\makeatother

\newcommand{\Method}{Banded Inverse Square Root Factorization\xspace}
\newcommand{\method}{banded inverse square root factorization\xspace}

\newcommand{\acronym}{BISR\xspace}

\newcommand{\R}{\mathbb{R}}



\renewcommand{\paragraph}[1]{\noindent\textbf{#1}\quad}
\newcommand{\senskb}{\operatorname{sens}_{k, b}}

\usepackage{listings}

\iclrfinalcopy

\title{Back to Square Roots: An Optimal Bound on the Matrix Factorization Error for Multi-Epoch Differentially Private SGD}

\author{%
  Nikita P.  Kalinin  \\
  Institute of Science and Technology (ISTA)\\
  Klosterneuburg, Austria \\
  \texttt{nikita.kalinin@ist.ac.at} \\
  \And
   Ryan McKenna  \\
  Google Research\\
  \texttt{mckennar@google.com} \\
  \And
   Jalaj Upadhyay  \\
  Rutgers University\\
  New Jersey, USA \\
  \texttt{jalaj.upadhyay@rutgers.edu} \\
  \And
   Christoph H. Lampert \\
   Institute of Science and Technology (ISTA) \\
   Klosterneuburg, Austria \\
   \texttt{chl@ist.ac.at} \\
}

\iclrfinalcopy 
\begin{document}

\maketitle

\begin{abstract}
 Matrix factorization mechanisms for differentially private training have emerged as a promising approach to improve model utility under privacy constraints. In practical settings, models are typically trained over multiple epochs, requiring matrix factorizations that account for repeated participation. Existing  theoretical upper and lower bounds on multi-epoch factorization error leave a significant gap. In this work, we introduce a new explicit factorization method, Banded Inverse Square Root (BISR), which imposes a banded structure on the inverse correlation matrix. This factorization enables us to derive an explicit and tight characterization of the multi-epoch error. We further prove that BISR achieves asymptotically optimal error by matching the upper and lower bounds. Empirically, BISR performs on par with state-of-the-art factorization methods, while being simpler to implement, computationally efficient, and easier to analyze. 

\end{abstract}

\section{Introduction}
Private machine learning has become increasingly important as the use of sensitive data in model training continues to grow. Ensuring privacy while maintaining model accuracy presents a critical challenge, particularly in fields like healthcare, finance, and personal data analysis. \textit{Differential Privacy} (DP) has emerged as a fundamental framework for formalizing privacy guarantees in machine learning. It provides a mathematically rigorous way to limit the influence of any individual data point on the model's output, thereby preserving privacy. One effective approach to achieving DP in iterative training is through the use of structured noise mechanisms that balance privacy guarantees with model utility.

In this work, we focus on the \textit{Matrix Factorization Mechanism} (see Section~\ref{sec:matrix_factorization} for a formal description) for ensuring DP. Matrix factorization has been extensively studied in recent years in the broader machine learning application \citep{Kairouz, Denisov, ConstantMatters, henzinger2023almost, choquette2022multi, andersson2023smooth, Henzinger, kalinin2024, henzinger2025improved, pillutla2025correlated, henzinger2025normalized}. It has also been investigated in private optimization \citep{koloskova2023gradient, choquette2023correlated}, federated learning \citep{zhang2025locally, bienstock2025dmm}, and in the context of adaptive optimizers \citep{kalinin2025continual, ganesh2025design}. In particular, memory-efficient matrix factorization has attracted significant attention in recent years \citep{hasslefree2024, dvijotham2024efficient, andersson2024streaming, henzinger2025binned}.

The approach has also been adopted in practice; for example, Google has reported its use for training production on-device language models in their 2024 blog post ''Advances in private training for production on-device language models" \citep{xu2024advances}. The core idea of the MF mechanism is to inject correlated noise into the gradients during training. The correlations are determined by the inverse of a matrix $C \in \mathbb{R}^{n \times n}$, referred to as the \emph{strategy matrix}. While $C$ serves as a factor in the matrix factorization that defines the mechanism and used in calculation of the privacy level, its inverse $C^{-1}$ functions as the \emph{correlation matrix}, specifying how the injected noise is correlated across training steps. This structure allows the mechanism to preserve model accuracy while still guaranteeing privacy.

{Intuitively, the mechanism can be seen as follows: at each training step, fresh noise is generated and added, but part of the previous noise is stored in a buffer. In subsequent steps, portions of the stored noise are subtracted in a controlled manner. This cancellation effect reduces the total amount of noise that accumulates in the model, thereby improving utility without weakening privacy guarantees.}

{When computing privacy, we must account for \emph{multi-epoch participation}, since in multi-epoch training the same datapoints are used multiple times. The notion of multi-epoch participation in the context of matrix factorization was first introduced by \citet{choquette2023amplified}, where it was formulated as an optimization problem over banded matrices.} 
However, a key limitation of existing methods is the lack of precise theoretical guarantees on the \textit{factorization error} in multi-epoch participation. While \citet{kalinin2024} established a general lower bound and provided an upper bound for \textit{Square Root Factorization}, the error bounds for \textit{Banded Square Root Factorization}, where the correlation matrix $C$ is $p$-banded, remained imprecise with respect to the bandwidth $p$. 

In this work, we propose a novel approach to matrix factorization: rather than imposing a banded structure on the correlation matrix $C$, we introduce a \textit{banded inverse square root}, enforcing the banded structure on $C^{-1}$. This shift\footnote{The inverse correlation matrix has been receiving more attention recently. In the concurrent work \citet{mcmahan2025inversion}, the authors consider the inverse correlation matrix of BLT.} offers several key advantages. First, it allows for precise control over the resulting factorized matrices, enabling us to derive \textbf{explicit upper bounds} on the factorization error with clear dependence on the bandwidth. Second, the method is \textbf{computationally efficient}, as it requires one just to convolve the previous noise with a quickly computable fixed sequence of coefficients, which can be done for instance via Fast Fourier Transform (FFT), making it suitable for large-scale machine learning tasks. Most importantly, we prove that our method achieves \textbf{asymptotically optimal factorization error}: we establish a \textbf{new lower bound} that matches our upper bound, closing a significant theoretical gap in the literature.

By refining the theoretical understanding of banded factorization in multi-epoch settings, our work provides both theoretical insights and practical benefits for privacy-preserving ML training.
Our main contributions are: \begin{enumerate}[leftmargin=5mm,parsep=0pt]
\item We introduce a new factorization method, the \textbf{\emph{Banded Inverse Square Root (\acronym)}}, which is scalable, efficient, and agnostic to the underlying training objective. 
\item We prove that \acronym is \textbf{asymptotically optimal}, by deriving tight upper and lower bounds on the multi-epoch factorization error, with explicit dependence on bandwidth and workload properties. \item We conduct a thorough empirical evaluation, comparing \acronym to existing techniques in multi participation training—including Banded Square Root (BSR) \citep{kalinin2024}, Buffered Linear Toeplitz (BLT) \citep{dvijotham2024efficient}, and Banded Matrix Factorization (Band-MF) \citep{scalingmckenna2024}, showing that \acronym achieves a higher or comparable accuracy for the large matrix sizes.
\item In the low-memory regime, we propose an optimization method, \textbf{BandInvMF}, which directly optimizes the coefficients of the matrix $C^{-1}$. This approach achieves error rates comparable to state-of-the-art factorization methods, while being easy and efficient to implement. 
\end{enumerate}

\section{Background}
\label{sec:background}

\subsection{Matrix Factorization (MF)}
\label{sec:matrix_factorization}
MF mechanisms provide a promising approach to the private matrix multiplication problem, which has applications in continual counting and Stochastic Gradient Descent for machine learning. Specifically, we aim to estimate the product of a public matrix of coefficients $A \in \mathbb{R}^{n\times n}$ and a private matrix $X \in \mathbb{R}^{n \times d}$. Instead of doing so directly, we adopt a factorization $A = BC$, allowing us to estimate $AX$ privately as $\widehat{AX} = B(CX + Z) = A(X + C^{-1}Z)$. Here, $Z \sim \mathcal{N}(0, s^2)^{n \times d}$ is appropriately scaled Gaussian noise, which ensures that $CX + Z \in \mathbb R^{n \times d}$ is private; the multiplication by $B$ preserves the privacy guarantees due to DP's post-processing property. 

The choice of factorization $A = BC$ can significantly impact the quality of the private estimation.  
We quantify the \emph{approximation quality} by the expected Frobenius error of the estimated product, 
\begin{equation}
\mathcal{E}(B,C)^2 = \frac{1}{n}\mathbb{E}_Z \|AX - \widehat{AX}\|_\mathrm{F}^2,\label{eq:approximation_error}
\end{equation}
where $\|\cdot\|_\mathrm{F}$ is the Frobenius norm. An elementary analysis~\citep{Li_MF} shows that 
\begin{equation}
    \mathcal{E}(B,C)^2 = \frac{s^2}{n} \|B\|^2_\mathrm{F},
\end{equation}
and that the required noise strength, $s$, scales proportionally to the \emph{sensitivity} of the matrix $C$. Let $X \sim X'$ indicate that the update vector sequences differ only in the entries corresponding to a single data item. Then the sensitivity of the matrix $C$ is defined as  
\begin{equation}
    \text{sens}(C) := \sup_{X \sim X'} \|CX - CX'\|_\mathrm{F}
\end{equation}

\textbf{Private SGD.} In this work, we consider the task of model training with SGD with (optional) weight decay and momentum.
The corresponding update equations are $\theta_{i+1} = \alpha\theta_{i} - m_{i+1}$ and $m_{i+1}=\beta m_i + x_i$, 
where $\theta_1,\dots,\theta_n\in\R^{D}$ are the model parameters after each update step, 
$x_1,\dots,x_n$ are the gradient vectors computed in each update step, $0< \alpha \leq 1$ is the weight decay factor, 
and $0\leq \beta<1$ is the momentum strength
\footnote{For simplicity of exposition, we use an implicit learning rate of $1$. Because of the linearity of the operations, the 
general case can be recovered by pre-scaling $x_1,\dots,x_n$ accordingly.}.

Following~\citet{kalinin2024}, we rewrite the dynamics in the matrix form as $\Theta=A_{\alpha, \beta}X$, 
with $\Theta=(\theta_1,\dots,\theta_n)^\top\in\R^{n\times D}$, $X=(x_1,\dots,x_n)^\top\in\R^{n\times D}$,
and $A_{\alpha,\beta}$ is the \emph{SGD workload matrix} defined as follows:
\begin{equation}
    A_{\alpha,\beta} = \begin{pmatrix}
        1 & 0 &  \cdots & 0 \\
        \alpha + \beta & 1 &  \cdots & 0 \\
        \vdots & \vdots  & \ddots & \vdots \\
        \sum\limits_{j = 0}^{n - 1}\alpha^{j}\beta^{n - 1 - j} & \sum\limits_{j = 0}^{n - 2}\alpha^{j}\beta^{n - 2 - j} & \cdots & 1
    \end{pmatrix}\in\R^{n\times n}.\label{eq:A_alpha_beta}
\end{equation}

Note that, in contrast to the naive MF setting, in the SGD case any input data (gradient) $x_i$ depends 
on the previously computed model parameters, $\theta_{i-1}$, that is, we aim for \emph{adaptive privacy}.  
However, \citet{Denisov} shows that for Gaussian noise, adaptive privacy follows from the non-adaptive one,
\ie, it suffices for us to solve the case in which the $X$ matrix is an arbitrary fixed data matrix. 
Consequently, we estimate $A_{\alpha, \beta}X$ privately using the form $\widehat{A_{\alpha, \beta}X} = A_{\alpha, \beta}(X + C^{-1}Z)$.
This form corresponds to running SGD, but each individual gradient update is perturbed by a correlated noise vector. 
That has the advantage that we do not need to store any previous gradients, and we can rely on any existing implementation of the SGD procedure.

In multi-epoch SGD, each data sample might contribute to more than one gradient update vector. 
As a suitable notion of sensitivity, we adopt the setting of $b$-min-separated repeated participation~(two participations of 
any data point occur at least $b$ update steps apart). The resulting sensitivity can be bounded as \citet{choquette2023amplified}:
\begin{equation}
\label{eq:sens_k_b}
    \text{sens}_{k,b}(C) \leq \max_{\pi \in \Pi_{k,b}} \sqrt{\sum_{i,j \in \pi} |(C^\top C)_{[i,j]}|}
\end{equation}
where $\Pi_{k,b}$ denotes the collection of index sets drawn from $\{1,\ldots,n\}$ that contains at most $k$ elements and satisfy the condition that any two distinct indices are separated by at least $b$ positions, so that no two indices in the set lie closer than $b$ apart.

This bound becomes an equality in the case where all entries of $C^\top C$ are non-negative.

\begin{figure}[t!]
    \centering
    \includegraphics[width=\linewidth]{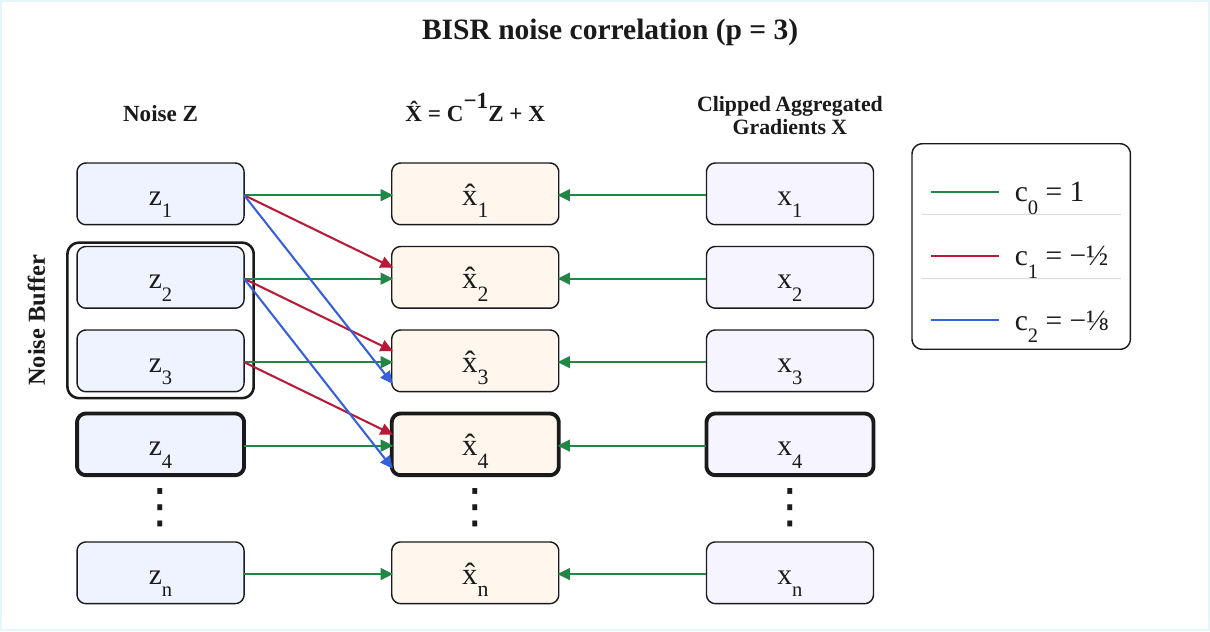}
    \caption{
    Illustration of noise correlation using the Banded Inverse Square Root (BISR) method with bandwidth $p=3$. For example, at step $4$, we form a new aggregate from a sampled batch of clipped gradients $x_4$. We take the noise vectors $z_2$ and $z_3$ from the Noise Buffer and add them to the gradient with coefficients $c_1=-\tfrac{1}{2}$ and $c_2=-\tfrac{1}{8}$, respectively. We then generate a new noise vector $z_4$ and add it directly to obtain the private estimate $\hat{x}_4$. Finally, we remove $z_2$ from the buffer and insert $z_4$ in its place.
    }
    \label{fig:illustration}
\end{figure}

\subsection{{Privacy Guarantees}}

The privacy analysis of our mechanism relies on the well-established Gaussian mechanism, 
which allows achieving arbitrary $(\varepsilon,\delta)$-differential privacy levels by 
calibrating the noise variance to the sensitivity of the underlying query. 
In particular, Theorem~2.1 of \citet{Denisov} states that the guarantees proved 
for the nonadaptive continual release model extend to the fully adaptive setting 
under a suitable factorization of the query matrix. 
This ensures that the same differential privacy guarantees hold even when 
the input stream is chosen adaptively. 
A related multi-epoch variant was studied in \citet{choquette2023amplified}, which corresponds 
to a specific choice of the notion of neighboring streams.  

\begin{theorem}{\em \cite[Theorem~2.1]{Denisov}}
Let $A \in \mathbb{R}^{n \times n}$ be a lower-triangular full-rank query matrix, 
and let $A = BC$ be any factorization with the following property: 
for any two neighboring streams of vectors $G, H \in \mathbb{R}^{n \times d}$, differing only in positions of participation of a single user, 
we have $\lVert C(G - H) \rVert_F \;\leq\; \kappa$. Let $Z \sim \mathcal{N}(0, \kappa^2\sigma^2)^{n \times d}$ with $\sigma$ large enough so that 
\[
\mathcal{M}(G) \;=\; AG + BZ \;=\; B(CG + Z)
\]
satisfies $(\varepsilon, \delta)$-DP  in the 
nonadaptive continual release model. 
Then $\mathcal{M}$ satisfies the same DP guarantee (with the same parameters) 
even when the rows of the input are chosen adaptively.
\end{theorem}

Formally, the privacy guarantee of the Gaussian mechanism itself is characterized 
by the following analytic condition due to \citet{balle2018improving}, which provides the exact 
relationship between the sensitivity, the privacy parameters, and the variance 
of the added Gaussian noise:

\begin{theorem}[Analytic Gaussian Mechanism {\citep{balle2018improving}}]
Let $f : \mathcal{X} \to \mathbb{R}^d$ be a function with global $\ell_2$ sensitivity $\Delta$. 
For any $\varepsilon \geq 0$ and $\delta \in [0,1]$, the Gaussian output perturbation mechanism
$M(x) = f(x) + Z$ {with} $Z \sim \mathcal{N}(0, \sigma^2 \mathbb I_d)$ 
is $(\varepsilon, \delta)$-DP if and only if
\begin{equation*}
\Phi\!\left(\tfrac{\Delta}{2\sigma} - \tfrac{\varepsilon\sigma}{\Delta}\right) 
- e^{\varepsilon} \, \Phi\!\left(-\tfrac{\Delta}{2\sigma} - \tfrac{\varepsilon\sigma}{\Delta}\right) 
\;\leq\; \delta .
\end{equation*}
\end{theorem}

This condition enables tight calibration of the noise level $\sigma$ for any target 
privacy parameters $(\varepsilon,\delta)$. 

\textbf{Optimal factorization.} 
Better choices of factorization matrices can achieve the same privacy levels with less added noise, 
potentially leading to higher utility. 
Therefore, various factorizations have been proposed and studied theoretically as well as empirically. 

\citet{choquette2023amplified} defines the \emph{optimal factorization} as the one that 
minimizes the expected approximation error~\eqref{eq:approximation_error},
and proposed an optimization problem to (approximately) compute this factorization. 
A downside of this approach is that the optimization problem is computationally expensive 
and the numeric solution does not provide theoretical insights, such as the optimal 
(\ie lowest) rate of growth of the approximation error. 

On the other hand, a square root factorization introduced by \citet{Henzinger}, is an explicit factorization, defined by $A_{\alpha, \beta} = C_{\alpha, \beta}^2$, with positive main diagonal.
\citet{kalinin2024} showed that the factorization error of the square root factorization under 
multi-epoch participation is worse than that of the optimal factorization and they introduced 
{\em banded square root} (BSR) factorization, which is defined by making the matrix $C_{\alpha, \beta}$ banded. 
A limitation of BSR is that its guarantees are implicit in terms of the used bandwidth, which does not allow concluding how they relate to the optimal multi-epoch factorization at a theoretical level.

\section{\Method}

\begin{algorithm}[t]
\caption{Differentially Private SGD with Banded Inverse Matrix Factorization}
\label{alg:dp-sgd-bisr}
\begin{algorithmic}[1]
\Require Initialization $\theta_0 \in \mathbb{R}^d$, dataset $\mathcal{D}$, batch size $B$, clip norm $\zeta$, learning rate $\eta>0$, weight decay $0<\alpha\le 1$, momentum $0\le \beta<1$, loss $\ell(\theta,d)$, noise multiplier $\sigma_{\epsilon, \delta}>0$, coefficients of the banded inverse Toeplitz correlation matrix $C^{-1}$: $(c_0,\dots,c_{p-1})$
\State Initialize $m_0 \gets \mathbf{0} \in \mathbb{R}^d$
\For{$i=1$ to $n$}
    \State Sample a minibatch $S_i = \{d_1,\dots,d_B\} \subseteq \mathcal{D}$
    \For{$j=1$ to $B$}
        \State $g_j \gets \nabla_\theta \,\ell(\theta_{i-1}, d_j)$
        \State $\tilde g_j \gets \min\!\big(1, \frac{\zeta}{\|g_j\|}\big)\, g_j$ \Comment{per-example clipping}
    \EndFor
    \State $x_i \gets \sum_{j=1}^{B} \tilde g_j$ \Comment{aggregate clipped gradients}
    \State Draw $Z_i \sim \mathcal{N}(0,\sigma_{\epsilon, \delta}^2 \mathbb I_d)$
    \State $\hat x_i \gets x_i + \zeta \sum_{t=0}^{\min(p, i) - 1} c_t \, Z_{i-t}$ \Comment{BISR noise injection}
    \State $m_i \gets \beta\, m_{i-1} + \hat x_i$ \Comment{momentum update}
    \State $\theta_i \gets \alpha\, \theta_{i-1} - \eta\, m_i$ \Comment{weight decay + step}
\EndFor
\State \textbf{return} $\theta_n$
\end{algorithmic}
\end{algorithm}

In this section, we present our main theoretical results: we prove a new lower bound on the achievable approximation error
(Theorem~\ref{thm:lowerbound}), we introduce the BISR factorization (Definition~\ref{def:bisr}), and we prove that 
BISR achieves this (therefore optimal) rate (Theorem~\ref{thm:multi_epoch_error_upper_bound}).

We first show an improved version of the lower bounds of the approximation error for general factorizations from~\citet{kalinin2024}.

\begin{restatable}[General Multi-Participation Lower Bound]{theorem}{LowerBound}
\label{thm:lowerbound}
Let $A_{\alpha, \beta}\in\R^{n\times n}$ be the SGD workload matrix~\eqref{eq:A_alpha_beta}, with momentum $\beta >0$ and weight decay $\alpha > 0$.
In the multi-participation setting with separation $1\le b \leq n$ and $k = \lceil\frac{n}{b}\rceil$,
for any factorization $A_{\alpha, \beta} = BC$, it holds 
\begin{equation}
\mathcal{E}(B,C) = \begin{cases}
\Omega( \sqrt{k}\log n + k) &\text{if} \quad \alpha = 1,\\
\Omega_{\alpha}( \sqrt{k}) &\text{if} \quad \alpha < 1.
\end{cases}\label{eq:lowerbound}
\end{equation}
\end{restatable}
\begin{proof}
    [Proof Sketch]
We use the probabilistic method in Lemma~\ref{lem:k_b_lower_bound_new} to obtain the bounds $\Omega_{\alpha}(\sqrt{k})$ for $\alpha < 1$ and $\mathcal{E}(B,C) = \Omega(\sqrt{k} \log n)$ for $\alpha = 1$. It remains to prove that for $\alpha = 1$, we also have the lower bound $\mathcal{E}(B,C) = \Omega(k)$. For that, we prove a general inequality: given a valid participation vector $\boldsymbol{\pi}$, we have $\mathcal{E}(B, C) \ge \tfrac{1}{n} \|A_{1, \beta} \boldsymbol{\pi}\|_2$.
\end{proof}

As our second main contribution, we now introduce the BISR factorization for multi-epoch SGD. 
\begin{definition}[Banded Inverse Square Root (BISR)]
\label{def:bisr}
For a given workload matrix $A$, let $C$ be the matrix square root (\ie $C^2 = A$) with positive values on the diagonal. 
Let $C^{p}$ be the matrix obtained by: i) computing the inverse matrix $C^{-1}$, ii) imposing a banded structure with $p$ bands by setting all elements below the $p$-th diagonal to zero, iii) inverting the resulting banded matrix back. Then, we denote by BISR the matrix factorization $A = B^p C^p$, with $B^p = A (C^p)^{-1}$. 
\end{definition}

BISR can be seen as an alternative realization of the insights behind the BSR (Banded Square Root) factorization from~\citet{kalinin2024}. 
There, the intuition was that making the matrix $C$ $p$-banded reduces its sensitivity without 
increasing the Frobenius norm of the subsequent postprocessing matrix too much, thereby resulting 
in an overall reduction of the approximation error. 
The authors did not derive exact rates, though, because the dependence on $p$ is not explicit.

For BISR, we instead make the matrix $C^{-1}$ $p$-banded. This also leads to a reduction of the 
approximation error compared to the non-banded case, but with two additional advantages. 
First, the resulting algorithm (see Algorithm~\ref{alg:dp-sgd-bisr} and Figure~\ref{fig:illustration} for the illustration) is time- and memory-efficient because the product of $(C^p)^{-1}Z$ can be represented as a convolution with $p$ elements. Therefore, the computation can be performed efficiently: in a streaming setting, only $p$ rows of the matrix $Z$ need to be stored at any time, while in an offline setting (which requires more storage), it can be accelerated further using the Fast Fourier Transform.
Second, and mainly, it allows us to derive more explicit expressions of the approximation error with
respect to the bandwidth $p$. In particular, we show the following.

\begin{restatable}[\acronym Approximation Error]{theorem}{maintheorem}
\label{thm:multi_epoch_error_upper_bound}
For $1 \le p \le n$ and $1 \le k \le \frac{n}{b}$ the following upper bound holds for the matrix factorization error 
of the BISR $A_{\alpha, \beta} = B_{\alpha, \beta}^pC_{\alpha, \beta}^p$ (as in Definition \ref{def:bisr}):
\begin{equation}
\mathcal{E}( B_{\alpha, \beta}^p, C_{\alpha, \beta}^{p}) = 
\begin{cases}  O_{\beta}\left(\sqrt{k}\log p + \sqrt{\frac{nk}{b}} + \sqrt{\frac{nk\log p}{p}} + \sqrt{\frac{kp\log p}{b}}\right) \quad &\text{for}  \quad \alpha = 1,\\
O_{\alpha, \beta}(\sqrt{k}) \quad &\text{for}  \quad \alpha < 1,\\
\end{cases}\label{eq:multi_epoch_error_upper_bound}
\end{equation} 
where $O_{\alpha,\beta}$ and $O_{\beta}$ indicate that the hidden constant may depend on $\alpha$ and $\beta$, respectively.
\end{restatable}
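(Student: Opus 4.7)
The plan is to bound $\mathcal{E}(B^p, C^p) = \operatorname{sens}_{k,b}(C^p) \|B^p\|_\mathrm{F}/\sqrt n$ by estimating the two factors separately, exploiting the lower-triangular Toeplitz structure that every matrix inherits from $A_{\alpha,\beta}$.

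\textbf{Setup.} Let $f(x) = [(1-\alpha x)(1-\beta x)]^{-1}$, $g(x) = \sqrt{f(x)}$, and $h(x) = 1/g(x) = \sqrt{(1-\alpha x)(1-\beta x)}$ denote the symbols of $A_{\alpha,\beta}$, $C$, and $C^{-1}$, respectively. By Definition~\ref{def:bisr}, $(C^p)^{-1}$ is Toeplitz with symbol equal to the degree-$(p-1)$ Taylor polynomial $h_p$ of $h$. Hence $C^p$ has symbol $1/h_p$ and $B^p = A_{\alpha,\beta}(C^p)^{-1}$ has symbol $f\,h_p$. Setting $\tilde h := h - h_p$ (supported on degrees $\ge p$), the identity $fh = g$ yields $f h_p = g - f\tilde h$.

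\textbf{Bounding $\|B^p\|_\mathrm{F}$.} Since $B^p$ is lower-triangular Toeplitz with coefficients $(b^p_i)$, we have $\|B^p\|_\mathrm{F}^2 = \sum_{i=0}^{n-1}(n-i)(b^p_i)^2$, with $b^p_i = g_i$ for $i<p$ and $b^p_i = g_i - (f\tilde h)_i$ for $i \ge p$. For $\alpha = 1$, classical binomial asymptotics give $g_i = \Theta(i^{-1/2})$ and $|h_j| = \Theta(j^{-3/2})$, while $|f_i| = O_\beta(1)$. A split-sum convolution estimate then yields $|(f\tilde h)_i| = O_\beta(p^{-1/2})$ uniformly for $i \ge p$, producing $\|B^p\|_\mathrm{F}^2 = O_\beta(n\log p + n^2/p)$. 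For $\alpha < 1$ the geometric decay of all three sequences gives $\|B^p\|_\mathrm{F} = O_{\alpha,\beta}(\sqrt n)$.

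\textbf{Bounding $\operatorname{sens}_{k,b}(C^p)$.} Writing $h_p = 1 - Q$ with $Q$ having nonnegative coefficients (for $\alpha,\beta \in (0,1]$), the Neumann expansion $1/h_p = \sum_{j\ge 0} Q^j$ shows $C^p$ has nonnegative coefficients $(c^p_l)$, so $C^{p\top}C^p$ is entrywise nonnegative and \eqref{eq:sens_k_b} is attained on the worst-case arithmetic progression $\pi = \{0, b, \ldots, (k-1)b\}$. Toeplitz translation-invariance then gives
\begin{equation*}
\operatorname{sens}_{k,b}(C^p)^2 \le k\sum_l (c^p_l)^2 + 2\sum_{m=1}^{k-1}(k-m)\sum_l c^p_l\, c^p_{l+mb}.
\end{equation*}
Using $c^p_l = g_l$ for $l < p$ and a quantitative tail estimate for $l \ge p$ --- most cleanly obtained via the identity $1/h_p = g/(1 - g\tilde h) = g + g(g\tilde h)(1/h_p)$ combined with the contraction $\|g\tilde h\|_\infty = O(p^{-1/2})$ --- I would establish $\operatorname{sens}_{k,b}(C^p)^2 = O_\beta(k\log p + kp/b)$ for $\alpha = 1$, and $O_{\alpha,\beta}(k)$ for $\alpha < 1$ thanks to geometric decay.

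\textbf{Combining.} Multiplying the two factor bounds gives $(\sqrt{k\log p} + \sqrt{kp/b})(\sqrt{\log p} + \sqrt{n/p})$, which upon expansion produces exactly the four terms $\sqrt k\log p$, $\sqrt{nk/b}$, $\sqrt{nk\log p/p}$, and $\sqrt{kp\log p/b}$ stated in the theorem; the $\alpha<1$ case then follows immediately. The main obstacle will be the quantitative control of $c^p_l$ for $l \ge p$, because $1/h_p$ is not a polynomial and its Taylor coefficients depend on how the zeros of $h_p$ perturb the zeros of $h$ at $x = 1/\alpha$ and $x = 1/\beta$. The Neumann-series route succeeds because $\|g\tilde h\|_\infty = O(p^{-1/2}) \to 0$ as $p$ grows, so the series converges coefficient-wise, and iterating the contraction delivers exactly the rate needed to match the lower bound of Theorem~\ref{thm:lowerbound}.
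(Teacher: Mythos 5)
Your overall architecture is the same as the paper's: bound $\|B^p\|_\mathrm{F}$ and $\operatorname{sens}_{k,b}(C^p)$ separately and multiply, and your Frobenius estimate $\|B^p\|_\mathrm{F}^2=O_\beta(n\log p + n^2/p)$ and the $\alpha<1$ case line up with the paper's Lemmas~\ref{lem:b_beta_bounds} and~\ref{lem:sum_c_alpha_beta_squared_bounds}. The genuine gap sits exactly where you flag "the main obstacle": quantitative control of $c^p_l$ for $l\ge p$. A uniform bound $c^p_l=O(p^{-1/2})$ — which is all that one pass of $1/h_p = g + g(g\tilde h)(1/h_p)$ with $\|g\tilde h\|_\infty = O(p^{-1/2})$ can give — is not enough: it yields $\sum_{l\ge p}(c^p_l)^2 = O(n/p)$ and lag-$mb$ autocorrelations of order $n/p$, hence $\operatorname{sens}_{k,b}(C^p)^2 = O(k\log p + k^2 n/p)$ rather than $O_\beta(k\log p + kp/b)$; multiplied with your Frobenius bound this does not reproduce the four stated terms (and at $p\sim b\log b$ it is strictly worse than the optimal $\sqrt{k}\log n + k$). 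What is actually needed is \emph{geometric} decay of the tail at scale $\Theta(p)$, i.e.\ $c^p_i \le \alpha^i c_p^{1,\beta/\alpha}\gamma_{\beta/\alpha}^{\,i-p}$ with $\gamma_{\beta/\alpha} = \bigl(1+\tfrac{(1-\beta/\alpha)^2}{4p(1+\beta/\alpha)}\bigr)^{-1}$ — equivalently a $1+\Theta(1/p)$ lower bound on the moduli of the zeros of $h_p$. The paper proves this as its central technical step (Lemma~\ref{lem:c_beta_p_bounds}) by induction on the recurrence $g_i=\sum_{j=1}^{p-1}(-\tilde c_j)\,g_{i-j}$, which itself leans on the nonpositivity of $\tilde c^{1,\beta}_k$ (Lemma~\ref{lem:c_beta_bounds}; not obvious for $\beta>0$ since the convolution $\sum_j\tilde r_j\tilde r_{k-j}\beta^j$ has positive cross terms, yet you assert it parenthetically) and on the sum bound $\sum_{j\le k}\tilde c^{1,\beta}_j \ge r_k(1-\beta)$ (Lemma~\ref{lem:c_beta_sum_bounds}). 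Your Neumann/contraction route cannot simply be "iterated": the coefficient $\ell_1$ norm of $g\tilde h$ up to degree $n$ is $\Theta(\sqrt{n/p})$, so the series $\sum_m (g\tilde h)^m$ is not controlled term-by-term once $n\gg p$, and an $\ell_\infty$ bound on coefficients does not compose under convolution.

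A secondary unproven step is the reduction of the sensitivity to the arithmetic progression $\{1,1+b,\dots\}$ with lags exactly $mb$: entrywise nonnegativity of $C^{p\top}C^p$ alone neither places the maximizer there nor lets you replace an $m$-th order gap ($\ge mb$, possibly larger) by $mb$, since autocorrelations need not be monotone in the lag for merely nonnegative coefficients. The paper handles this by proving the diagonal values of $C^p$ are nonincreasing (Lemma~\ref{lem:monotonically_decreasing_values}) and invoking Theorem~\ref{thm:b-sensitivity}; alternatively, pointwise decay bounds on $c^p_l$ would suffice — but either way this is exactly the machinery your sketch postpones, so as written the proposal does not yet constitute a proof.
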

\begin{proof}
    [Proof sketch] 
    To prove Theorem~\ref{thm:multi_epoch_error_upper_bound}, we use Lemma~\ref{lem:b_beta_bounds} to bound the Frobenius norm $\|B^{p}_{\alpha, \beta}\|_F$. Then, Theorem 2 from \citet{kalinin2024} (see Theorem~\ref{thm:b-sensitivity}), together with monotonicity of values $C_{\alpha, \beta}^p$ from Lemma~\ref{lem:monotonically_decreasing_values}, provides an explicit way to express the sensitivity $\senskb(C_{\alpha, \beta})$. To bound the sensitivity, we apply Lemma~\ref{lem:c_beta_p_bounds} to bound individual values. The product of the bounds on $\|B^{p}_{\alpha, \beta}\|_F$ and $\senskb(C_{\alpha, \beta})$ yields the result. The full proof of Theorem~\ref{thm:multi_epoch_error_upper_bound} can be found in the appendix.
\end{proof}

For comparison, \citet{kalinin2024} proved a bound 
$O\left(\sqrt{\frac{nk \log p}{p}}\right) + O_{p}(\sqrt{k})$ on the 
approximation error of the BSR in the case $\alpha=1$, $\beta=0$.
While the first term also appears in \eqref{eq:multi_epoch_error_upper_bound}, the second is non-informative about the effect of the bandwidth, $p$.  Therefore, it does not allow making a statement about the optimality of the BSR.

In contrast, Theorem~\ref{thm:multi_epoch_error_upper_bound} is explicit about
the role of $p$. Choosing its value to be $O(b\log b)$, such that the occurring terms in~\eqref{eq:multi_epoch_error_upper_bound}
are  minimized, we obtain the following corollary.

\begin{corollary}[Optimized \acronym Approximation Error]
\label{cor:optimized_error_upper_bound}
Let $A_{\alpha, \beta} = B_{\alpha, \beta}^p C_{\alpha, \beta}^p$ be the BISR factorization defined of Definition~\ref{def:bisr}.
For $1 \le b \le n$ let $k = \lceil\frac{n}{b}\rceil$. Then, for $p^* =O( b \log b)$, the matrix factorization error 
admits the following optimized upper bound: 
\begin{equation}
\mathcal{E}(B_{\alpha, \beta}^{p^*}, C_{\alpha, \beta}^{p^*}) = 
\begin{cases}
 O_{\beta}\left(\sqrt{k} \log n + k\right), & \text{for } \alpha = 1,\\
O_{\alpha, \beta}(\sqrt{k}), & \text{for } \alpha < 1.
\end{cases}\label{eq:optimized_error_upper_bound}
\end{equation}
\end{corollary}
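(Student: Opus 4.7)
The plan is to substitute the choice $p = p^* \sim b\log b$ directly into the bound of Theorem~\ref{thm:multi_epoch_error_upper_bound} and verify term-by-term that everything collapses into $O(\sqrt{k}\log n + k)$. The case $\alpha < 1$ requires no work at all: Theorem~\ref{thm:multi_epoch_error_upper_bound} already gives $O_{\alpha,\beta}(\sqrt{k})$ uniformly in $p$, so the corollary is immediate.

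For the main case $\alpha = 1$, I would first record two elementary facts that follow from $k = \lceil n/b\rceil$ (hence $n = \Theta(kb)$) and the choice of $p^*$: namely $\log p^* = \log(b\log b) = \Theta(\log b) = O(\log n)$, and the ratio $\log p^*/p^* = \Theta(1/b)$. With these in hand, the four terms in~\eqref{eq:multi_epoch_error_upper_bound} can be bounded one by one. The first term $\sqrt{k}\log p^*$ is $O(\sqrt{k}\log n)$ by the $\log p^*$ estimate. The second term $\sqrt{nk/b}$ equals $\Theta(k)$ using $n = \Theta(kb)$. The third term $\sqrt{nk\log p^*/p^*}$ reduces to $\Theta(\sqrt{nk/b}) = \Theta(k)$ by the ratio observation. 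The fourth term $\sqrt{kp^*\log p^*/b}$ becomes $\sqrt{k\log b \cdot \log(b\log b)} = O(\sqrt{k}\log n)$ since $p^*\log p^*/b = \Theta((\log b)^2)$. Summing these four contributions yields exactly $O(\sqrt{k}\log n + k)$, as claimed.

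What is slightly less obvious is why $p^* \sim b\log b$ is chosen rather than the naive $p \sim b$. The $p$-decreasing term $\sqrt{nk\log p/p}$ is the one that pushes $p^*$ upward: to keep it at or below the unavoidable $k$ coming from the second term one needs $p/\log p \gtrsim b$, whose inversion produces precisely the logarithmic correction in $p^*$. Conversely, the $p$-increasing term $\sqrt{kp\log p/b}$ forbids $p$ from being much larger, since that would inflate the total beyond $\sqrt{k}\log n$. So the value $p^* \sim b\log b$ is essentially the unique (up to logarithmic factors) choice that simultaneously drives term three down to $O(k)$ and keeps term four at $O(\sqrt{k}\log n)$.

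There is no real obstacle here — this is a routine optimization over four explicit terms once Theorem~\ref{thm:multi_epoch_error_upper_bound} is assumed — but care is needed with the logarithmic factor in $p^*$ so that $\log p^*/p^*$ is genuinely $\Theta(1/b)$, which is what triggers the collapse of the third term into the $k$ regime rather than leaving an extra $\log$ factor behind.
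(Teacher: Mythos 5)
Your proposal is correct and follows exactly the route the paper intends: the paper gives no separate proof of the corollary beyond the remark that one chooses $p$ to minimize the terms of Theorem~\ref{thm:multi_epoch_error_upper_bound}, and your term-by-term substitution of $p^*\sim b\log b$ (using $n=\Theta(kb)$, $\log p^*=\Theta(\log b)=O(\log n)$, and $\log p^*/p^*=\Theta(1/b)$) is precisely that computation, with the $\alpha<1$ case immediate. The only cosmetic gap, which the paper also glosses over, is handling edge cases such as $b\log b>n$ (where one caps $p^*$ at $n$ and notes $k=O(\log n)$ there) and very small $b$, neither of which affects the stated rates.
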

Comparing the upper bound in eq.~\eqref{eq:optimized_error_upper_bound} with the lower bound in eq.~\eqref{eq:lowerbound}, we obtain that BISR is an \textbf{asymptotically optimal} factorization in the multi-participation setting.

To use the BISR factorization, we apply the following lemma, which provides analytic expressions for the elements of the inverse matrix $C_{\alpha, \beta}^{-1}$.

\begin{restatable}[Inverse Square Root of the Matrix $A_{\alpha, \beta}$]{lemma}{inversesquareroot}
\label{lem:inv_sqrt_A_alpha_beta}
For $k\geq 0$, let $\tilde{r}_k = (-1)^k \binom{1/2}{k} = \frac{-r_k}{2k - 1} = \frac{-1}{2k - 1}\left|\binom{-1/2}{k}\right|.$ The inverse square root of the matrix $A_{\alpha, \beta}^{-1/2}$ defined in \eqref{eq:A_alpha_beta}, for $0 \le \beta < \alpha \le 1$, is
\begin{equation}
C_{\alpha, \beta}^{-1} =
\begin{pmatrix}
1 & 0 & \dots & 0 \\
\tilde{c}^{\alpha, \beta}_1 & 1 & \dots & 0 \\
\vdots & \vdots & \ddots & \vdots \\
\tilde{c}^{\alpha, \beta}_{n - 1} & \tilde{c}^{\alpha, \beta}_{n - 2} & \dots & 1 \\
\end{pmatrix}, \quad \text{where} \quad \tilde{c}^{\alpha, \beta}_{k} = \sum\limits_{j = 0}^{k} \tilde{r}_j \beta^j \tilde{r}_{k - j}\alpha^{k - j}.
\label{eq:C_inv_sqrt}
\end{equation}
\end{restatable}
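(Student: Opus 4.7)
The plan is to use the standard isomorphism between lower-triangular Toeplitz matrices with unit diagonal and formal power series truncated at degree $n-1$: to such a matrix with subdiagonal entries $c_0=1, c_1, c_2, \ldots, c_{n-1}$ I associate the series $\sum_{k=0}^{n-1} c_k x^k$. Under this correspondence, matrix multiplication coincides with the Cauchy product and matrix inversion coincides with inversion of the associated series modulo $x^n$, because these matrices form a commutative subring isomorphic to $\R[[x]]/(x^n)$. I will therefore work with formal power series and simply read off the coefficients $\tilde c^{\alpha,\beta}_k$ for $k\le n-1$.

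First, I identify the generating function of $A_{\alpha,\beta}$. Its $k$-th subdiagonal entry is $\sum_{j=0}^{k}\alpha^j\beta^{k-j}$, which is precisely the Cauchy product of the geometric series $\sum_{k\ge 0}\alpha^k x^k=(1-\alpha x)^{-1}$ and $\sum_{k\ge 0}\beta^k x^k=(1-\beta x)^{-1}$. Hence $A_{\alpha,\beta}$ corresponds to $f(x)=\bigl((1-\alpha x)(1-\beta x)\bigr)^{-1}$. By Lemma~\ref{def:C_sqrt}, $C_{\alpha,\beta}=A_{\alpha,\beta}^{1/2}$ is the (unique) lower-triangular Toeplitz matrix with positive unit diagonal whose square is $A_{\alpha,\beta}$, so it corresponds to $\sqrt{f(x)}=\bigl((1-\alpha x)(1-\beta x)\bigr)^{-1/2}$. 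Therefore $C_{\alpha,\beta}^{-1}$ corresponds to
\begin{equation}
g(x) \;=\; \sqrt{(1-\alpha x)(1-\beta x)} \;=\; \sqrt{1-\alpha x}\,\sqrt{1-\beta x}.
\end{equation}

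Next I expand each factor via the generalized binomial theorem. From the definition $\tilde r_k=(-1)^k\binom{1/2}{k}$,
\begin{equation}
\sqrt{1-\alpha x} \;=\; \sum_{k=0}^{\infty}\binom{1/2}{k}(-\alpha x)^k \;=\; \sum_{k=0}^{\infty} \tilde r_k\,\alpha^k x^k,
\end{equation}
and analogously for $\sqrt{1-\beta x}$. Taking the Cauchy product of these two series yields
\begin{equation}
g(x) \;=\; \sum_{k=0}^{\infty}\Bigl(\sum_{j=0}^{k}\tilde r_j\beta^j\,\tilde r_{k-j}\alpha^{k-j}\Bigr)x^k,
\end{equation}
whose $x^k$-coefficient is exactly the stated $\tilde c^{\alpha,\beta}_k$. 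In particular the constant term is $\tilde r_0^2=1$, matching the unit diagonal of $C_{\alpha,\beta}^{-1}$, and the lower-triangular Toeplitz shape is inherited automatically from the polynomial form.

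I do not anticipate any serious obstacle: once the Toeplitz/power-series dictionary is invoked, the argument reduces to a direct application of the binomial expansion and a single Cauchy product. The only point meriting a sentence of justification is the passage from formal power-series inversion to matrix inversion, which is legitimate by the subring isomorphism noted above, so truncating modulo $x^n$ has no effect on the entries $\tilde c^{\alpha,\beta}_k$ for $k<n$.
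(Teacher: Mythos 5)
Your proposal is correct and follows essentially the same route as the paper: both rest on the lower-triangular-Toeplitz/power-series correspondence and the generalized binomial expansions of $\sqrt{1-\alpha x}$ and $\sqrt{1-\beta x}$, with the coefficients $\tilde c^{\alpha,\beta}_k$ read off from the Cauchy product. The paper merely organizes it slightly differently (factoring $A_{\alpha,\beta}=A_{\alpha,0}A_{\beta,0}$, treating the one-parameter case by a generating-function identity, and invoking commutativity of lower-triangular Toeplitz matrices), which is an immaterial reshuffling of the same argument.
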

The values of the matrix $C_{\alpha, \beta}^{-1}$ can then be computed efficiently via the Fast Fourier Transform (FFT) as a convolution of the sequences $\tilde{r}_j \alpha^j$ and $\tilde{r}_j \beta^{j}$, where the sequence $\tilde{r}_j$ for $j = 1, \dots, n$ can be computed in linear time using the recursive expression $\tilde{r}_j = \tfrac{j - 3/2}{j} \, \tilde{r}_{j - 1}$.

Following the work of \citet{Andersson2025}, we show that the space complexity of the matrix $(C^p_{\alpha, \beta})^{-1}$ is equal to $p$, meaning that exact multiplication with a random real vector $z$ in a streaming setting, performing continuous operations, requires storing $p$ real values (not including the memory needed to store the matrix coefficients). This implies that, for memory-efficient computation, one must either use a small bandwidth $p$ or consider approximate multiplication. We formally state the result in the following lemma:

\begin{restatable}{lemma}{SpaceComplexity} 
\label{lem:space_complexity}
The space complexity--defined as the minimum buffer size required by a {streaming} algorithm to correctly process an input--for computing the product of the Toeplitz matrix $(C^{p}_{\alpha, \beta})^{-1}$ with an arbitrary vector $z \in \mathbb{R}^{n}$, for $n \ge 2p - 1$, is exactly $p$, and at least $\frac{n - 5}{2}$ for $C_{\alpha, \beta}^{-1}$.
\end{restatable}

\section{Experiments}\label{sec:experiments}
In this section, we present numerical results from evaluating various factorization methods in the multi-participation regime.

We first study the effect of using different bandwidths for BSR and BISR, as shown in Figure \ref{fig:bisr_bsr_optimal_bandwidth}. We found that, in most settings, the optimal bandwidth for BSR coincides with the separation parameter $b$, whereas for BISR a smaller bandwidth suffices; therefore, in future comparisons with BISR, we propose optimizing the bandwidth to determine the optimal $p^*$.

\begin{figure}[t]
    \centering\scriptsize
    \begin{subfigure}[c]{.32\linewidth}
        \includegraphics[width=\linewidth]{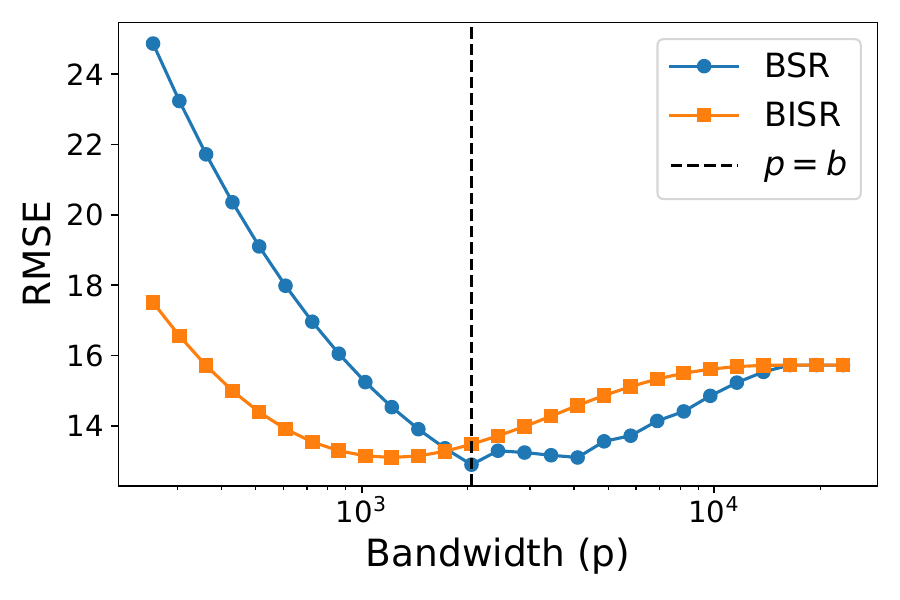}
        \subcaption{$\alpha=1$, $\beta=0$}
    \end{subfigure}
    \begin{subfigure}[c]{.32\linewidth}
        \includegraphics[width=\linewidth]{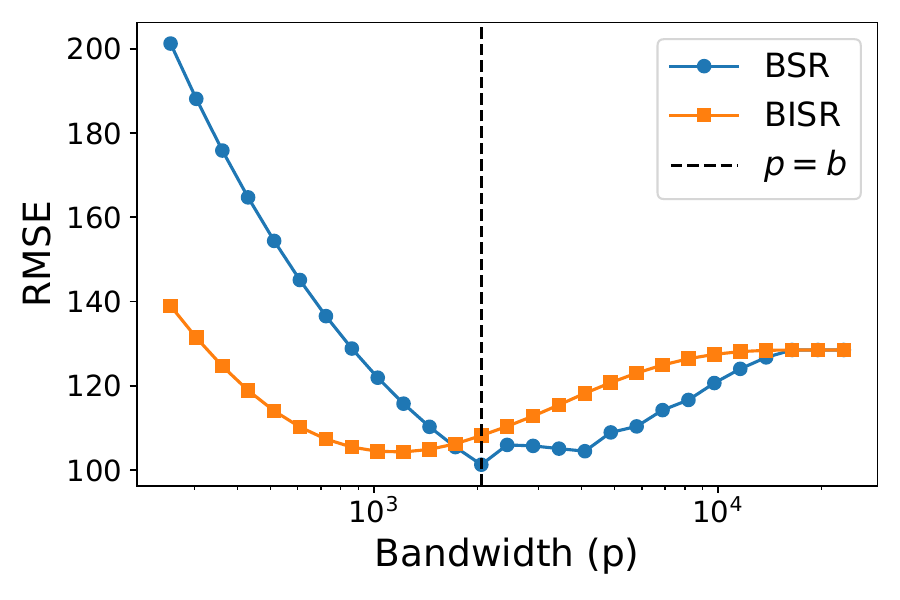}
        \subcaption{$\alpha=1$, $\beta=0.9$}
    \end{subfigure}
    \begin{subfigure}[c]{.32\linewidth}
        \includegraphics[width=\linewidth]{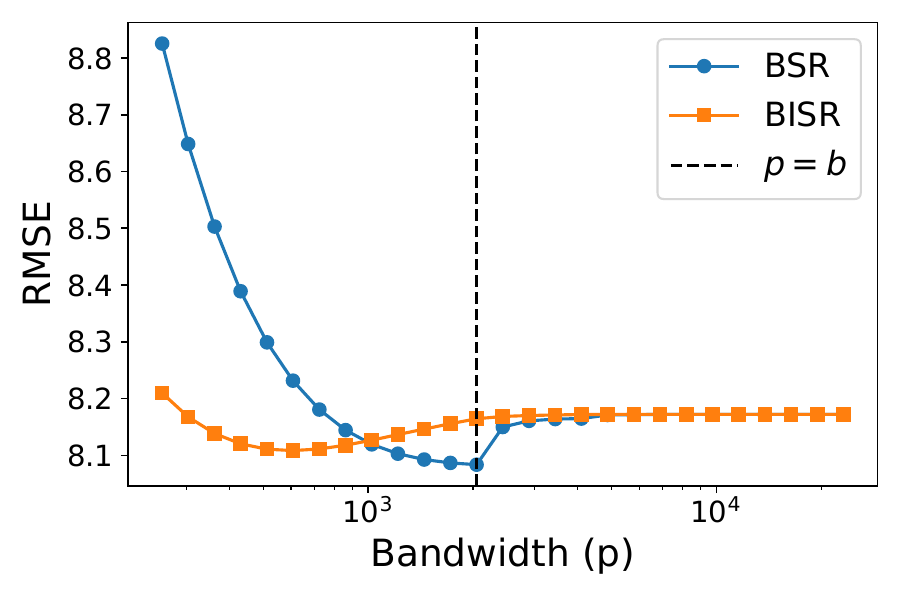}
        \subcaption{$\alpha=0.999$, $\beta=0$}
    \end{subfigure}
    \caption{RMSE comparison for Banded Square Root (BSR) and Banded Inverse Square Root (BISR) methods across varying bandwidths ($p$). The results are shown for a fixed matrix size of $n=16384$ and a participation number of $k=8$. For BSR, the choice $p = b$ is numerically optimal, while for BISR a smaller bandwidth suffices to achieve an optimal value.
}
    \label{fig:bisr_bsr_optimal_bandwidth}
\end{figure}

We compare BISR with several other methods, including Banded Square Root (BSR), Banded Matrix Factorization (BandMF), introduced by \citet{scalingmckenna2024}, and Buffered Linear Toeplitz (BLT), introduced by \citet{dvijotham2024efficient} and adapted for multi-participation training by \citet{hasslefree2024}. We use a buffer size of $4$, as recommended, and observe that the error saturates quickly as the buffer size increases. We use BandMF with bandwidth equal to $b$, as we did not observe any benefit from using a larger bandwidth. Moreover, we conjecture that optimal multi-epoch participation can always be achieved on a banded lower triangular matrix with bandwidth $b$.

We emphasize that BLT has only been described, analyzed, and implemented for prefix-sum matrices.\footnote{In concurrent work by \citet{huang2025memory}, BLT has been adapted to work with a momentum workload.} Therefore, we do not show BLT results for momentum and weight decay plots. For all methods except BISR, we use efficient implementations from the jax-privacy library \citep{jax-privacy2022github}.

Our experiments (Figure \ref{fig:different_matrix_sizes}) show that \method consistently matches or exceeds BSR in quality across all regimes and outperforms it in scenarios with a large number of participations. The improvement is particularly pronounced when the participation count is high ($k = 16$). BISR achieves RMSE comparable to that of BLT, but has the advantage of easier implementation for both factorization and training, as it only requires convolving previous noise with a fixed set of coefficients—an "embarrassingly parallel" operation (see \citet{scalingmckenna2024}). While in practice, BandMF achieves slightly better RMSE\footnote{This is possible because even though we have shown that BISR provides an asymptotically optimal factorization, that does not imply that it is superior to all other methods for finite problem sizes. } at $k = 16$, it requires solving a computationally expensive optimization problem, making it impractical for matrix sizes beyond $n = 4096$. 

\begin{figure}[t]
    \centering\scriptsize
    \begin{subfigure}[c]{.32\linewidth}
        \includegraphics[width=\linewidth]{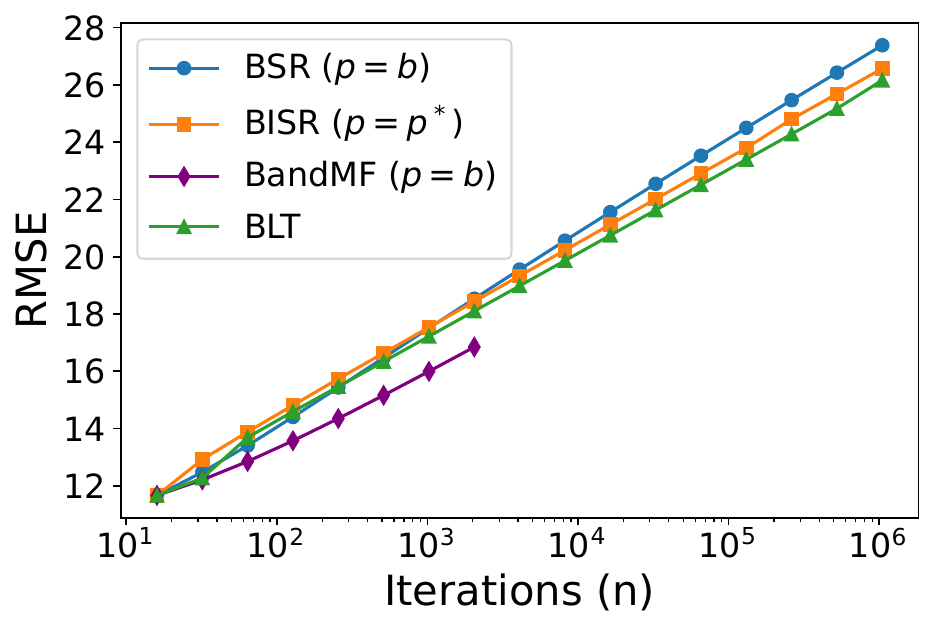}
        \subcaption{$k = 16$, $\alpha = 1$, $\beta = 0$}
    \end{subfigure}\hfill
    \begin{subfigure}[c]{.32\linewidth}
        \includegraphics[width=\linewidth]{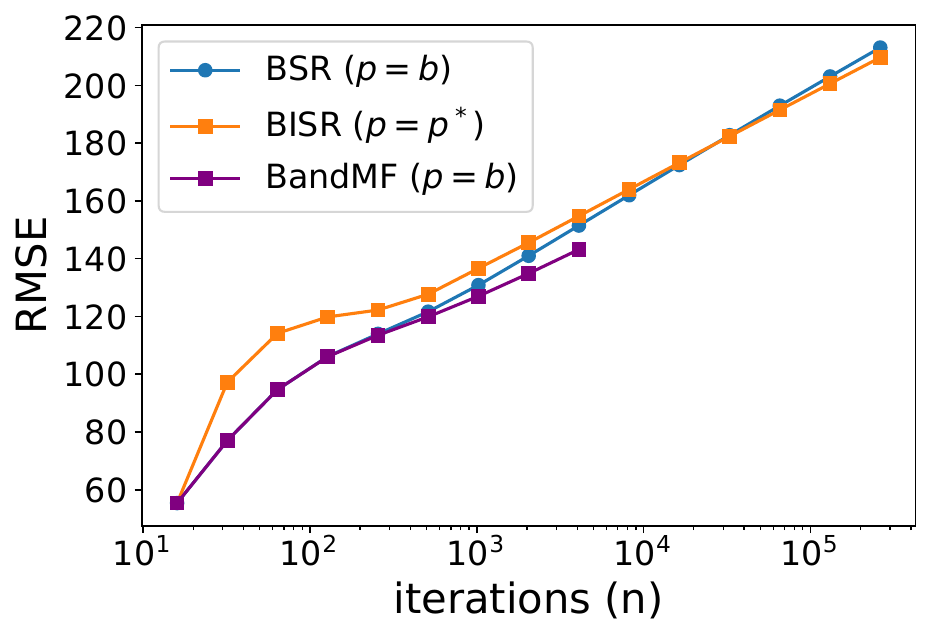}
        \subcaption{$k=16$, $\alpha=1, \beta=0.9$}
    \end{subfigure}\hfill
    \begin{subfigure}[c]{.32\linewidth}
        \includegraphics[width=\linewidth]{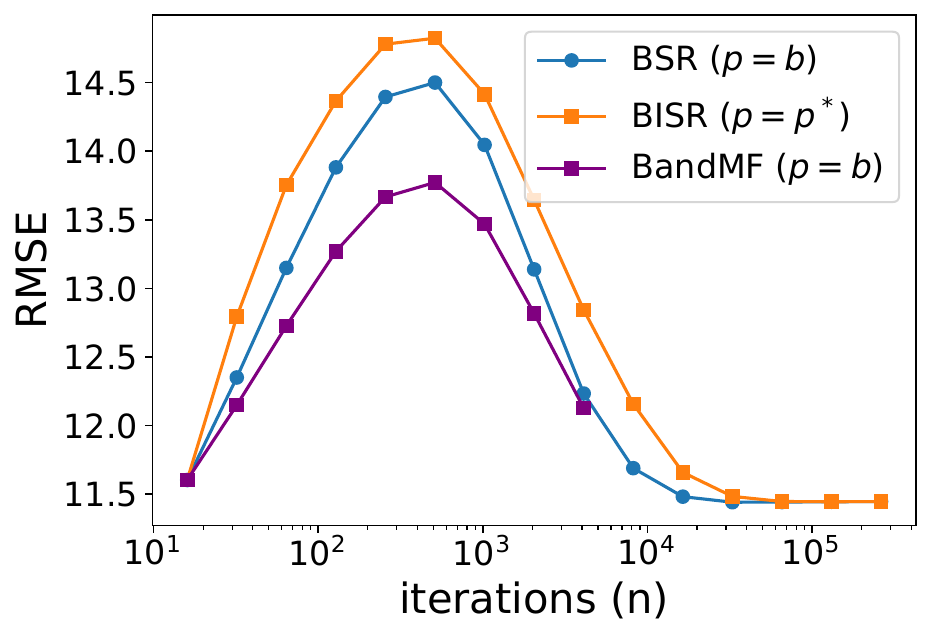}
        \subcaption{$k=16$, $\alpha=0.999, \beta=0$}
    \end{subfigure}
    
    \vspace{0.5em} 
    
    \begin{subfigure}[c]{.32\linewidth}
        \includegraphics[width=\linewidth]{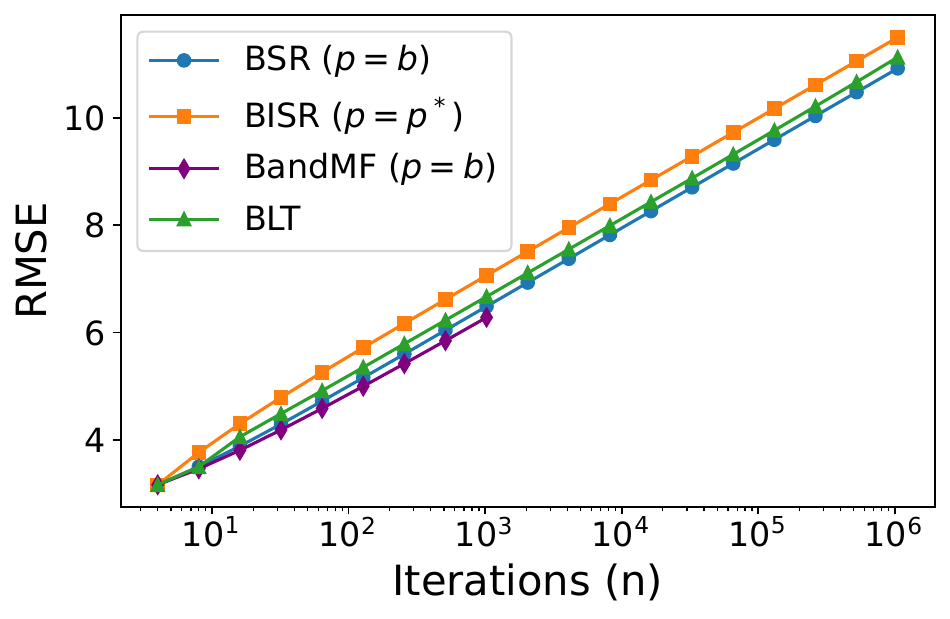}
        \subcaption{$k=4$, $\alpha = 1$, $\beta = 0$}
    \end{subfigure}
    \begin{subfigure}[c]{.32\linewidth}
        \includegraphics[width=\linewidth]{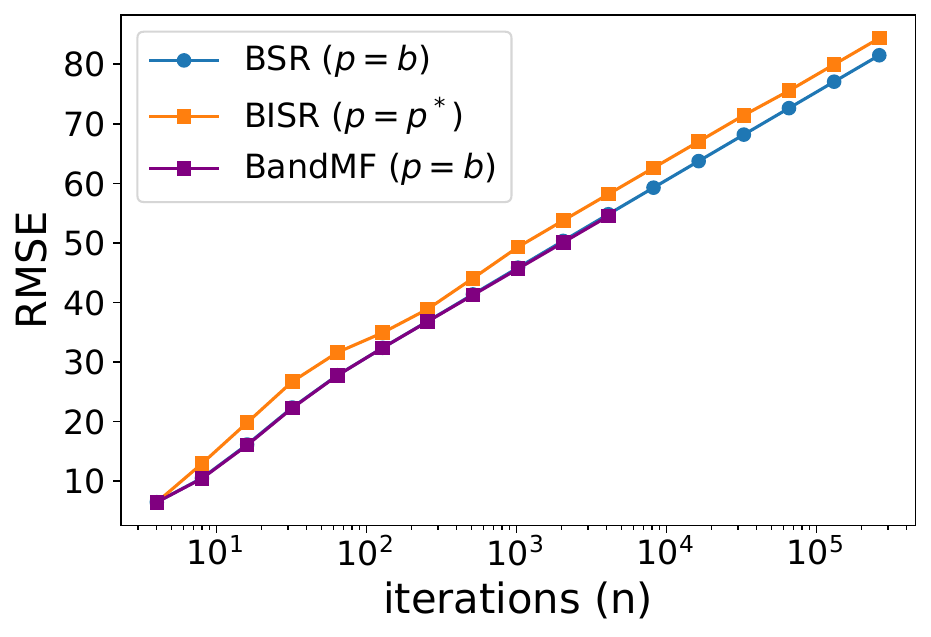}
        \subcaption{$k=4$, $\alpha=1, \beta=0.9$}
    \end{subfigure}\hfill
    \begin{subfigure}[c]{.32\linewidth}
        \includegraphics[width=\linewidth]{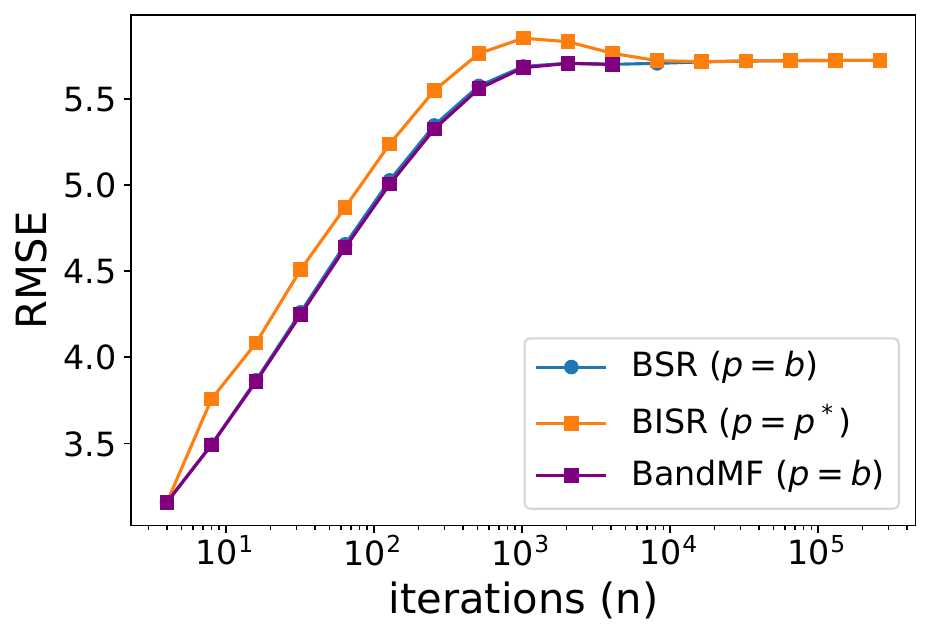}
        \subcaption{$k=4$, $\alpha=0.999, \beta=0$}
    \end{subfigure}
    
    \caption{RMSE across varying matrix sizes for different factorization methods under multiple optimizer settings and participation levels. We showed that also in practice, BISR performs on par with, or even better than, BSR and BLT. However, methods based on numerical optimization, such as BandMF, may achieve superior performance in certain regimes.}
    \label{fig:different_matrix_sizes}
\end{figure}

\section{From \acronym to BandInvMF} 

\begin{figure}[t]
    \centering\scriptsize
    \begin{subfigure}[c]{.32\linewidth}
        \includegraphics[width=\linewidth]{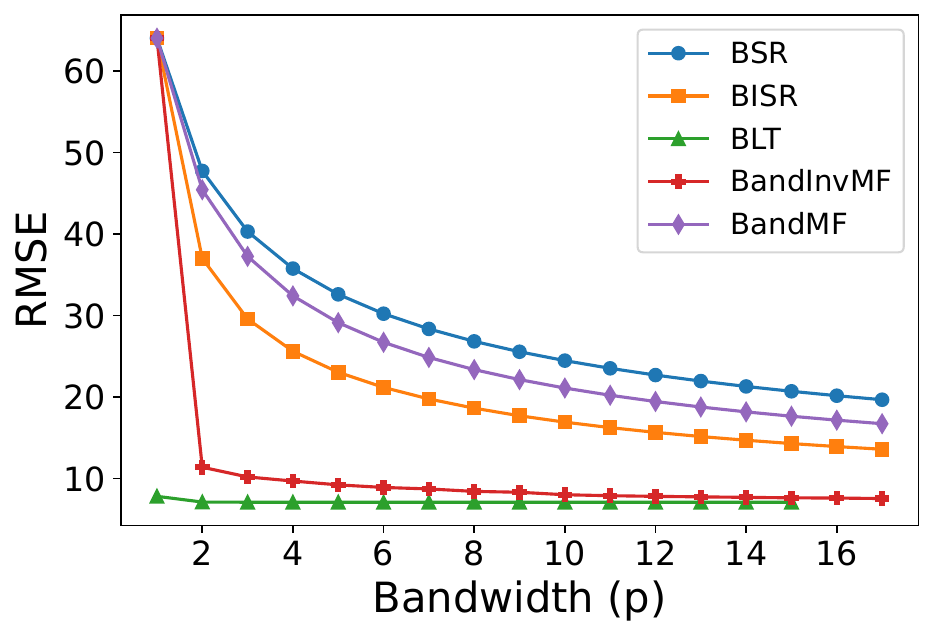}
        \subcaption{$k=4$, $\alpha = 1$, $\beta = 0$}
    \end{subfigure}\hfill
    \begin{subfigure}[c]{.32\linewidth}
        \includegraphics[width=\linewidth]{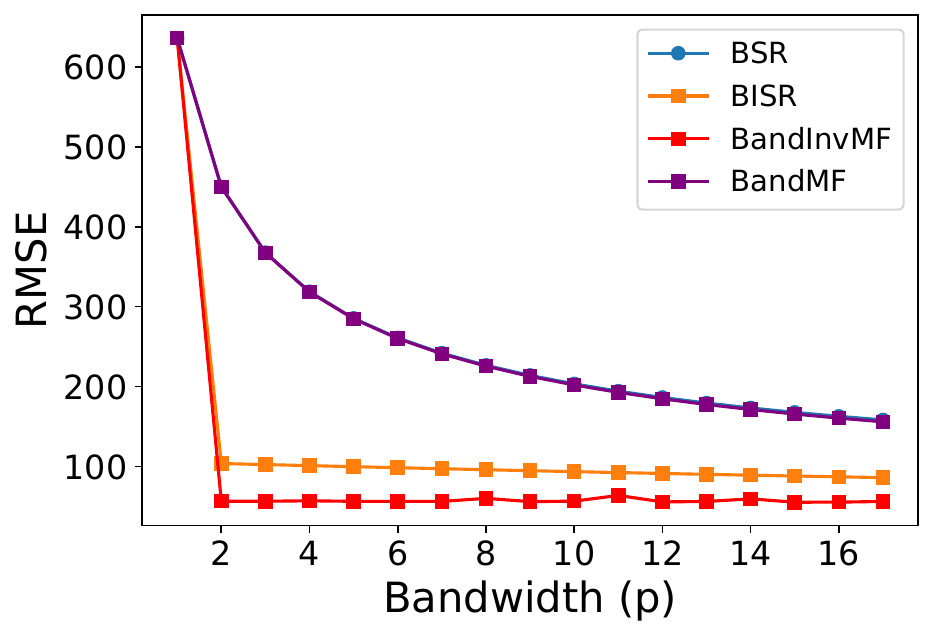}
        \subcaption{$k=4$,  $\alpha=1$, $\beta=0.9$}
    \end{subfigure}\hfill
    \begin{subfigure}[c]{.32\linewidth}
        \includegraphics[width=\linewidth]{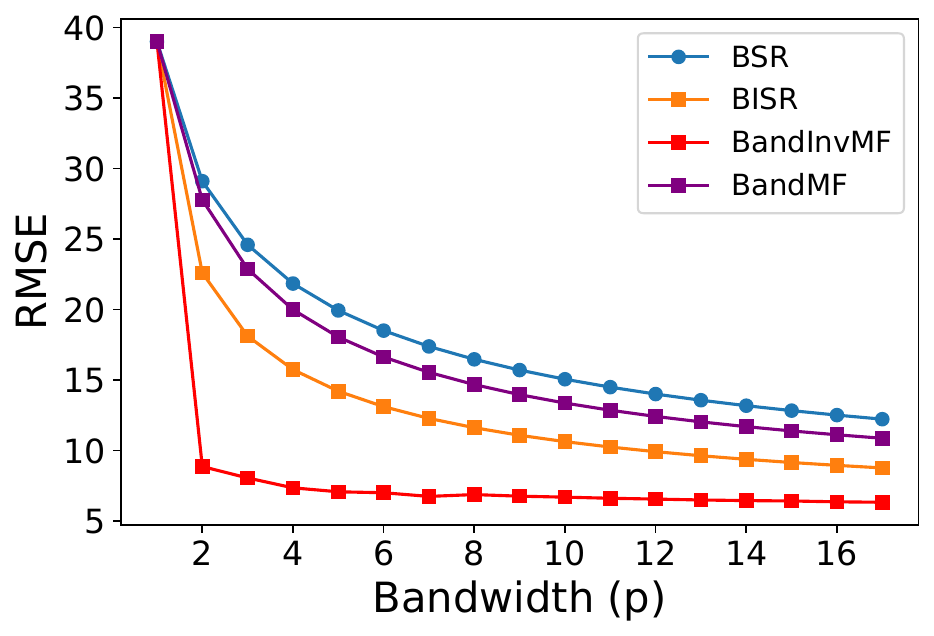}
        \subcaption{$k=4$, $\alpha=0.999$, $\beta=0$}
    \end{subfigure}
    
    \vspace{0.5em} 
    
    \begin{subfigure}[c]{.32\linewidth}
        \includegraphics[width=\linewidth]{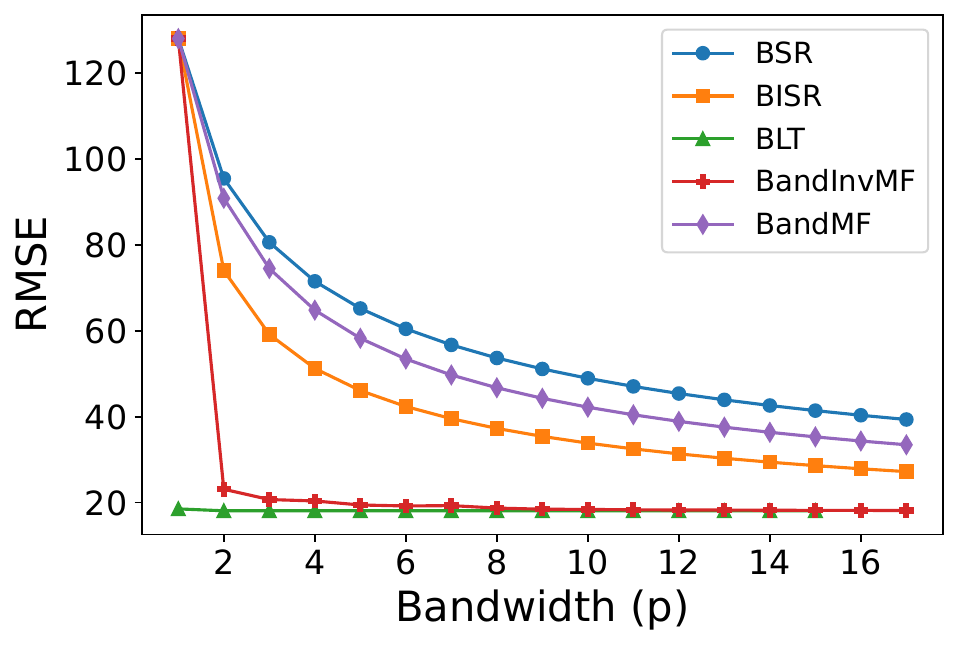}
        \subcaption{$k=16$, $\alpha=1$, $\beta = 0$}
    \end{subfigure}\hfill
    \begin{subfigure}[c]{.32\linewidth}
        \includegraphics[width=\linewidth]{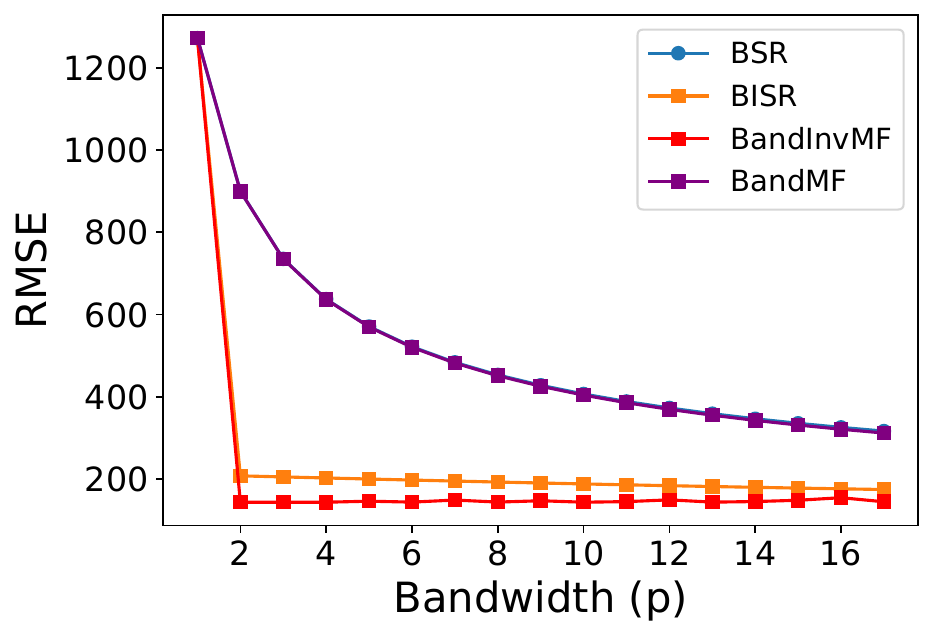}
        \subcaption{$k=16$,  $\alpha=1$, $\beta=0.9$}
    \end{subfigure}\hfill
    \begin{subfigure}[c]{.32\linewidth}
        \includegraphics[width=\linewidth]{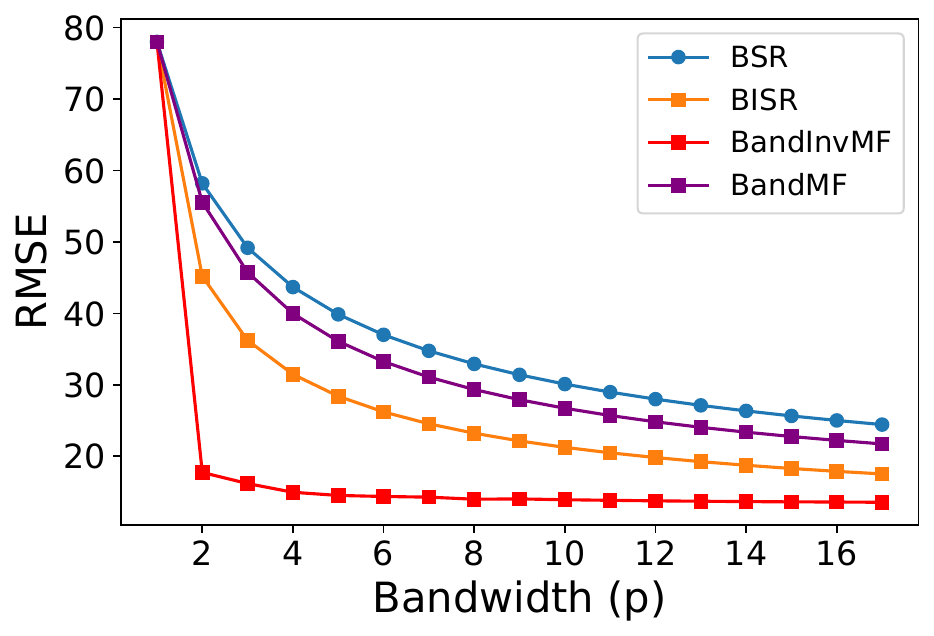}
        \subcaption{$k=16$, $\alpha=0.999$, $\beta=0$}
    \end{subfigure}
    
    \caption{RMSE across different factorizations and optimization parameters $\alpha, \beta$, with small bandwidth. The plot shows that BandInvMF and BISR can significantly reduce RMSE for a small bandwidth regime, justifying the use of banded inverse methods  instead of banded factorizations.}
    \label{fig:small_bandwidth}
\end{figure}
In the previous sections, we established that BISR has asymptotically optimal rate 
for large bandwidths $p \sim b \log b$.
However, in practice, one might want to work with a smaller value of $p$ to save memory and computational resources. 
In this section, we showcase a modification to BISR with improved practical properties in this regime.
We propose to keep the construction of a banded inverse matrix with Toeplitz 
structure, but to set its values not by the closed form expressions~\eqref{eq:C_inv_sqrt} 
but by a numeric optimization. 
Specifically, we optimize an upper bound on $b$-min separation participation, 
given by Equation~\ref{eq:sens_k_b}.

For the sake of numerical optimization, we assume that the optimum is achieved for indices of the form $i + kb$. This assumption can be justified, as we observed that the resulting solution for matrix $C$ is positive and decreasing, which guarantees optimality. If the sequence is not decreasing, it can be uniformly bounded by a decreasing sequence. Specifically, if the values on the diagonals of $C$ are $C_{j,1}$, then they are upper bounded by $\max_{t \ge j} C_{1,t}$, which is a decreasing sequence by construction. We use \method as an initialization for the coefficients. We provide an efficient JAX implementation in the Appendix (see Algorithm~\ref{lst:bandinv_code}) as well as the convergence plots in Figure~\ref{fig:band_inv_optimizatoin}.

The numerical results are presented in Figure~\ref{fig:small_bandwidth} referred to as BandInvMF.  We observe that the error decreases drastically even with the addition of a single band, compared to a trivial factorization. This observation is supported theoretically by the following lemma. 

\begin{restatable}[Optimal Band Inversion Error]{lemma}{OptimalBandInv}
    \label{lem:optimal_band_inv} 
    Let the matrix $C_{\lambda}^{-1} = \mathrm{LTT}(1, -\lambda, 0, \dots, 0)$ be a lower triangular Toeplitz matrix with $1$ on the main diagonal and $-\lambda$ on the subdiagonal. Then, for a single participation and the prefix sum matrix $A_{1,0}$, the following bound on the matrix factorization error holds:
    \begin{equation}
        \inf_{\lambda \in (0,1)} \mathcal{E}\!\left(A_{1, 0}C^{-1}_{\lambda}, C_{\lambda}\right) 
        = O(n^{1/4}).
    \end{equation}
\end{restatable}

Lemma~\ref{lem:optimal_band_inv}  shows that the optimized inverse banded matrix factorization can achieve an asymptotically better bound than a trivial factorization $A \times \mathbb I$, which yields an error of $O(\sqrt{n})$. Moreover, from Theorem~\ref{thm:multi_epoch_error_upper_bound}, for small $p$, the leading term for \method remains of order $O(\sqrt{n})$. Therefore, we advocate for optimizing the coefficients when the bandwidth is small.

We compare Band-Inv-MF with other methods for training the 3-Block ConvNet model on CIFAR-10 and the BERT-base model ($\sim$100M parameters)  on IMDB (see Figure~\ref{fig:cifar_imdb_experiments}), both with and without amplification by subsampling. For a fairer comparison, we use a recently proposed bins-and-balls subsampling mechanism \citep{chua2024balls}, which combines the accuracy benefits of Poisson subsampling with improved implementation efficiency. More importantly, it supports the matrix mechanism via the MCMC accountant \citep{choquette2024near, choquette2023privacy}, even when the matrix $C$ is not banded. Our results indicate that in a low-memory regime, inverse correlation matrix methods BISR and Band-Inv-MF achieve significantly higher accuracy than BSR and Band-MF, and consistently outperform DP-SGD, with and without amplification. Although Band-Inv-MF achieves lower RMSE than BISR, we did not observe a corresponding gain in accuracy. This indicates that RMSE alone is not a sufficient proxy for model performance in matrix factorization.

\begin{figure}[t]
    \centering
    \begin{subfigure}{0.47\textwidth}
        \includegraphics[width=\linewidth]{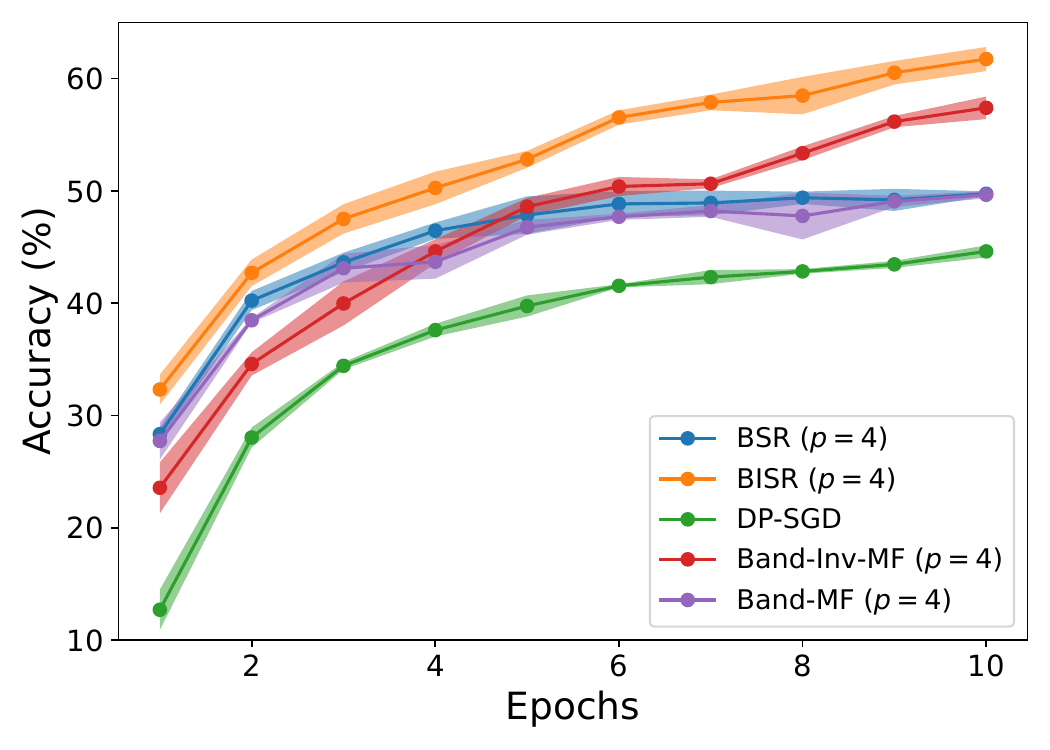}
        \caption{CIFAR-10 (Amplified)}
    \end{subfigure}
    \hfill
    \begin{subfigure}{0.47\textwidth}
        \includegraphics[width=\linewidth]{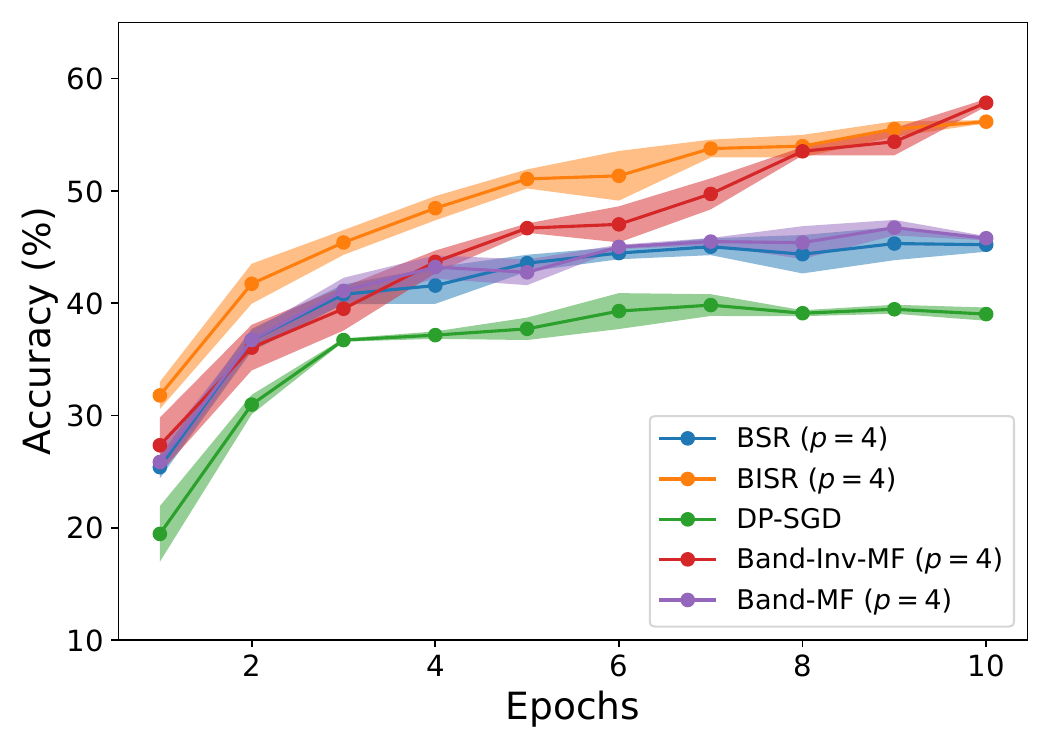}
        \caption{CIFAR-10 (Non-amplified)}
    \end{subfigure}
    
    \vspace{1em} 
    
    \begin{subfigure}{0.47\textwidth}
        \includegraphics[width=\linewidth]{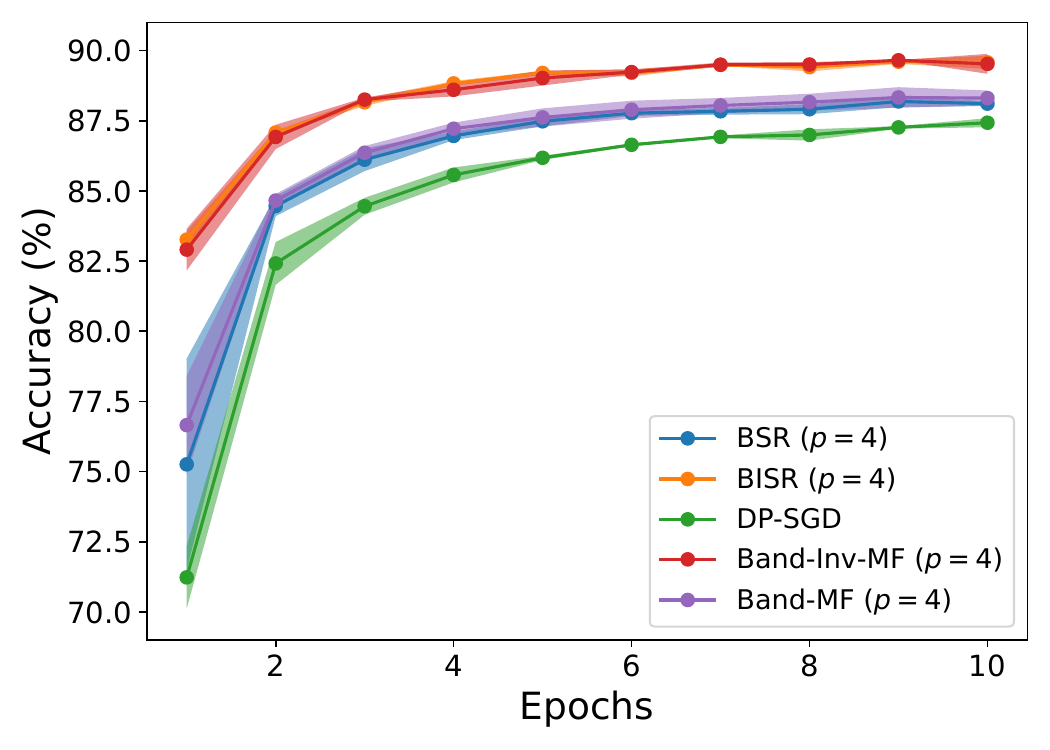}
        \caption{IMDB (Amplified)}
    \end{subfigure}
    \hfill
    \begin{subfigure}{0.47\textwidth}
        \includegraphics[width=\linewidth]{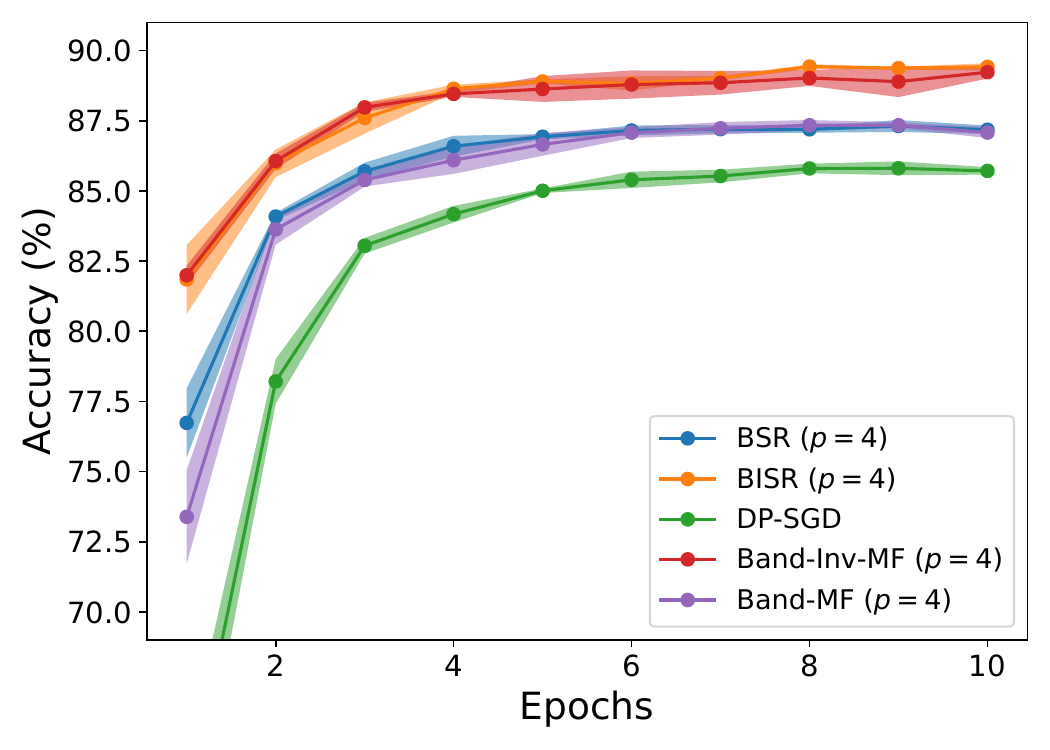}
        \caption{IMDB (Non-amplified)}
    \end{subfigure}

    \caption{\textbf{Accuracy results for CIFAR-10 and IMDB in the small bandwidth (low-memory) regime.} 
    For CIFAR-10, both amplified (left) and non-amplified (right) results show that inverse factorization methods, BISR and Band-Inv-MF, achieve significantly higher accuracy compared to Band-MF. Both plots correspond to $(9, 10^{-5})$-DP, with training performed using momentum $\beta = 0.9$ and weight decay $\alpha = 0.9999$, which we found to be optimal (see Tables~\ref{tab:hyperparameters} and \ref{tab:accuracies} in the appendix). 
    For IMDB, we report accuracy from fine-tuning under the same low-memory regime, comparing amplified and non-amplified settings, with training performed using momentum $\beta = 0.95$ and weight decay $\alpha = 0.99999$ (see Tables~\ref{tab:hyperparameters_imdb} and \ref{tab:imdb_accuracies}).}
    \label{fig:cifar_imdb_experiments}
\end{figure}

\section{Discussion}

This work demonstrates that imposing a banded structure on the inverse correlation matrix, rather than on the matrix itself, leads to both theoretical and practical benefits for differentially private training across multiple participations. Our Banded Inverse Square Root (BISR) method enables explicit factorization, supporting clean error analysis and efficient implementation.

We prove that BISR achieves asymptotically optimal factorization error by improving upon previously established lower bounds and showing that BISR matches the asymptotics precisely, thereby closing a fundamental gap in the theory. An interesting direction for future work is to close the gap in constant dependence, as numerical optimization methods (\eg, BandMF, BLT), despite their computational cost, may outperform BISR.

In the low-memory regime, we find it beneficial to optimize directly over the coefficients of the inverse correlation matrix. 
Our Band-Inv-MF method achieves a lower matrix factorization error compared to BISR. However, these improvements do not yet translate to gains in model accuracy when training with the amplification by subsampling. Future research should focus on optimizing the matrix coefficients while explicitly accounting for amplification, in order to bridge this gap.

\section*{Acknowledgments}

We thank Arun Ganesh for providing the code for the MCMC accountant. We thank Joel Andersson, Rasmus Pagh and Monika Henzinger for valuable comments on the early version of the draft. We thank Christian Lebeda for a fruitful discussion on the lower bound theorem. 

Jalaj Upadhyay’s research was funded by the Rutgers Decanal Grant no. 302918, NSF CNS 2433628, Google Research Scholar Award, and Google Seed Fund Grant.

Nikita Kalinin's research was funded in part by the Austrian Science Fund (FWF) [10.55776/COE12].

\bibliography{lit}
\bibliographystyle{iclr2026_conference}

\clearpage
\appendix

\renewcommand{\proofname}{Proof}

\section*{Key Inequalities and Relationships}

This section compiles the fundamental inequalities and key relationships employed throughout this study. Each entry is presented with a concise explanation of its origin or the context in which it arises.

\begin{enumerate}
    \item $\displaystyle \quad r_k = \bigg|\binom{-1/2}{k}\bigg| = \frac{1}{4^k}\binom{2k}{k}$ \hfill \textit{ \citet{Henzinger}}
    \item $\displaystyle \quad \frac{1}{\sqrt{\pi (k + 1)}} \le r_k \le \frac{1}{\sqrt{\pi k}}$ \hfill \textit{Lemma 2.1 from \citet{dvijotham2024efficient}}
    \item  $\displaystyle \quad \sum\limits_{k = 0}^{p - 1}r_k \le  1 +  \frac{1}{\sqrt{\pi}}\sum\limits_{k = 1}^{p - 1}\frac{1}{\sqrt{k}} \le 1 + \frac{2 \sqrt{p}}{\sqrt{\pi}}$ \hfill \textit{Integral inequality.}
    \item  $\displaystyle \quad \sum\limits_{k = 0}^{p - 1}r^2_k \le 1 + \log p$ 
    \hfill \textit{Lemma \ref{lem:sum_c_alpha_beta_squared_bounds}}
    \item $\displaystyle \quad c_k^{\alpha,\beta} = \sum_{j=0}^{k} \alpha^j \beta^{k-j} r_j r_{k-j}$ \hfill \textit{Theorem 1 from \citet{kalinin2024}}
    
    \item $\displaystyle \quad c_k^{1,\beta} \le c_{k-1}^{1,\beta} \left[ 1 - \frac{(1 - \beta)^2}{2k} \right] \quad \text{for } k \ge 1.$ \hfill \textit{ Lemma~\ref{lem:c_k_beta_mult_bound}}
    \item $\displaystyle \quad \sum_{j=0}^{k} c_j^{1,\beta} = \sum_{j=0}^{k} r_j \beta^j \tilde{r}_{k-j}$ \hfill \textit{In the proof of Lemma \ref{lem:c_beta_sum_bounds}}
    \item $\displaystyle \quad \sum_{j=0}^{k} c_j^{1,\beta} \beta^{k-j} = \sum_{j=0}^{k} r_j^\beta \tilde{r}_{k-j}$ \hfill \textit{In the proof of Lemma \ref{lem:c_beta_sum_bounds}}
    \item $\displaystyle \quad \sum_{j=0}^{k} \frac{\tilde{c}_j^{1,\beta}(1 - \beta^{k-j+1})}{1 - \beta} = c_k^{1,\beta}$ \hfill \textit{In the proof of Lemma \ref{lem:c_beta_sum_bounds}}
    \item $\displaystyle \quad r_k (1 - \beta) \le \sum_{j=0}^{k} \tilde{c}_j^{1,\beta} \le c_k^{1,\beta} (1 - \beta)$ \hfill \textit{Lemma \ref{lem:c_beta_sum_bounds}}
    \item $\displaystyle \quad \frac{\log(k + 1)}{4} \le \sum_{j=0}^{k-1} (c_{j}^{1, \beta})^2 \le \frac{1 + \log k}{(1 - \beta)^2}$ \hfill \textit{Lemma \ref{lem:sum_c_alpha_beta_squared_bounds}}
    \item $\displaystyle \frac{\alpha^j}{2\sqrt{j + 1}} \le c^{\alpha, \beta}_j \le \frac{\alpha^j}{(1 - \beta/\alpha)\sqrt{j + 1}}$ \hfill \textit{Lemma \ref{lem:sum_c_alpha_beta_squared_bounds}}
    \item $\displaystyle 1 \le \sum_{j=0}^{k-1} (c^{\alpha, \beta}_j)^2 \le \frac{1}{(\alpha - \beta)^2} \log \left( \frac{1}{1 - \alpha^2} \right)$ \hfill \textit{Lemma \ref{lem:sum_c_alpha_beta_squared_bounds}}
    \item $\displaystyle \tilde{r}_k = (-1)^k \binom{1/2}{k} = \frac{-1}{2k-1} r_k $ \hfill \textit{Lemma \ref{lem:inv_sqrt_A_alpha_beta}}
    
    \item $\displaystyle \tilde{c}_k^{\alpha,\beta} = \sum_{j=0}^{k} \tilde{r}_j \beta^j \tilde{r}_{k-j} \alpha^{k-j}$ \hfill \textit{Lemma \ref{lem:inv_sqrt_A_alpha_beta}}
    
    \item $\displaystyle \tilde{r}_k (1 + \beta) \le \tilde{c}_k^{1,\beta} \le 0$ \hfill \textit{Lemma \ref{lem:c_beta_bounds}}

\end{enumerate}
\clearpage
\section{Proofs}
\label{sub:proofs}

We start by stating some auxiliary results that will be extensively used throughout the proofs.

\begin{lemma}[Theorem~1 in \citet{kalinin2024} -- Square Root of the Matrix $A_{\alpha, \beta}$]\label{def:C_sqrt}
For $k\geq 0$, let $r_k = \left|\binom{-1/2}{k}\right|$. For $0 \le \beta < \alpha \le 1$, the square root matrix $C_{\alpha, \beta} = A^{1/2}_{\alpha, \beta}$ has the following explicit form:
\begin{align}
    \label{eq:C_sqrt}
    C_{\alpha, \beta} = \begin{pmatrix}
        1 & 0 & \cdots & 0 \\
        c_{1}^{\alpha, \beta} & 1 & \cdots & 0 \\
        \vdots & \vdots & \ddots & \vdots \\
        c_{n - 1}^{\alpha, \beta} & c_{n - 2}^{\alpha, \beta} & \cdots & 1
    \end{pmatrix}, \quad \text{where} \quad c_k^{\alpha, \beta} = \sum\limits_{j = 0}^{k}\alpha^j \beta^{k - j}r_j r_{k - j},
\end{align} 
\end{lemma}

\begin{restatable}[Theorem 2 from \citet{kalinin2024}]{theorem}{toeplitzsens}\label{thm:b-sensitivity}
Let $M$ be a lower triangular Toeplitz matrix with decreasing non-negative entries 
$m_0 \geq m_1 \geq m_2 \geq \dots m_{n-1}\geq 0$ on the diagonals.
Then the \emph{sensitivity} of $M$ in the setting of \emph{$b$-min-separation} is
\begin{align}
\senskb(M) &= \Big\|\sum_{j = 0}^{k-1}M_{[\cdot, 1 + jb]}\Big\|_2
\label{eq:sens_toeplitz}
\end{align}
where $M_{[\cdot, 1 + jb]}$ denotes the $(1 + jb)$-th column of $M$.
\end{restatable}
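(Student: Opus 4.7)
My plan is to reduce the sensitivity formula to the problem of maximizing $\|\sum_{i \in \pi} M_{[\cdot, i]}\|_2$ over all $b$-separated sets $\pi$ of size exactly $k$, and then show that this maximum is attained by the leftmost configuration $\pi^* = \{1, 1+b, \ldots, 1+(k-1)b\}$. The reduction step is easy: because $M$ has non-negative entries, so does $M^\top M$, and the bound in~\eqref{eq:sens_k_b} therefore becomes an equality, giving $\senskb(M) = \max_{\pi \in \Pi_{k,b}} \sqrt{\sum_{i, j \in \pi} (M^\top M)_{i,j}} = \max_{\pi \in \Pi_{k,b}} \|\sum_{i\in\pi}M_{[\cdot, i]}\|_2$. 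Also by non-negativity, the maximum is attained with $|\pi| = k$, since adjoining any admissible index to a strictly smaller $\pi$ only increases the norm.

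Next I expand the squared norm pair-by-pair. For $\pi = \{i_1 < \cdots < i_k\}$ the lower-triangular Toeplitz structure gives $(M_{[\cdot, i]})_r = m_{r-i}$ if $r \ge i$ and $0$ otherwise, so expanding the square and swapping the sums over $r$ and pairs of indices yields
\begin{equation*}
\Bigl\|\sum_{j=1}^{k} M_{[\cdot, i_j]}\Bigr\|_2^{2}
= \sum_{j, l = 1}^{k} \sum_{s = 0}^{n - \max(i_j, i_l)} m_s \, m_{s + |i_j - i_l|}.
\end{equation*}
Each pair contribution depends only on the absolute difference $|i_j - i_l|$ and on the upper summation limit $n - \max(i_j, i_l)$.

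Because $m_0 \ge m_1 \ge \cdots \ge 0$, the summand $m_s m_{s+d}$ is non-increasing in $d$, so each pair contribution is non-increasing in $|i_j - i_l|$ and non-decreasing in $n - \max(i_j, i_l)$. For any $b$-separated $\pi$ one has $|i_l - i_j| \ge b|l - j|$ and $\max(i_j, i_l) \ge 1 + (\max(j, l) - 1)b$, with both lower bounds saturated simultaneously exactly when $\pi = \pi^*$. Hence $\pi^*$ dominates every admissible $\pi$ pair by pair, and summing the contributions gives $\|\sum_{i \in \pi} M_{[\cdot, i]}\|_2 \le \|\sum_{i \in \pi^*} M_{[\cdot, i]}\|_2$, which is~\eqref{eq:sens_toeplitz}. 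The main subtlety I anticipate is verifying that $\pi^*$ is simultaneously optimal for every pair and not merely in aggregate; this is exactly where both hypotheses are needed: the Toeplitz structure reduces each pair contribution to a function of $(|i_j - i_l|, \max(i_j, i_l))$, and monotonicity of $(m_s)$ aligns the two dependencies so that the same $\pi^*$ simultaneously minimizes every difference and maximizes every summation length.
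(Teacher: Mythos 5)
The paper does not actually contain a proof of this statement: it is imported verbatim as Theorem~2 of \citet{kalinin2024} and used as a black box, so there is no in-paper argument to compare against. Assessed on its own terms, your proof is correct. The reduction to $\max_{\pi\in\Pi_{k,b}}\bigl\|\sum_{i\in\pi}M_{[\cdot,i]}\bigr\|_2$ is exactly the equality case of~\eqref{eq:sens_k_b} (all entries of $M^\top M$ are non-negative because those of $M$ are), the pairwise expansion $\langle M_{[\cdot,i_j]},M_{[\cdot,i_l]}\rangle=\sum_{s=0}^{n-\max(i_j,i_l)}m_s\,m_{s+|i_j-i_l|}$ is right, and the key step --- that this quantity can only decrease when $|i_j-i_l|$ grows (monotonicity of $(m_s)$ plus non-negativity) or when $\max(i_j,i_l)$ grows (fewer non-negative terms), while every $b$-separated $\pi=\{i_1<\dots<i_{k'}\}$ satisfies $|i_j-i_l|\ge b|j-l|$ and $i_{\max(j,l)}\ge 1+(\max(j,l)-1)b$ with equality for $\pi^*$ --- does make $\pi^*$ dominate every admissible $\pi$ pair by pair. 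For completeness one should apply the two monotonicities in the order ``first shrink the gap at fixed summation range, then extend the range,'' which keeps all subscripts of $m$ inside $\{0,\dots,n-1\}$, but that is routine.

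One small inaccuracy: your justification for restricting to $|\pi|=k$ (``adjoining any admissible index to a strictly smaller $\pi$ only increases the norm'') fails in general, because a poorly placed $\pi$ with $|\pi|<k$ may admit no further $b$-separated index in $\{1,\dots,n\}$ (e.g.\ $n=10$, $b=3$, $k=4$, $\pi=\{2,5,8\}$). The step is also unnecessary: your pair-by-pair comparison already handles $|\pi|=k'<k$, since each of the $(k')^2$ pair contributions of $\pi$ is dominated by the corresponding contribution of $\pi^*$ and the remaining pairs of $\pi^*$ contribute non-negatively. I would simply delete that sentence.
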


\begin{lemma}[Lemma 7 from \citet{kalinin2024}]
\label{lem:sum_c_alpha_beta_squared_bounds}
For $k \in \{1, \dots, n\}$ it holds for $c_i^{\alpha,\beta}$ as defined in equation~\eqref{eq:C_sqrt}:
\begin{equation}
\frac{\log(k + 1)}{4} \leq \sum_{i=0}^{k-1} (c^{1, \beta}_i)^2 \leq \frac{1 + \log k}{(1 - \beta)^2}
\end{equation}
for $\alpha = 1$, and otherwise
\begin{equation}
1 \leq \sum_{i=0}^{k-1} (c^{\alpha, \beta}_i)^2 \leq \frac{1}{(\alpha - \beta)^2} \log \left( \frac{1}{1 - \alpha^2} \right).
\end{equation}
\end{lemma}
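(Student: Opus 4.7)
The plan is a two-stage route: first obtain pointwise bounds
\begin{equation*}
\frac{\alpha^k}{2\sqrt{k+1}} \le c_k^{\alpha,\beta} \le \frac{\alpha^k}{(1-\beta/\alpha)\sqrt{k+1}},
\end{equation*}
then square and sum them; all remaining work is harmonic and geometric series bookkeeping.

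For the pointwise bounds I would start from $c_k^{\alpha,\beta} = \sum_{j=0}^k \alpha^j \beta^{k-j} r_j r_{k-j}$ combined with the Catalan estimate $\frac{1}{\sqrt{\pi(k+1)}} \le r_k \le \frac{1}{\sqrt{\pi k}}$ (items 1--2 of the Key Inequalities list). The lower bound is immediate by keeping only the $j = k$ term: $c_k^{\alpha,\beta} \ge \alpha^k r_k \ge \frac{\alpha^k}{\sqrt{\pi(k+1)}} \ge \frac{\alpha^k}{2\sqrt{k+1}}$ since $\sqrt{\pi} < 2$. For the upper bound the key observation is $r_m r_{k-m} \le r_k$ for all $m \in \{0, \ldots, k\}$, which follows from log-convexity of the sequence $(r_k)$: the ratios $r_k/r_{k-1} = 1 - \frac{1}{2k}$ are increasing in $k$, so $\log r_k$ is convex in $k$, hence $m \mapsto \log r_m + \log r_{k-m}$ is convex on $\{0, \ldots, k\}$ and attains its maximum at the endpoints $m = 0, k$, both equal to $r_k$. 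With $s = \beta/\alpha < 1$ this gives
\begin{equation*}
c_k^{\alpha,\beta} = \alpha^k \sum_{m=0}^k s^m r_m r_{k-m} \le \alpha^k r_k \sum_{m=0}^\infty s^m \le \frac{\alpha^k}{(1-\beta/\alpha)\sqrt{\pi k}} \le \frac{\alpha^k}{(1-\beta/\alpha)\sqrt{k+1}}
\end{equation*}
for $k \ge 1$ (and $k = 0$ is trivial since $c_0^{\alpha,\beta} = 1 \le 1/(1-\beta/\alpha)$).

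With the pointwise bounds the $\alpha = 1$ inequality is immediate: $(c_i^{1,\beta})^2 \in [\frac{1}{4(i+1)}, \frac{1}{(1-\beta)^2(i+1)}]$, and summing over $i = 0, \ldots, k-1$ together with $\log(k+1) \le \sum_{i=0}^{k-1}\frac{1}{i+1} \le 1 + \log k$ yields both sides of the stated bound.

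For $\alpha < 1$ the lower bound $\sum_i (c_i^{\alpha,\beta})^2 \ge 1$ is from $c_0^{\alpha,\beta} = 1$. The upper bound is obtained by squaring, using $\frac{1}{(1-\beta/\alpha)^2} = \frac{\alpha^2}{(\alpha-\beta)^2}$, and summing a shifted geometric-harmonic series via the Taylor identity $-\log(1-x) = \sum_{m\ge 1} x^m/m$:
\begin{equation*}
\sum_{i=0}^{k-1}(c_i^{\alpha,\beta})^2 \le \frac{1}{(\alpha-\beta)^2}\sum_{i=0}^{k-1} \frac{\alpha^{2(i+1)}}{i+1} \le \frac{1}{(\alpha-\beta)^2}\sum_{m=1}^{\infty} \frac{\alpha^{2m}}{m} = \frac{\log\bigl(1/(1-\alpha^2)\bigr)}{(\alpha-\beta)^2}.
\end{equation*}
The main obstacle is the log-convexity argument used to upper bound $r_m r_{k-m}$ by $r_k$; once the pointwise upper bound on $c_k^{\alpha,\beta}$ is in hand everything else is mechanical.
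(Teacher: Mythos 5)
Your proof is correct, but note that the paper itself does not prove this statement: it is imported verbatim as Lemma~7 of \citet{kalinin2024}, so there is no in-paper argument to compare against, and your derivation is effectively a self-contained reconstruction of the cited result. Your route---first the pointwise bounds $\frac{\alpha^k}{2\sqrt{k+1}} \le c_k^{\alpha,\beta} \le \frac{\alpha^k}{(1-\beta/\alpha)\sqrt{k+1}}$ (which are precisely item~12 of the paper's key-inequality list, i.e.\ themselves part of the imported lemma), then harmonic and geometric summation---is the natural one and matches the structure of the prior-work statement. The individual steps check out: the lower bound by keeping only the $j=k$ term of the convolution is valid since all terms are nonnegative; the only genuinely nontrivial ingredient, $r_m r_{k-m} \le r_k$, is correctly justified by your log-convexity argument (the ratios $r_k/r_{k-1} = 1 - \tfrac{1}{2k}$ are increasing, so $m \mapsto \log r_m + \log r_{k-m}$ is convex and maximized at the endpoints, where it equals $\log r_k$), and it also follows in one line from Vandermonde's identity, $\binom{2k}{k} = \sum_{j} \binom{2m}{j}\binom{2k-2m}{k-j} \ge \binom{2m}{m}\binom{2k-2m}{k-m}$; the geometric-series step uses $\beta/\alpha < 1$, which the paper assumes ($0 \le \beta < \alpha \le 1$); and the final summations ($\log(k+1) \le \sum_{j=1}^k 1/j \le 1+\log k$ and $\sum_{m\ge 1}\alpha^{2m}/m = \log\frac{1}{1-\alpha^2}$) give exactly the stated constants.
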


\SpaceComplexity*

\begin{proof}[Proof of Lemma~\ref{lem:space_complexity}]
Throughout this proof, we use only results from \citet{Andersson2025}, and thus any reference to a theorem or lemma should be understood as coming from that work. We use their Lemma 7, which lower-bounds the space complexity of a Toeplitz matrix using the rank of a corresponding Hankel matrix of its coefficients. The matrix $C_{\alpha, \beta}^{-1}$ has a generating function $f = \sqrt{(1 - \alpha x)(1 - \beta x)}$. The proof of their Corollary 16 implies that the Hankel matrix $H[f]$ has corank at most 3. Thus, Lemma 7 implies that $C^{-1}_{\alpha, \beta}$ has space complexity at least $\frac{n + 1}{2} - 3$. For the matrix $(C^p_{\alpha, \beta})^{-1}$, the generating function is a rational function of degree $p$; therefore, for $n \ge 2p - 1$, their Theorem 2 implies that the space complexity is exactly $p$, concluding the proof.
\end{proof}

\LowerBound*
\begin{proof}[Proof of Theorem~\ref{thm:lowerbound}]
We use the probabilistic method in Lemma~\ref{lem:k_b_lower_bound_new} to obtain the bounds $\Omega_{\alpha}(\sqrt{k})$ for $\alpha < 1$ and $\mathcal{E}(B,C) = \Omega(\sqrt{k} \log n)$ for $\alpha = 1$. It remains to prove that for $\alpha = 1$, we also have the lower bound $\mathcal{E}(B,C) = \Omega(k)$.

We begin with the following observation: given a matrix $C$, we can compute an optimal participation scheme represented by a vector with ones at positions corresponding to participating columns, denoted by $\pi^*_C$. As a lower bound, we consider a specific participation vector $\pi_1$, with ones in columns indexed by $1 + ib$ for $i \in [0, k - 1]$, such that $|\pi_1| = k$. By construction, we have $\senskb(C) := \|C\pi^*_C\|_2 \ge \|C\pi_1\|_2$. Therefore, the error can be bounded as follows:
\begin{equation}
    \mathcal{E}(B,C) = \frac{1}{\sqrt{n}} \|B\|_F \senskb(C) 
    \ge \frac{1}{\sqrt{n}} \|B\|_F \|C\pi_1\|_2 
    \ge \frac{1}{\sqrt{n}} \|BC\pi_1\|_2 
    = \frac{1}{\sqrt{n}} \|A_{1, \beta} \pi_1\|_2.
\end{equation}

As a lower bound, we consider $\beta = 0$ as $\|A_{1, \beta}\pi_1\|_2 \ge \|A_{1, 0}\pi_1\|_2$. The elements of the matrix $A_{1, 0}$ are positive and non-increasing. Therefore, by Theorem~\ref{thm:b-sensitivity}, the $(k, b)$-sensitivity of $A_{1, 0}$ is exactly $\|A_{1, 0} \pi_1\|_2$. By Theorem~9 from \citet{kalinin2024}, this sensitivity is at least $\frac{k \sqrt{n}}{\sqrt{3}}$, resulting in the lower bound:
\begin{equation}
    \mathcal{E}(B,C) \ge \frac{k}{\sqrt{3}} = \Omega(k),
\end{equation}
which concludes the proof.
\end{proof}

\begin{lemma}
\label{lem:k_b_lower_bound_new}
    Let $A_{\alpha, \beta} \in \R^{n \times n}$ be the SGD workload matrix~\eqref{eq:A_alpha_beta}.
    In the multi-participation setting with separation $1 \le b \leq n$ and $k = \lceil\frac{n}{b}\rceil$,
    for any factorization $A_{\alpha, \beta} = BC$, it holds that
    \begin{equation}
        \mathcal{E}(B, C) = \begin{cases} \Omega(\sqrt{k} \log n), &\alpha = 1\\
        \Omega(\sqrt{k}), &\alpha < 1
        \end{cases}
    \end{equation}
\end{lemma}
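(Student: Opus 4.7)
The plan is a probabilistic-method argument that combines a random signed participation pattern with a nuclear-norm lower bound on $\|B\|_F\|C\|_F$. First I would observe that for any factorization $A_{\alpha,\beta}=BC$ the sensitivity dominates the second moment under any distribution over signed patterns:
\begin{equation*}
\senskb(C)^2 \;\ge\; \E\bigl[\|C\tilde\pi\|_2^2\bigr]
\end{equation*}
for any random $\tilde\pi\in\{-1,0,1\}^n$ whose support is an element of $\Pi_{k,b}$. This is because an adversary can choose $X-X'$ to be a rank-one matrix proportional to $\tilde\pi$, so such signed vectors are legitimate witnesses to the sensitivity (the absolute values appearing in~\eqref{eq:sens_k_b} amount exactly to the worst case over signs).

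For the distribution, I would take $\tilde\pi \;=\; \sum_{j=0}^{k-1}\epsilon_j\, e_{s+jb}$ with a uniform shift $s\sim\mathrm{Unif}\{1,\dots,b\}$ and independent Rademachers $\epsilon_j\sim\mathrm{Unif}\{\pm 1\}$. Sign independence kills all off-diagonal entries of $\E[\tilde\pi\tilde\pi^\top]$, while the uniform shift makes each index of $\{1,\dots,n\}$ appear in the support with probability exactly $1/b=k/n$ when $n=kb$, so $\E[\tilde\pi\tilde\pi^\top] = (k/n)\,I_n$. Thus
\begin{equation*}
\senskb(C)^2 \;\ge\; \tfrac{k}{n}\,\trace(C^\top C) \;=\; \tfrac{k}{n}\,\|C\|_F^2,
\end{equation*}
and combining with the standard duality $\|A_{\alpha,\beta}\|_* = \|BC\|_*\le\|B\|_F\|C\|_F$ gives
\begin{equation*}
\mathcal{E}(B,C)^2\cdot n \;=\; \|B\|_F^2\,\senskb(C)^2 \;\ge\; \tfrac{k}{n}\,\|A_{\alpha,\beta}\|_*^2.
\end{equation*}

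It remains to lower-bound $\|A_{\alpha,\beta}\|_*$ in each regime. For $\alpha<1$ I would use the elementary inequality $\|M\|_*\ge|\trace(M)|$: since the diagonal of $A_{\alpha,\beta}$ consists of ones, $\|A_{\alpha,\beta}\|_*\ge n$, which yields $\mathcal{E}=\Omega(\sqrt{k})$. For $\alpha=1$ I would reduce to the prefix-sum case via the identity $(1-\beta)\,A_{1,\beta} \;=\; A_{1,0}-\beta\,A_{\beta,0}$; the triangle inequality for $\|\cdot\|_*$, combined with the classical asymptotics $\|A_{1,0}\|_* = \Theta(n\log n)$ (from the singular values $\sigma_k\asymp n/k$ of the lower-triangular all-ones matrix) and the trivial bound $\|A_{\beta,0}\|_* = O_\beta(n)$ for $\beta<1$, gives $\|A_{1,\beta}\|_* = \Omega(n\log n)$, and hence $\mathcal{E}=\Omega(\sqrt{k}\log n)$. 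The hard part will be executing the covariance computation for the signed, shifted $\tilde\pi$ cleanly, in particular the boundary adjustment when $n\neq kb$ (restrict the shift range to $\{1,\dots,n-(k-1)b\}$, losing only constants), and extracting the prefix-sum singular-value asymptotics with explicit constants so that the subtraction against $\beta\,A_{\beta,0}$ still leaves a $\Theta(n\log n)$ remainder.
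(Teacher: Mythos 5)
Your overall route coincides with the paper's: both proofs lower-bound the multi-participation sensitivity by an averaging (probabilistic) argument to get $\senskb(C)^2\gtrsim \|C\|_F^2/b$, then combine $\|B\|_F\|C\|_F\ge\|BC\|_*=\|A_{\alpha,\beta}\|_*$ with a lower bound on the workload norm. The differences are in the realization. For the averaging step the paper partitions the columns into the $b$ residue classes modulo $b$, uses independent Bernoulli$(1/2)$ inclusion within a class to cancel the cross terms, and takes the heaviest class, obtaining the constant $1/(4b)$; your Rademacher signs plus a uniform shift achieve the same cancellation with constant $1/b$, and your primary justification — that signed rank-one differences $X-X'$ supported on a $b$-separated set are admissible witnesses for the sensitivity — is the right one (the parenthetical claim that the absolute values in~\eqref{eq:sens_k_b} equal the worst case over signs is not exact, it is only an upper bound, but you do not need it). For the workload norm the paper simply cites Lemma~8 of \citet{kalinin2024} ($\Omega(n\log n)$ for $\alpha=1$, $\Omega(n)$ for $\alpha<1$; the quantity used there is the trace norm, which matches your duality step), whereas you give self-contained arguments: the bound $\|A_{\alpha,\beta}\|_*\ge \trace(A_{\alpha,\beta})=n$ is a genuinely simpler derivation of the $\Omega(\sqrt{k})$ case, and the identity $(1-\beta)A_{1,\beta}=A_{1,0}-\beta A_{\beta,0}$, combined with the prefix-sum singular values $\sigma_j\asymp n/j$ and $\|A_{\beta,0}\|_*=O_\beta(n)$, correctly yields $\|A_{1,\beta}\|_*=\Omega(n\log n)$ and hence $\Omega(\sqrt{k}\log n)$.

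One concrete repair is needed: your boundary adjustment for $b\nmid n$ — restricting the shift to $\{1,\dots,n-(k-1)b\}$ — does not lose "only constants." Columns whose residue modulo $b$ exceeds $n-(k-1)b$ are then never sampled, so the expectation captures only part of $\|C\|_F^2$, and an adversarial factorization can concentrate the mass of $C$ on exactly those columns. Instead, let the shift $s$ range over all of $\{1,\dots,b\}$ and take the support to be the truncated progression $\{s,s+b,s+2b,\dots\}\cap[1,n]$, which has size at most $\lceil n/b\rceil=k$ and is still $b$-separated; then every column is hit with probability exactly $1/b$, so $\senskb(C)^2\ge\|C\|_F^2/b$ for all $n$ — this is precisely the residue-class device the paper uses.
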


\begin{proof}
Here, we refine Theorem~8 from \citet{kalinin2024} by removing the assumption that the scalar products between the columns of the matrix $C$ are non-negative, \ie, $C^{\top}C \ge 0$. We first prove that
\begin{equation}
    \senskb(C)^2 \ge \frac{1}{4b} \|C\|_F^2.
\end{equation}

To do so, we lower bound the $b$-min separation participation by the $(k, b)$-participation, where we have a fixed $b$ separation between vectors but are allowed to include only a subset of them. This splits the set of all column indices into $b$ disjoint subsets $\mathcal{S}_j$ for $j \in [1, b]$ with $|S_j| \le k$. Then, the following inequality holds:
    \begin{equation}
        \senskb(C)^2 \ge \max_{j \in [1, b]} \sup_{S \subseteq \mathcal{S}_j} \left\| \sum_{i \in S} C_{[:, i]} \right\|_2^2,
    \end{equation}
    where $C_{[:, i]}$ denotes the $i$-th column of the matrix $C$. 

    To prove a lower bound, we use the probabilistic method. Consider i.i.d. random variables $\epsilon_i \sim \text{Bernoulli}(\frac{1}{2})$. Then:
    \begin{equation}
    \begin{aligned}
        \sup_{S \subseteq \mathcal{S}_j} \Big\| \sum_{i \in S} C_{[:, i]} \Big\|_2^2 &\ge \mathbb{E} \Big\| \sum_{i \in \mathcal{S}_j} C_{[:, i]} \epsilon_i \Big\|_2^2 = \frac{1}{2} \sum_{i \in \mathcal{S}_j} \|C_{[:, i]}\|_2^2 + \frac{1}{4} \sum_{\substack{i \ne i' \\ i, i' \in \mathcal{S}_j}} \langle C_{[:, i]}, C_{[:, i']} \rangle \\
        &= \frac{1}{4} \sum_{i \in \mathcal{S}_j} \|C_{[:, i]}\|_2^2 + \frac{1}{4} \Big\| \sum_{i \in \mathcal{S}_j} C_{[:, i]} \Big\|_2^2 \ge \frac{1}{4} \sum_{i \in \mathcal{S}_j} \|C_{[:, i]}\|_2^2.
    \end{aligned}
    \end{equation}

    Thus,
    \begin{equation}
        \senskb(C)^2 \ge \max_{j \in [1, b]} \frac{1}{4} \sum_{i \in \mathcal{S}_j} \|C_{[:, i]}\|_2^2 \ge \frac{1}{4b} \sum_{i = 1}^{n} \|C_{[:, i]}\|_2^2 = \frac{1}{4b} \|C\|_F^2.
    \end{equation}

Therefore,
\begin{equation}
\begin{aligned}
    \mathcal{E}(B, C) &= \frac{1}{\sqrt{n}} \|B\|_F \senskb(C) 
    \ge \frac{1}{2\sqrt{nb}} \|B\|_F \|C\|_F 
    \ge \frac{1}{2\sqrt{nb}} \|BC\|_* = \frac{1}{2\sqrt{nb}} \|A_{\alpha, \beta}\|_*.
\end{aligned}
\end{equation}
The nuclear norm of the matrix $A_{\alpha, \beta}$ has been lower bounded in Lemma 8 of \citet{kalinin2024} by $\Omega(n \log n)$ for $\alpha = 1$, and by $\Omega(n)$ for $\alpha < 1$. Substituting $k = \lceil \frac{n}{b} \rceil$ concludes the proof.
\end{proof}

\inversesquareroot*
\begin{proof}

The matrix for the momentum matrix is given by:
\begin{equation}
    A_{\alpha, \beta} = A_{\alpha, 0} \times A_{\beta, 0} = \begin{pmatrix}
        1 & 0 &  \cdots & 0 \\
        \alpha & 1 & \cdots & 0 \\
        \vdots & \vdots & \ddots & \vdots \\
        \alpha^{n - 1} & \alpha^{n-2} & \cdots & 1
    \end{pmatrix} \times \begin{pmatrix}
        1 & 0 &  \cdots & 0 \\
        \beta & 1 &  \cdots & 0 \\
        \vdots & \vdots & \ddots & \vdots \\
        \beta^{n - 1} & \beta^{n-2}  & \cdots & 1
    \end{pmatrix}.
\end{equation}

The inverse square root then takes the form:
\begin{equation}
    C^{-1}_{\alpha, \beta} = C^{-1}_{\alpha, 0} \times C^{-1}_{\beta, 0},
\end{equation}
since all lower triangular Toeplitz (LTT) matrices commute (see \citet{strang1986proposal} or \citet{Toeplitz}). Therefore, it suffices to prove that the inverse square root of the matrix $C_{\alpha, 0}^{-1}$ is a lower triangular Toeplitz matrix with elements $\tilde{r}_i \alpha^{i}$, which would imply the stated formula for $\tilde{c}_k^{\alpha, \beta}$, since the product of LTT matrices is given by the convolution of their elements.

The proof for the matrix $C_{\alpha, 0}^{-1}$ is based on the identities of the generating functions of the sequences $ r_k$ and $\tilde{r}_k$, derived simultaneously using the binomial formula:
\begin{equation}
\begin{aligned}
(1 - \alpha x)^{-1/2} &= \sum\limits_{k = 0}^{\infty} \binom{-1/2}{k}(-1)^k\alpha^k x^k = \sum\limits_{k = 0}^{\infty} r_k \alpha^k x^k, \\
(1 - \alpha x)^{1/2} &= \sum\limits_{k = 0}^{\infty} \binom{1/2}{k}(-1)^k \alpha^k x^k = \sum\limits_{k = 0}^{\infty} \tilde{r}_k \alpha^k x^k.
\end{aligned}
\end{equation}
Then the generating function of the product of the matrices $C_{\alpha}$ and the proposed $C_{\alpha}^{-1}$ is given by:

\begin{equation}
\sum\limits_{n = 0}^{\infty} x^n \left[\sum\limits_{k = 0}^{n} r_{k}\alpha^k \tilde{r}_{n - k} \alpha^{n - k}\right] = (1 - \alpha x)^{1/2} \times (1 - \alpha x)^{-1/2} = 1,
\end{equation}
implying that $\tilde{r}_i\alpha^{i}$ are indeed the coefficients of $C_{\alpha}^{-1}$, which concludes the proof. 
\end{proof}

\begin{lemma}
[Bounds on diagonal entries of ${C}^{-1}_{1,\beta}$]
\label{lem:c_beta_bounds}
The diagonal elements of the inverse square root of the momentum matrix $C^{-1}_{1, \beta}$ defined in equation~\eqref{eq:C_inv_sqrt} with parameter $0 \leq \beta < 1$, denoted as $(1, \tilde{c}^{1, \beta}_1, \tilde{c}^{1, \beta}_2, \dots, \tilde{c}^{1, \beta}_{n - 1})$, satisfy the following inequality:
\begin{equation}
    \tilde{r}_k(1 + \beta) \leq \tilde{c}^{1, \beta}_k \leq 0, \quad \text{for } k \geq 1.
\end{equation}
\end{lemma}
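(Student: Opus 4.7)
The plan is to handle the two inequalities in the statement by independent arguments that exploit different structural properties. The lower bound will follow almost immediately from the closed-form convolution in Lemma \ref{lem:inv_sqrt_A_alpha_beta}, whereas the upper bound, which I expect to be the main obstacle, will be proved by using the identity $f(x)^2 = (1-x)(1-\beta x)$, where $f(x) = \sum_{k \ge 0} \tilde{c}_k^{1,\beta}\,x^k$.

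For the lower bound, I would separate the $j=0$ and $j=k$ endpoints of the convolution:
\begin{equation*}
\tilde{c}_k^{1,\beta} \;=\; \sum_{j=0}^{k} \tilde{r}_j\,\beta^j\,\tilde{r}_{k-j} \;=\; \tilde{r}_k\bigl(1+\beta^k\bigr) + \sum_{j=1}^{k-1}\tilde{r}_j\,\tilde{r}_{k-j}\,\beta^j.
\end{equation*}
Since $\tilde{r}_0 = 1$ and $\tilde{r}_j \le 0$ for $j \ge 1$, each product $\tilde{r}_j\,\tilde{r}_{k-j}$ with $1 \le j \le k-1$ is non-negative, so the cross-term contributes non-negatively and $\tilde{c}_k^{1,\beta} \ge \tilde{r}_k(1+\beta^k)$. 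The stated bound $\tilde{r}_k(1+\beta) \le \tilde{c}_k^{1,\beta}$ then follows from $\tilde{r}_k \le 0$ combined with $\beta^k \le \beta$ for $k \ge 1$ and $\beta \in [0,1)$.

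For the upper bound $\tilde{c}_k^{1,\beta} \le 0$, I would square the generating function and equate coefficients. Since $f(x)^2 = (1-x)(1-\beta x) = 1 - (1+\beta)x + \beta x^2$ is a polynomial of degree two, the coefficient identity $\sum_{j=0}^k \tilde{c}_j^{1,\beta}\,\tilde{c}_{k-j}^{1,\beta} = 0$ holds for every $k \ge 3$, which rearranges to
\begin{equation*}
\tilde{c}_k^{1,\beta} \;=\; -\tfrac{1}{2}\sum_{j=1}^{k-1}\tilde{c}_j^{1,\beta}\,\tilde{c}_{k-j}^{1,\beta}, \qquad k \ge 3.
\end{equation*}
After directly verifying the two base cases $\tilde{c}_1^{1,\beta} = -(1+\beta)/2$ and $\tilde{c}_2^{1,\beta} = -(1-\beta)^2/8$ (which are immediate from the $k=1,2$ coefficient identities and are both clearly non-positive), a strong induction closes the argument: if $\tilde{c}_j^{1,\beta} \le 0$ for all $1 \le j \le k-1$, then every product in the recurrence sum is non-negative, so $\tilde{c}_k^{1,\beta} \le 0$.

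The main obstacle I foresee is avoiding sign-cancellation traps. A direct expansion of $f(x) = (1 + A(x))(1 + B(x))$ with $A(x), B(x)$ the non-positive parts of $(1-x)^{1/2}$ and $(1-\beta x)^{1/2}$ produces a cross-term $A(x)B(x)$ whose non-negative coefficients compete with the non-positive $A + B$, and the Cauchy-inverse identity $\sum_{j=0}^k c_j^{1,\beta}\,\tilde{c}_{k-j}^{1,\beta} = \delta_{k,0}$ fails to close by induction for the same reason. The crucial observation is that $f^2$ is a polynomial of finite degree rather than an arbitrary power series, which converts the proof into a self-contained sign-preserving recurrence once $k=1$ and $k=2$ are handled by hand.
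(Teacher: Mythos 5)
Your proposal is correct. The lower bound is argued exactly as in the paper: split off the $j=0$ and $j=k$ endpoint terms of the convolution, note that the cross-terms $\tilde{r}_j\tilde{r}_{k-j}\beta^j$ ($1\le j\le k-1$) are non-negative, and use $1+\beta^k\le 1+\beta$ together with $\tilde{r}_k\le 0$. For the upper bound, however, you take a genuinely different route. The paper proves a monotonicity statement (its Proposition on the diagonal elements of $C^{-1}_{1,\beta}$): it differentiates $\tilde{c}^{1,\beta}_k$ with respect to $\beta$, shows via the auxiliary identity $\tilde{r}_j j = \tfrac{1}{2}(\tilde{r}_j - r_j)$ that the derivative is non-negative, and then concludes $\tilde{c}^{1,\beta}_k\le 0$ from the boundary value $\tilde{c}^{1,1}_k=0$ for $k\ge 2$ (with $k=1$ checked directly). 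You instead exploit that $f(x)=\sum_k \tilde{c}^{1,\beta}_k x^k$ satisfies $f(x)^2=(1-x)(1-\beta x)$, a quadratic polynomial, so that for $k\ge 3$ the Cauchy-product identity collapses to the sign-preserving recurrence $\tilde{c}^{1,\beta}_k=-\tfrac12\sum_{j=1}^{k-1}\tilde{c}^{1,\beta}_j\tilde{c}^{1,\beta}_{k-j}$, and a strong induction from the base cases $\tilde{c}^{1,\beta}_1=-(1+\beta)/2$ and $\tilde{c}^{1,\beta}_2=-(1-\beta)^2/8$ (both easily verified and non-positive) finishes the claim. Your argument is more elementary — no differentiation, no auxiliary identity for $\tilde{r}_j j$, no interchange of limiting cases — and it exposes a clean structural fact (the quadratic recurrence) that makes the sign pattern transparent. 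What the paper's approach buys in exchange is the slightly stronger qualitative statement that $\tilde{c}^{1,\beta}_k$ increases monotonically in $\beta$ from $\tilde{r}_k$ to $0$, which is more information than the two-sided bound alone, though it is not needed elsewhere for this lemma. Either proof is complete and correct.
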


\begin{proof}
The values $\tilde{c}^{1, \beta}_k$ are given by the convolution of $\tilde{r}_k$ and $\beta^k \tilde{r}_k$:
\begin{equation}
\begin{aligned}
    \tilde{c}^{1, \beta}_k = \sum\limits_{j = 0}^{k} \tilde{r}_j \tilde{r}_{k - j} \beta^j = (1 + \beta^k)\tilde{r}_k + \sum\limits_{j = 1}^{k-1} \tilde{r}_j \tilde{r}_{k - j} \beta^j.
\end{aligned}
\end{equation}
Since $\tilde{r}_j$ is negative for $j \geq 1$, the summation term is positive. Furthermore, $1 + \beta^k \leq 1 + \beta$, and since $\tilde{r}_k$ is negative, we obtain the lower bound:
\begin{equation}
    \tilde{c}^{1, \beta}_k \geq \tilde{r}_k (1 + \beta).
\end{equation}
This bound is tight for $k = 1$ as $\tilde{c}^{1, \beta}_1 = - \frac{1 + \beta}{2}$.

For the upper bound, we first consider the special cases. When $\beta = 0$, we have $\tilde{c}^{1, 0}_k = \tilde{r}_k < 0$. For $\beta = 1$, we formally obtain:
\begin{equation}
    \tilde{c}^{1, 1}_k = \sum\limits_{j = 0}^{k} \tilde{r}_j \tilde{r}_{k - j} = 
    \begin{cases}
        \;\;1, & k = 0,\\
        -1, & k = 1,\\
        \;\;0, & \text{otherwise}.
    \end{cases}
\end{equation}
This follows from the observation that $C_1^{-1} \times C_1^{-1} = A_1^{-1}$, which has the structure described in the equation.

Since the inequality holds for $k = 1$, we now consider $k \geq 2$, where $\tilde{c}^{1}_k = 0$. We show the following, which establishes the upper bound:
\begin{proposition}
[Monotonicity of diagonal elements of $C_{1,\beta}^{-1}$]
\label{prop:tildec}
    Let $\tilde{c}^{1, \beta}_k$ be the diagonal elements of $C_{1,\beta}^{-1}$  defined in equation~\eqref{eq:C_inv_sqrt}. Then $\tilde{c}^{1, \beta}_k$ is an increasing function of $\beta$, varying from $\tilde{r}_k$ at $\beta = 0$ to $0$ at $\beta = 1$.
\end{proposition}
\begin{proof}
To do so, we differentiate $\tilde{c}^{1, \beta}_k$ with respect to $\beta$:
\begin{equation}
\label{eq:differentiation_tildec_k_1beta}
    \frac{d\tilde{c}^{1, \beta}_k}{d\beta} = k\tilde{r}_k \beta^{k - 1} + \sum\limits_{j = 1}^{k-1} \tilde{r}_j \tilde{r}_{k - j} j\beta^{j-1}= \beta^{k-1} \left(k\tilde{r}_k  + \sum\limits_{j = 1}^{k-1} \tilde{r}_j \tilde{r}_{k - j} j\beta^{j-k} \right).
\end{equation}

To prove that this expression is positive, we analyze the term in brackets. As $\beta \to 0$, the term tends to positive infinity since $\tilde{r}_j\tilde{r}_{k - j}$ are positive and $j - k$ is negative. Moreover, this term is decreasing as $\beta \to 1$, so it suffices to check its non-negativity at $\beta = 1$, \ie,
\begin{equation}
\frac{d\tilde{c}^{1, \beta}_k}{d\beta} \bigg|_{\beta = 1} \geq 0
\end{equation}

Setting $\beta=1$ in equation~\eqref{eq:differentiation_tildec_k_1beta}, we have 
\begin{equation}
\label{eq:derivative_tildec}
    \frac{d\tilde{c}^{1, \beta}_k}{d\beta} \bigg|_{\beta = 1} = k\tilde{r}_k  + \sum\limits_{j = 1}^{k-1} \tilde{r}_j \tilde{r}_{k - j} j.
\end{equation}
To show this, we use an auxiliary identity for the values $\tilde{r}_j j$:

\begin{equation}
\label{eq:tilde_r_j_times_j}
    \tilde{r}_j j = -\frac{r_j j}{2j - 1} = \frac{-1}{2} \bigg(r_j + \frac{r_j}{2j - 1} \bigg) = \frac{-r_j}{2} + \frac{\tilde{r}_j}{2}.
\end{equation}

Using the identity \eqref{eq:tilde_r_j_times_j} in equation~\eqref{eq:derivative_tildec}, we obtain:
\begin{equation}
\begin{aligned}
    \frac{d(\tilde{c}^{1, \beta}_k)}{d\beta} \bigg|_{\beta = 1} &= \frac{1}{2} \tilde{r}_k - \frac{1}{2} r_k  + \frac{1}{2} \sum\limits_{j = 1}^{k-1} \tilde{r}_j \tilde{r}_{k - j} - \frac{1}{2} \sum\limits_{j = 1}^{k-1} r_j \tilde{r}_{k - j} \\
    &= \frac{1}{2} \sum\limits_{j = 0}^{k} \tilde{r}_j \tilde{r}_{k - j} - \frac{1}{2} \sum\limits_{j = 0}^{k} r_j \tilde{r}_{k - j} = 0.
\end{aligned}
\end{equation}

Since both sums vanish for $k \geq 2$, this concludes the proof of Proposition~\ref{prop:tildec}.
\end{proof}
This completes the proof of the lemma.
\end{proof}

\begin{restatable}[Decresing values]{lemma}{decreasingvalueslemma}
\label{lem:monotonically_decreasing_values}
The values $(1, c_{1}^{\alpha, \beta}, \dots, c_{p - 1}^{\alpha, \beta}, g^{\alpha, \beta}_{p}, \dots, g^{\alpha, \beta}_{n - 1})$ of matrix $C_{\alpha, \beta}^p$ as defined in Lemma~\ref{lem:c_beta_p_bounds} are decreasing.
\end{restatable}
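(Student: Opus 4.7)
The plan is to split the claim into the \emph{head} values $s_k := c_k^{\alpha,\beta}$ for $0 \le k \le p-1$ and the \emph{tail} values $s_k := g_k^{\alpha,\beta}$ for $k \ge p$, and to connect them via the convolution identity arising from $(C^p_{\alpha,\beta})^{-1} C^p_{\alpha,\beta} = I$. Writing $t_0 = 1$, $t_j = \tilde{c}_j^{\alpha,\beta}$ for $1 \le j \le p-1$, and $t_j = 0$ for $j \ge p$, the Toeplitz identity becomes the scalar recurrence
\begin{equation*}
s_i = -\sum_{j=1}^{\min(i,p-1)} t_j\,s_{i-j} \quad \text{for } i \ge 1.
\end{equation*}
The strategy has three stages: prove the head values are monotone; show $t_j \le 0$ so the recurrence expresses each $s_i$ as a non-negative combination of prior entries; extend monotonicity to the tail by induction.

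For the head, the factorization $c_k^{\alpha,\beta} = \alpha^k c_k^{1,\beta/\alpha}$ reduces matters to $\alpha = 1$. The generating function of the first differences $c_k^{1,\beta} - c_{k-1}^{1,\beta}$ is $\phi(x) := ((1-x)/(1-\beta x))^{1/2}$, whose derivative is $\phi'(x) = \tfrac{\beta-1}{2}(1-x)^{-1/2}(1-\beta x)^{-3/2}$. Since this is a non-positive constant times a power series with non-negative coefficients, $\phi'$ has all Taylor coefficients $\le 0$, and integrating with $\phi(0) = 1$ gives $[x^k]\phi(x) \le 0$ for $k \ge 1$, i.e.\ $c_k^{1,\beta} \le c_{k-1}^{1,\beta}$. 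Multiplying by $\alpha^k \le \alpha^{k-1}$ for $\alpha \in (0,1]$ preserves monotonicity in the general case.

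For the sign of $t_j$, expanding $((1-\alpha x)(1-\beta x))^{1/2} = (1-\alpha x)^{1/2}(1-\beta x)^{1/2}$ via $(1-\alpha x)^{1/2} = 1 - \sum_{i \ge 1}|\tilde{r}_i|\alpha^i x^i$ gives, for $k \ge 1$,
\begin{equation*}
\tilde{c}_k^{\alpha,\beta} = -|\tilde{r}_k|(\alpha^k + \beta^k) + \sum_{\substack{i+j=k\\ i,j\ge 1}} |\tilde{r}_i||\tilde{r}_j|\,\alpha^i\beta^j.
\end{equation*}
Two ingredients combine to bound the right-hand side. Squaring $1 - (1-x)^{1/2}$ yields the convolution identity $\sum_{i+j=k,\,i,j\ge 1}|\tilde{r}_i||\tilde{r}_j| = 2|\tilde{r}_k|$ for $k \ge 2$. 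Separately, the rearrangement inequality $\alpha^i\beta^j + \alpha^j\beta^i \le \alpha^k + \beta^k$ for $i+j = k$ follows from $(t^i-1)(t^{k-i}-1) \ge 0$ with $t = \max(\alpha,\beta)/\min(\alpha,\beta) \ge 1$ (the case $\min = 0$ is trivial). Pairing the symmetric terms $(i,j)\leftrightarrow(j,i)$ and combining both facts bounds the positive sum by $|\tilde{r}_k|(\alpha^k+\beta^k)$, so $\tilde{c}_k^{\alpha,\beta} \le 0$.

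With $t_j \le 0$ in hand, the recurrence for $i \ge p-1$ reads $s_i = \sum_{j=1}^{p-1}|t_j|\,s_{i-j}$; subtracting the instance at $i+1$ yields $s_i - s_{i+1} = \sum_{j=1}^{p-1}|t_j|\,(s_{i-j}-s_{i+1-j}) \ge 0$ whenever head monotonicity and the inductive hypothesis on tail differences both hold. The base case $s_{p-1} \ge s_p$ follows from the same subtraction at $i = p-1$ and $i = p$, whose right-hand sides involve only head values. The main obstacle is the sign claim $\tilde{c}_k^{\alpha,\beta} \le 0$: the convolutional expression mixes negative and positive contributions, so it requires the algebraic cancellation provided by the squaring identity together with the rearrangement bound rather than any termwise estimate.
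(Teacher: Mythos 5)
Your proof is correct, and its core is the same as the paper's: you derive the difference recurrence $s_i - s_{i+1} = \sum_{j=1}^{p-1}(-\tilde{c}_j^{\alpha,\beta})(s_{i-j}-s_{i+1-j})$ from the Toeplitz identity $(C^p_{\alpha,\beta})^{-1}C^p_{\alpha,\beta}=I$ and run the same induction, with the base case resting on head monotonicity and the key sign fact $\tilde{c}_j^{\alpha,\beta}\le 0$. Where you genuinely diverge is in how you establish the two ingredients. For the head, the paper simply cites \citet{kalinin2024}, while you give a short self-contained generating-function argument ($\phi(x)=\sqrt{(1-x)/(1-\beta x)}$ has $\phi'$ with non-positive coefficients). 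For the sign of the inverse coefficients, the paper's Lemma~\ref{lem:c_beta_bounds} proves $\tilde{c}_k^{1,\beta}\le 0$ by differentiating in $\beta$, showing monotonicity, and evaluating at $\beta=1$, then rescales via $\tilde{c}_k^{\alpha,\beta}=\alpha^k\tilde{c}_k^{1,\beta/\alpha}$; you instead work with general $(\alpha,\beta)$ directly, combining the squaring identity $\sum_{i+j=k,\,i,j\ge1}|\tilde{r}_i||\tilde{r}_j|=2|\tilde{r}_k|$ (for $k\ge2$) with the rearrangement bound $\alpha^i\beta^j+\alpha^j\beta^i\le\alpha^k+\beta^k$, equivalently $(\alpha^i-\beta^i)(\alpha^j-\beta^j)\ge0$. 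Your route is more elementary and avoids both the citation and the derivative computation; the paper's version earns its extra work by also delivering the quantitative lower bound $\tilde{r}_k(1+\beta)\le\tilde{c}_k^{1,\beta}$, which is reused later (e.g., in the proof of Lemma~\ref{lem:c_beta_p_bounds}), whereas your argument yields only the sign — exactly what this lemma needs.
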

\begin{proof}
The first $p$ values are decreasing, as shown in \citet{kalinin2024}. For the remaining values, we prove that
\begin{equation} 
    g_{p + k}^{\alpha, \beta} - g_{p + k + 1}^{\alpha, \beta} = \sum\limits_{j = 1}^{p - 1} (-\tilde{c}^{\alpha, \beta}_j)(g^{\alpha, \beta}_{p + k - j} - g^{\alpha, \beta}_{p + k + 1 - j}) \ge 0. 
\end{equation} 
In Lemma~\ref{lem:c_beta_bounds}, we prove that $(-\tilde{c}^{\alpha, \beta}_j) \ge 0$, so each term in the summation is non-negative. Since the differences $(g^{\alpha, \beta}_{i} - g^{\alpha, \beta}_{i+1})$ are also non-negative by the induction step, the inequality follows, completing the proof.
\end{proof}

\begin{lemma}
[Bound on the matrix diagonal sum of $C^{-1}_{1, \beta}$]
\label{lem:c_beta_sum_bounds}
The diagonal elements of the inverse square root of the momentum matrix $C^{-1}_{1, \beta}$ defined in equation~\eqref{eq:C_inv_sqrt} with parameter $0 \leq \beta < 1$, denoted as $(1, \tilde{c}^{1, \beta}_1, \tilde{c}^{1, \beta}_2, \dots, \tilde{c}^{1, \beta}_{n - 1})$, satisfy the following inequality:
\begin{equation}
    r_k (1 - \beta) \leq \sum\limits_{j =0}^{k} \tilde{c}^{1, \beta}_j \leq c_k^{1, \beta} (1 - \beta), \quad \text{for } k \geq 1.
\end{equation}
Here $\tilde c_i^{1,\beta}$ is as defined by equation~\eqref{eq:C_inv_sqrt}. 
\end{lemma}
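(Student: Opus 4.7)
The plan is to use generating functions to obtain a closed form for the partial sums $S_k := \sum_{j=0}^{k} \tilde{c}^{1,\beta}_j$, and then to exploit two different product decompositions of that generating function to handle the upper and lower bounds separately.

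By Lemma~\ref{lem:inv_sqrt_A_alpha_beta}, the sequence $(\tilde{c}^{1,\beta}_k)$ is the convolution of $(\tilde{r}_k)$ and $(\beta^k \tilde{r}_k)$, whose generating functions are $\sqrt{1-x}$ and $\sqrt{1-\beta x}$ (by the binomial identities used in that lemma's proof). Hence $\sum_k \tilde{c}^{1,\beta}_k x^k = \sqrt{(1-x)(1-\beta x)}$. Multiplying by $\frac{1}{1-x}$, which converts a sequence into its sequence of partial sums, I obtain $\sum_k S_k x^k = \sqrt{\tfrac{1-\beta x}{1-x}}$.

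For the upper bound, I factor $\sqrt{\tfrac{1-\beta x}{1-x}} = \tfrac{1-\beta x}{\sqrt{(1-x)(1-\beta x)}}$ and recall that $c_k^{1,\beta}$ has generating function $\tfrac{1}{\sqrt{(1-x)(1-\beta x)}}$. Reading off coefficients yields the clean identity $S_k = c_k^{1,\beta} - \beta c_{k-1}^{1,\beta}$ for $k \ge 1$. Combining this with the monotonicity $c_k^{1,\beta} \le c_{k-1}^{1,\beta}$ (Key Inequality~6 in the appendix) gives $S_k \le c_k^{1,\beta} - \beta c_k^{1,\beta} = (1-\beta) c_k^{1,\beta}$, which is the right-hand inequality.

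For the lower bound, I factor the other way, $\sqrt{\tfrac{1-\beta x}{1-x}} = \sqrt{1-\beta x}\cdot \tfrac{1}{\sqrt{1-x}}$, which yields $S_k = \sum_{j=0}^k \tilde{r}_j \beta^j r_{k-j}$. Using $\tilde{r}_j = -r_j/(2j-1) \le 0$ for $j \ge 1$, the claim $S_k \ge (1-\beta) r_k$ rearranges to $\sum_{j=1}^k \tfrac{\beta^{j-1} r_j r_{k-j}}{2j-1} \le r_k$. Since every summand is non-negative and $\beta^{j-1} \in [0,1]$, it suffices to verify this at $\beta = 1$, where it becomes an equality: the identity $\sqrt{1-x}\cdot\tfrac{1}{\sqrt{1-x}} = 1$ forces the convolution $\sum_{j=0}^k \tilde{r}_j r_{k-j}$ to vanish for $k \ge 1$, which is equivalent to $\sum_{j=1}^k \tfrac{r_j r_{k-j}}{2j-1} = r_k$.

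The main obstacle — in fact the only nontrivial step — is spotting the decomposition $S_k = c_k^{1,\beta} - \beta c_{k-1}^{1,\beta}$, which is the bridge between the partial sums of the inverse-square-root coefficients and the square-root coefficients themselves. Once this identity and its sibling $S_k = \sum_j \tilde{r}_j \beta^j r_{k-j}$ are in hand, both bounds reduce to short algebraic manipulations backed by the two elementary facts that $c_k^{1,\beta}$ is non-increasing in $k$ and that $\sqrt{1-x}\cdot \tfrac{1}{\sqrt{1-x}} = 1$.
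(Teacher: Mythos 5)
Your proposal is correct, and your lower-bound argument is essentially the paper's own: the paper also writes $\sum_{j=0}^{k}\tilde{c}^{1,\beta}_j=\sum_{j=0}^{k}\tilde{r}_j\beta^j r_{k-j}$ (identity (1) of \eqref{eq:properties_tildec}), reduces the claim to $\sum_{j=1}^{k}\tfrac{\beta^{j-1}r_jr_{k-j}}{2j-1}\le r_k$ by monotonicity in $\beta$, and closes with $\sum_{j=0}^{k}\tilde{r}_jr_{k-j}=0$. Where you genuinely diverge is the upper bound. The paper uses two further generating-function identities ((2) and (3) of \eqref{eq:properties_tildec}) to show that $\sum_{j=0}^{k}\tilde{c}^{1,\beta}_j-c_k^{1,\beta}(1-\beta)=\beta^{k+1}\sum_{j=0}^{k}\tilde{r}_j\beta^{-j}r_{k-j}$, and then argues this is $\le 0$ because the bracketed sum increases in $\beta$ and vanishes at $\beta=1$. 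You instead extract the exact closed form $\sum_{j=0}^{k}\tilde{c}^{1,\beta}_j=c_k^{1,\beta}-\beta c_{k-1}^{1,\beta}$ from the factorization $\sqrt{\tfrac{1-\beta x}{1-x}}=(1-\beta x)\cdot\tfrac{1}{\sqrt{(1-x)(1-\beta x)}}$, and conclude via $c_{k-1}^{1,\beta}\ge c_k^{1,\beta}$. Your identity is verifiably correct (e.g.\ at $k=1$ both sides equal $\tfrac{1-\beta}{2}$), and it is arguably cleaner and slightly stronger, since it pinpoints the partial sum exactly rather than only bounding it; the trade-off is that it imports the monotonicity of $c_k^{1,\beta}$ in $k$, which you take from Lemma~\ref{lem:c_k_beta_mult_bound}. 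That import is legitimate and non-circular—the proof of Lemma~\ref{lem:c_k_beta_mult_bound} only manipulates the convolution defining $c_k^{1,\beta}$ and the decrease of $r_k$, and never invokes Lemma~\ref{lem:c_beta_sum_bounds}—whereas the paper's route is self-contained within its three generating-function identities. One cosmetic remark: your rearrangement of the lower bound divides by $\beta$, so state the (trivial, equality) case $\beta=0$ separately, exactly as the paper does.
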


\begin{proof}

We first state several properties of the sums of $\tilde{c}_{j}^{1, \beta}$:

\begin{equation}
\label{eq:properties_tildec}
\begin{split}
    (1) &\quad \sum\limits_{j =0}^{k} \tilde{c}^{1, \beta}_j = \sum\limits_{j = 0}^{k} \tilde{r}_{j}\beta^{j} r_{k - j}, \\ 
    (2) &\quad  \sum\limits_{j =0}^{k} \tilde{c}^{1, \beta}_j\beta^{k - j} = \sum\limits_{j = 0}^{k} r_{j}\beta^{j} \tilde{r}_{k - j}, \\
    (3) &\quad  \sum\limits_{j =0}^{k} \frac{\tilde{c}^{1, \beta}_j(1 - \beta^{k - j + 1})}{1 - \beta} = c_{k}^{1, \beta}
\end{split}
\end{equation}

which can be derived from equating the coefficients of the following generating function identities, respectively:
\begin{equation}
\begin{aligned}
    &(1)\quad \left[\sqrt{1 - x}\sqrt{1 - \beta x}\right] \times \frac{1}{1 -x} = \frac{\sqrt{ 1 - \beta x}}{\sqrt{1 - x}}\\
    &(2) \quad \left[\sqrt{1 - x}\sqrt{1 - \beta x}\right] \times \frac{1}{1 -\beta x} = \frac{\sqrt{ 1 - x}}{\sqrt{1 - \beta x}}\\
    &(3) \quad \left[\sqrt{1 - x}\sqrt{1 - \beta x}\right] \times \left[\frac{1}{1 -\beta x}\frac{1}{1 - x}\right] = \frac{1}{\sqrt{ 1 - x}\sqrt{1 - \beta x}}
\end{aligned}
\end{equation}
\paragraph{Upper Bound.} First, we rewrite the expression as follows by multiplying and dividing by $1-\beta$ the terms $\tilde{c}^{1, \beta}_j$:
\begin{equation}
\begin{aligned}
     \sum\limits_{j =0}^{k} \tilde{c}^{1, \beta}_j - c_k^{1, \beta} (1 - \beta) &= (1 -\beta)\sum\limits_{j = 0}^{k}\frac{\tilde{c}^{1, \beta}_j(1 -\beta^{k-j + 1} + \beta^{k - j + 1})}{1 - \beta} -c_k^{1, \beta}(1 - \beta)\\ 
     &= \beta \sum\limits_{j = 0}^{k}\tilde{c}^{1, \beta}_j\beta^{k-j}  =\beta\sum\limits_{j = 0}^{k} r_{j}\beta^{j} \tilde{r}_{k - j} =  \beta^{k + 1}\sum\limits_{j = 0}^{k} \tilde{r}_{j}\beta^{-j} r_{k - j},
\end{aligned}
\end{equation}
where the third equality follows from equation~\ref{eq:properties_tildec} (2). For $\beta=0$, the expression is identically $0$. So, now consider when $\beta>0$.  
We want to show that 
\begin{equation}
\sum\limits_{j = 0}^{k} \tilde{r}_{j}\beta^{-j} r_{k - j} \geq 0
\end{equation}
for all $\beta \in (0,1]$.

As $\beta$ increases from $0$ to $1$, the sum is clearly increasing, since the only positive term does not have a $\beta$ multiplier. For $\beta = 1$, the sum equals zero, as the sequences $\tilde{r}_j$ and $r_j$ have inverse generating functions. Therefore, the sum remains negative, concluding the proof of the upper bound.

\paragraph{Lower Bound.} For the lower bound, using equation~\eqref{eq:properties_tildec} and the recurrence relation of $\tilde r_j$ stated in Lemma~\ref{lem:inv_sqrt_A_alpha_beta}, we get 
\begin{equation}
\begin{aligned}
    \sum\limits_{j =0}^{k} \tilde{c}^{1, \beta}_j - r_k (1 - \beta) &= \sum\limits_{j = 0}^{k }\tilde{r}_j \beta^j r_{k - j} - r_k(1 - \beta)  = \sum\limits_{j = 1}^{k }\tilde{r}_j \beta^j r_{k - j} + \beta r_k \\
    &=\beta \left[r_k - \sum\limits_{j = 1}^{k}\frac{r_j}{2j - 1} r_{k - j}\beta^{j - 1}\right] \ge \beta \left[r_k - \sum\limits_{j = 1}^{k}\frac{r_j}{2j - 1} r_{k - j}\right]\\
    &\ge \beta \sum\limits_{j = 0}^{k}\tilde{r}_j r_{k - j} = 0,
\end{aligned}
\end{equation}
concluding the proof. In the above, the first inequality follows from the fact that $0<\beta\leq 1$. The fact is trivially true for $\beta=0$.
\end{proof}

\begin{lemma}
[Bound on diagonal values of the matrix $C_{1, \beta}$]
\label{lem:c_k_beta_mult_bound}
The diagonal values of the matrix $C_{1, \beta}$ (see equation~\eqref{eq:C_sqrt}) with parameter $0 \le \beta < 1$, denoted as $(1, c_1^{1, \beta}, c_2^{1, \beta}, \dots, c_{n - 1}^{1, \beta})$, satisfy the inequality:
\begin{equation}
    c_k^{1, \beta} \le c_{k - 1}^{1, \beta} \left[1 - \frac{(1 - \beta)^2}{2k}\right] \quad \text{for} \quad k \ge 1.
\end{equation}
\end{lemma}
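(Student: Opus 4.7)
The plan is to convert the stated inequality into a statement about the ratio $\rho_k := c_k^{1,\beta}/c_{k-1}^{1,\beta}$ (well-defined since every $c_k^{1,\beta}$ is a sum of products of non-negative terms, hence positive), and then prove it by induction using a short-range linear recurrence for $c_k^{1,\beta}$ that I will derive from a generating function. The target bound becomes $\rho_k \le T_k$ where $T_k := 1 - (1-\beta)^2/(2k)$.

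First I would obtain the recurrence. By Lemma~\ref{def:C_sqrt} with $\alpha=1$, $c_k^{1,\beta}$ is the $x^k$-coefficient of $f(x) := (1-x)^{-1/2}(1-\beta x)^{-1/2}$. Logarithmic differentiation gives the first-order ODE
\begin{equation}
(1-x)(1-\beta x)\,f'(x) \;=\; \Bigl[\tfrac{1+\beta}{2} - \beta x\Bigr] f(x).
\end{equation}
Extracting the coefficient of $x^{k-1}$ on both sides yields, after simplification,
\begin{equation}
c_k^{1,\beta} \;=\; \frac{(1+\beta)(2k-1)}{2k}\,c_{k-1}^{1,\beta} \;-\; \frac{\beta(k-1)}{k}\,c_{k-2}^{1,\beta},
\end{equation}
which divided by $c_{k-1}^{1,\beta}$ gives $\rho_k = \frac{(1+\beta)(2k-1)}{2k} - \frac{\beta(k-1)}{k\,\rho_{k-1}}$. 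The crucial structural fact is that $\rho_k$ is \emph{monotonically increasing} in $\rho_{k-1}$, so an upper bound on $\rho_{k-1}$ propagates to an upper bound on $\rho_k$.

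Next, the base case $k=1$: directly $c_1^{1,\beta}=(1+\beta)/2$, and the required inequality $(1+\beta)/2 \le 1 - (1-\beta)^2/2$ simplifies to $\beta(1-\beta)\ge 0$. For the inductive step, suppose $\rho_{k-1}\le T_{k-1}$; by the monotonicity observation, $\rho_k \le \tfrac{(1+\beta)(2k-1)}{2k} - \tfrac{\beta(k-1)}{k\,T_{k-1}}$, and showing this is $\le T_k$ reduces, after using the identity $(1+\beta)(2k-1)-2k+(1-\beta)^2 = \beta(2k-3+\beta)$, to verifying
\begin{equation}
T_{k-1} \;\le\; \frac{2(k-1)}{2k-3+\beta}.
\end{equation}
Setting $m=k-1\ge 1$ and $u=1-\beta\in[0,1]$, this is equivalent to $(2m-u^2)(2m-u)\le 4m^2$, which further reduces to $u^2 \le 2m(1+u)$ — plainly true for $m\ge 1$, $u\in[0,1]$ (since $u^2\le 1\le 2m\le 2m(1+u)$ when $u>0$, and trivial at $u=0$).

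The only delicate point is the algebraic identity that collapses the induction step into the elementary inequality $u^2\le 2m(1+u)$; everything else is a mechanical consequence of the recurrence. Care is needed to keep the direction of monotonicity straight (larger $\rho_{k-1}$ means smaller subtracted term, hence larger $\rho_k$), so that the induction propagates upper bounds rather than lower bounds. No macro or result beyond those already in the excerpt is required.
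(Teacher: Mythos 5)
Your proof is correct, but it follows a genuinely different route from the paper's. You derive the three-term contiguous recurrence $c_k^{1,\beta} = \tfrac{(1+\beta)(2k-1)}{2k}c_{k-1}^{1,\beta} - \tfrac{\beta(k-1)}{k}c_{k-2}^{1,\beta}$ from the first-order ODE satisfied by the generating function $\bigl((1-x)(1-\beta x)\bigr)^{-1/2}$, and then run an induction on the ratio $\rho_k = c_k^{1,\beta}/c_{k-1}^{1,\beta}$, exploiting that $\rho_k$ is increasing in $\rho_{k-1}$ (both recurrence coefficients are non-negative and $T_{k-1}>0$, so upper bounds propagate); the step collapses, via the identity $(1+\beta)(2k-1)-2k+(1-\beta)^2=\beta(2k-3+\beta)$, to $(2m-u^2)(2m-u)\le 4m^2$ with $m=k-1$, $u=1-\beta$, which I checked is exactly equivalent to $u^2\le 2m(1+u)$ and hence holds; the base case $k=1$ and the positivity of all $c_k^{1,\beta}$ are as you state. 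The paper instead avoids any recurrence or induction: it first proves the difference bound $c_{k-1}^{1,\beta}-c_k^{1,\beta}\ge (r_{k-1}-r_k)(1-\beta)$ by rewriting the convolution defining $c_k^{1,\beta}$ (from equation~\eqref{eq:C_sqrt}), using monotonicity of $r_k$ to justify setting $\beta=1$ inside the bracket, and telescoping; it then converts this into the ratio bound using $r_{k-1}-r_k = r_{k-1}/(2k)$ and $c_{k-1}^{1,\beta}\le r_{k-1}/(1-\beta)$. Your approach is more mechanical and self-contained (everything follows from coefficient extraction plus elementary algebra, and the exact recurrence could in principle yield sharper constants), while the paper's approach reuses the $r_k$ machinery already developed elsewhere in the appendix and produces the additive estimate $c_{k-1}^{1,\beta}-c_k^{1,\beta}\ge (r_{k-1}-r_k)(1-\beta)$, which is of independent use. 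Two small presentational points if you write this up: state explicitly that the recurrence is used for $k\ge 2$ (the $k=1$ case being your base case), and note that the reduction to $T_{k-1}\le \tfrac{2(k-1)}{2k-3+\beta}$ is a sufficient condition obtained by multiplying through by $\beta\ge 0$, so no separate treatment of $\beta=0$ is needed.
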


\begin{proof}
We first prove that 
\begin{equation}
    c_{k - 1}^{1, \beta} - c_{k}^{1, \beta} \ge (r_{k - 1} - r_{k}) (1 - \beta).
\end{equation} Using the expression of $c_k^{1,\beta}$, we have 
\begin{equation*}
\begin{aligned}
    c_{k - 1}^{1, \beta} - c_{k}^{1, \beta} - (r_{k - 1} - r_{k}) (1 - \beta) &= \sum\limits_{j = 0}^{k - 1} r_j r_{k - 1 - j} \beta^{j} - \sum\limits_{j = 0}^{k} r_j r_{k - j} \beta^{j} - (r_{k - 1} - r_{k}) (1 - \beta)\\
    &= \beta(r_{k - 1} - r_{k}) + \sum\limits_{j = 1}^{k - 1} r_j (r_{k - j - 1} - r_{k - j}) \beta^j - r_k \beta^k\\
    &= \beta^{k} \left[\beta^{1 - k}(r_{k - 1} - r_{k}) + \sum\limits_{j = 1}^{k - 1} r_j (r_{k - j - 1} - r_{k - j}) \beta^{j - k} - r_k\right]
\end{aligned}
\end{equation*}

We note that $r_k$ is a decreasing sequence; therefore, the first two terms inside the brackets are positive, and the powers of $\beta$ in front of them are non-positive. Therefore, as a lower bound, we can substitute $\beta = 1$ inside the sum:
\begin{equation}
\begin{aligned}
    c_{k - 1}^{1, \beta} - c_{k}^{1, \beta} - (r_{k - 1} - r_{k}) (1 - \beta) &\ge \beta^{k} \left[r_{k - 1} - r_{k} + \sum\limits_{j = 1}^{k - 1} r_j (r_{k - j - 1} - r_{k - j}) - r_k\right]\\
    &= \beta^k [r_{k - 1} - 2r_k + (1 - r_{k - 1}) - (1 - 2r_{k})]
    = 0
\end{aligned}
\end{equation}

Using this inequality, we obtain:
\begin{equation}
\begin{aligned}
    \frac{c_{k}^{1, \beta}}{c_{k - 1}^{1, \beta}} &= \frac{c_{k - 1}^{1, \beta} - (c_{k - 1}^{1, \beta} - c_{k}^{1, \beta})}{c_{k - 1}^{1, \beta}} \le 1 - \frac{r_{k - 1} - r_k}{c^{1, \beta}_{k - 1}} (1 - \beta) \\
    &= 1 - \frac{r_{k - 1}}{2k} \cdot \frac{1 - \beta}{c^{1, \beta}_{k - 1}} \le 1 - \frac{(1 - \beta)^2}{2k},    
\end{aligned}
\end{equation}
concluding the proof.
\end{proof}

\begin{restatable}[Bounds on diagonals of $B^{p}_{\alpha, \beta}$]{lemma}{bbetapbounds}
\label{lem:b_beta_bounds}
The matrix $B^{p}_{\alpha, \beta}$ in the \acronym factorization is a lower triangular Toeplitz matrix. 
The values on its diagonals are
\begin{equation}
    (1, c_1^{\alpha, \beta}, c_2^{\alpha, \beta}, \dots, c_{p - 1}^{\alpha, \beta}, b_{p}^{\alpha, \beta}, \dots, b_{n - 1}^{\alpha, \beta}) \qquad \text{where} \quad 0\leq b_{i}^{\alpha, \beta} \leq \alpha^i c_{p -1}^{1, \beta/\alpha} \quad \text{for} \quad i \geq p
\end{equation}
where $c_{i}^{1,\beta/\alpha}$ for $1\leq i \leq p-1$ is as defined in equation~\eqref{eq:C_sqrt} .
\end{restatable}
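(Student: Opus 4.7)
The plan is to analyze $B^p_{\alpha,\beta}=A_{\alpha,\beta}(C^p_{\alpha,\beta})^{-1}$ via generating functions. By Definition~\ref{def:bisr}, $(C^p_{\alpha,\beta})^{-1}$ is the $p$-band truncation of $C_{\alpha,\beta}^{-1}$, hence lower-triangular Toeplitz with generating polynomial $P_p(x):=\sum_{j=0}^{p-1}\tilde c_j^{\alpha,\beta}x^j$, the degree-$(p-1)$ truncation of $f(x):=(1-\alpha x)^{1/2}(1-\beta x)^{1/2}$ from Lemma~\ref{lem:inv_sqrt_A_alpha_beta}. Since products of lower-triangular Toeplitz matrices remain so, $B^p$ has the claimed Toeplitz structure with generating function $F_{B^p}(x)=F_A(x)P_p(x)$, where $F_A(x)=(1-\alpha x)^{-1}(1-\beta x)^{-1}$. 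Using $F_A f=F_C$ together with $f-P_p=O(x^p)$ immediately yields $[x^k]F_{B^p}=c_k^{\alpha,\beta}$ for $0\le k\le p-1$, identifying the first $p$ diagonals.

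For $k\ge p$, multiplying by $(1-\beta x)$ gives $(1-\beta x)F_{B^p}(x)=P_p(x)/(1-\alpha x)$, and reading off coefficients produces the first-order recursion $b_k^{\alpha,\beta}-\beta b_{k-1}^{\alpha,\beta}=\alpha^k P_p(1/\alpha)$ for $k\ge p-1$. Solving geometrically from $b_{p-1}^{\alpha,\beta}=c_{p-1}^{\alpha,\beta}$, and then rescaling via $F_A^{\alpha,\beta}(x)=F_A^{1,\gamma}(\alpha x)$ and similarly for $F_C,P_p$ with $\gamma:=\beta/\alpha$ (which implies $b_k^{\alpha,\beta}=\alpha^k b_k^{1,\gamma}$, $c_{p-1}^{\alpha,\beta}=\alpha^{p-1}c_{p-1}^{1,\gamma}$, and $P_p^{\alpha,\beta}(1/\alpha)=P_p^{1,\gamma}(1)$), collapses everything into
\[
b_k^{1,\gamma}=\gamma^{k-p+1}\,c_{p-1}^{1,\gamma}+P_p^{1,\gamma}(1)\,a_{k-p}^{1,\gamma},\qquad a_m^{1,\gamma}:=\tfrac{1-\gamma^{m+1}}{1-\gamma}.
\]
Because $\gamma^{k-p+1}+(1-\gamma)a_{k-p}^{1,\gamma}=1$, it suffices to prove the two-sided scalar bound $0\le P_p^{1,\gamma}(1)\le(1-\gamma)\,c_{p-1}^{1,\gamma}$; that will imply $0\le b_k^{1,\gamma}\le c_{p-1}^{1,\gamma}$, and multiplying by $\alpha^k$ recovers the claimed inequality.

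Both scalar inequalities turn on the rewriting $f(x)/(1-x)=(1-\gamma x)F_C^{1,\gamma}(x)$, which via the partial-sum identity $\sum_{j=0}^{k}\tilde c_j^{1,\gamma}=[x^k]\bigl(f/(1-x)\bigr)$ gives $P_p^{1,\gamma}(1)=c_{p-1}^{1,\gamma}-\gamma c_{p-2}^{1,\gamma}$. The upper bound then rearranges to the monotonicity $c_{p-1}^{1,\gamma}\le c_{p-2}^{1,\gamma}$, which I would prove by writing the generating function of $c_k^{1,\gamma}-c_{k-1}^{1,\gamma}$ as $\sqrt{(1-x)/(1-\gamma x)}=(1-v(x))^{1/2}$ with $v(x):=(1-\gamma)x/(1-\gamma x)$ having non-negative coefficients; since $\tilde r_k\le 0$ for $k\ge 1$ (Lemma~\ref{lem:inv_sqrt_A_alpha_beta}), every non-constant coefficient of $(1-v)^{1/2}=1+\sum_{k\ge 1}\tilde r_k v^k$ is a sum of non-positive terms. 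For the lower bound, I would set $h(x):=(1-x)^{-1/2}(1-\gamma x)^{1/2}$ so that $P_p^{1,\gamma}(1)=h_{p-1}$, and verify that $h'(x)=\tfrac{1-\gamma}{2}(1-x)^{-3/2}(1-\gamma x)^{-1/2}$ has non-negative coefficients (since $(1-\gamma)/2\ge 0$ and both $(1-x)^{-3/2}$ and $(1-\gamma x)^{-1/2}$ are series with non-negative coefficients); integrating term-by-term from $0$ then forces $h_k\ge 0$. The main obstacle is spotting these two compact rational identities—rewriting $(1-\gamma x)F_C$ as $f/(1-x)$, and collapsing $h'$ into a product whose coefficients are explicitly non-negative—after which the remainder of the argument assembles into routine bookkeeping.
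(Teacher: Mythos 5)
Your proof is correct, and its skeleton coincides with the paper's: after identifying the first $p$ diagonals of $B^p_{\alpha,\beta}$ with $c_0^{\alpha,\beta},\dots,c_{p-1}^{\alpha,\beta}$, your closed form $b_k^{1,\gamma}=\gamma^{k-p+1}c_{p-1}^{1,\gamma}+P_p^{1,\gamma}(1)\,\frac{1-\gamma^{k-p+1}}{1-\gamma}$ is algebraically the same expression the paper reaches (there written as $b_{i}^{1, \beta} = \frac{1 - \beta^{i + 1 - p}}{1 - \beta}\sum_{j=0}^{p-1}\tilde{c}_j^{1,\beta} + c_{p-1}^{1,\beta}\beta^{i+1-p}$ after the same $\alpha^i$-rescaling), and both arguments then reduce the two-sided bound on $b_i$ to the scalar inequality $0\le \sum_{j=0}^{p-1}\tilde c_j^{1,\gamma}\le(1-\gamma)c_{p-1}^{1,\gamma}$. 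Where you genuinely diverge is in how you obtain this last inequality: the paper simply invokes its Lemma~\ref{lem:c_beta_sum_bounds} (proved there by manipulating the convolution identities~\eqref{eq:properties_tildec} and a sign argument at $\beta=1$), whereas you reprove it from scratch via two compact generating-function facts — the identity $\sum_{j=0}^{p-1}\tilde c_j^{1,\gamma}=c_{p-1}^{1,\gamma}-\gamma c_{p-2}^{1,\gamma}$ (reducing the upper bound to monotonicity of $c_k^{1,\gamma}$, itself shown by composing $(1-v)^{1/2}$ with a non-negative series) and the non-negativity of the coefficients of $h'=\frac{1-\gamma}{2}(1-x)^{-3/2}(1-\gamma x)^{-1/2}$ for the lower bound. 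This buys a cleaner, more conceptual derivation of the sum bound (and also of the recursion for $b_k$, obtained from $(1-\beta x)F_{B^p}=P_p/(1-\alpha x)$ rather than by direct convolution with the workload entries), at the cost of redoing work the paper already has packaged as a lemma; note also that your upper scalar bound, like the paper's Lemma~\ref{lem:c_beta_sum_bounds}, really needs $p\ge 2$ (at $p=1$, $\beta>0$ one has $b_1^{\alpha,\beta}=\alpha+\beta>\alpha$), an edge case the paper's own proof shares.
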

\begin{proof}
The first $p$ values are identical to the square root factorization $c_{i}^{\alpha, \beta}$ due to the uniqueness of the inverse. The remaining values satisfy the following recurrence:
\begin{equation}
\begin{aligned}
    b_{i}^{\alpha, \beta} &= \sum\limits_{j = 0}^{p - 1} \tilde{c}_j^{\alpha, \beta} \frac{\alpha^{i - j + 1} - \beta^{i - j+1}}{\alpha - \beta} = \alpha^{i} \sum\limits_{j = 0}^{p - 1} \tilde{c}_j^{1, \beta/\alpha} \frac{1 - \beta^{i - j+1}}{1 - \beta / \alpha} = \alpha^{i} b_{i}^{1, \beta/\alpha}.
\end{aligned}
\end{equation}
Therefore, it suffices to prove that $b_i^{1, \beta} \le c_{p - 1}^{1, \beta}$, since we can then substitute $\beta$ with $\beta/\alpha$.
\begin{equation}
\begin{aligned}
    b_{i}^{1, \beta} &= \sum\limits_{j = 0}^{p - 1} \tilde{c}_j^{1, \beta} \frac{1 - \beta^{i - j+1}}{1 - \beta} = \frac{1}{1 - \beta}\sum\limits_{j = 0}^{p - 1} \tilde{c}_j^{1, \beta} - \beta^{i + 1 - p} \sum\limits_{j = 0}^{p - 1} \tilde{c}_j^{1, \beta} \frac{\beta^{p-j}}{1 - \beta}\\
    &= \frac{1}{1 - \beta}\sum\limits_{j = 0}^{p - 1} \tilde{c}_j^{1, \beta} + \beta^{i + 1 - p} \sum\limits_{j = 0}^{p - 1} \tilde{c}_j^{1, \beta} \frac{(-\beta^{p-j} + 1 - 1)}{1 - \beta}\\
    &= \frac{1 - \beta^{i + 1 - p}}{1 - \beta}\sum\limits_{j = 0}^{p - 1} \tilde{c}_j^{1, \beta} + c_{p - 1}^{1, \beta}\beta^{i + 1 - p}.
\end{aligned}
\end{equation}

We now use Lemma \ref{lem:c_beta_sum_bounds} to first show that $b_i^{1, \beta} \geq 0$, since the sum of $\tilde{c}_j^{1, \beta}$ is non-negative and all other terms are positive. Second, we establish that:

\begin{equation}
    b_{i}^{1, \beta} \leq \frac{1 - \beta^{i + 1 - p}}{1 - \beta}(1 - \beta) c_{p - 1}^{1, \beta} + c_{p - 1}^{1, \beta}\beta^{i + 1 - p} = c_{p - 1}^{1, \beta},
\end{equation}

which completes the proof.
\end{proof}

\begin{restatable}[Bounds on diagonals of $C^{p}_{\alpha, \beta}$]{lemma}{cbetapbounds}
\label{lem:c_beta_p_bounds}
The matrix $C^{p}_{\alpha, \beta}$ in the \acronym factorization is a lower triangular Toeplitz matrix. 
The values on its diagonals are 
$(1, c_1^{\alpha, \beta}, c_2^{\alpha, \beta}, \dots, c_{p - 1}^{\alpha, \beta}, g_{p}^{\alpha, \beta}, \dots, g_{n - 1}^{\alpha, \beta})$, where $c_{i}^{1,\beta/\alpha}$ (for $1\leq i \leq p-1$) is as defined in equation~\eqref{eq:C_sqrt} and 
    \begin{equation}
      0\leq g_{i}^{\alpha, \beta} \leq \alpha^i\min\big( c_{i}^{1, \beta/\alpha}, c_p^{1, \beta/\alpha} \gamma_{\beta/\alpha}^{i -p}\big) \quad \text{for} \quad \gamma_{\beta/\alpha} = \left(1 + \frac{(1 - \beta/\alpha)^2}{4p(1 + \beta/\alpha)}\right)^{-1} \quad \text{and} \quad i \geq p.
    \end{equation}
\end{restatable}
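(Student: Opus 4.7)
The starting point is to encode the diagonals of $C^p_{\alpha,\beta}$ by a formal power series. Since $C_{\alpha,\beta}$ is lower triangular Toeplitz by Lemma~\ref{def:C_sqrt} and so is its inverse $C_{\alpha,\beta}^{-1}$ by Lemma~\ref{lem:inv_sqrt_A_alpha_beta}, truncating $C_{\alpha,\beta}^{-1}$ to $p$ bands and inverting again preserves that structure, so $C^p_{\alpha,\beta}$ has generating function $g(x) = 1/\tilde c^p(x)$, where $\tilde c^p(x) = \sum_{k=0}^{p-1}\tilde c_k^{\alpha,\beta} x^k$ is the degree-$(p-1)$ truncation of $\tilde c(x) = \sqrt{(1-\alpha x)(1-\beta x)}$. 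Because $\tilde c^p \equiv \tilde c \pmod{x^p}$, also $g(x) \equiv c(x) \pmod{x^p}$, so $g_i^{\alpha,\beta} = c_i^{\alpha,\beta} = \alpha^i c_i^{1,\beta/\alpha}$ for $0\le i\le p-1$ (the substitution $x\mapsto\alpha x$ in $c$ giving the last equality).

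Next, I would use the sign fact $\tilde c_k^{\alpha,\beta}\le 0$ for $k\ge 1$, which I would establish by induction on $k$ in the convolution formula of Lemma~\ref{lem:inv_sqrt_A_alpha_beta} (or equivalently from the identity $\tilde c(x)^2 = (1-\alpha x)(1-\beta x)$). The identity $g(x)\tilde c^p(x)=1$ then yields the recurrence $g_n = \sum_{j=1}^{p-1} g_{n-j}\lvert\tilde c_j^{\alpha,\beta}\rvert$ for $n\ge p$, which immediately gives $g_n\ge 0$ by induction. Comparing against the untruncated recurrence $c_n = \sum_{j=1}^n c_{n-j}\lvert\tilde c_j\rvert$, which differs only by the addition of non-negative terms at indices $j\ge p$, another induction gives $g_n^{\alpha,\beta}\le c_n^{\alpha,\beta} = \alpha^n c_n^{1,\beta/\alpha}$, \ie the first of the two bounds inside the $\min$.

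The geometric-decay bound is the technical core. After rescaling to $\alpha=1$, $\beta=\tau\in[0,1)$, I set $\rho = 1/\gamma_\tau = 1+\epsilon$ with $\epsilon = (1-\tau)^2/(4p(1+\tau))$. The key analytic inequality is $\tilde c^p(\rho)\ge 0$, equivalently $\sum_{j=1}^{p-1}\rho^j\lvert\tilde c_j^{1,\tau}\rvert\le 1$. Two elementary observations at $x=1$ frame the argument: $\tilde c^p(1) = -\sum_{k\ge p}\tilde c_k^{1,\tau} > 0$ since $\tilde c(1)=0$ while every tail term is non-positive, and $(\tilde c^p)'(x)\le 0$ on $[0,\infty)$. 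Sharp lower bounds $\tilde c^p(1) = \Theta(\sqrt{1-\tau}/\sqrt p)$ and sharp upper bounds $\lvert(\tilde c^p)'(1)\rvert = O(\sqrt{(1-\tau)p})$ read off from the explicit convolution for $\tilde c_k^{1,\tau}$, together with a Taylor expansion of $\tilde c^p$ about $x=1$ with a controlled quadratic remainder, then certify $\tilde c^p(1+\epsilon)\ge 0$. An induction on $n\ge p$ closes the geometric bound: the base case $n=p$ reduces to $g_p = c_p-\lvert\tilde c_p\rvert\le c_p$, the regime $n\ge 2p-1$ is handled directly by the telescoping $\sum_{j=1}^{p-1}\rho^j\lvert\tilde c_j\rvert\le 1$, and the intermediate range $p<n<2p-1$ uses $g_{n-j}=c_{n-j}$ for $j>n-p$ together with a small residual inequality driven by the same tail estimates. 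The hard part throughout is isolating the sharp constant inside $\gamma_\tau$: any crude bound on $1/\tilde c^p$ yields some geometric rate, but matching the precise dependence $(1-\tau)^2/(4p(1+\tau))$ demands tracking the leading-order asymptotics of $\sum_{k\ge p}\tilde c_k^{1,\tau}$ and $\sum_{k=1}^{p-1} k\lvert\tilde c_k^{1,\tau}\rvert$ with compatible accuracy.
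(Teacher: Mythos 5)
Your plan reproduces the paper's skeleton almost exactly: the first $p$ diagonals coincide with those of $C_{\alpha,\beta}$, the substitution $\beta\mapsto\beta/\alpha$ factors out $\alpha^i$, the sign fact $\tilde c_j^{1,\beta}\le 0$ (Lemma~\ref{lem:c_beta_bounds}) turns the recurrence $g_i=\sum_{j=1}^{p-1}(-\tilde c_j^{1,\beta})g_{i-j}$ into a monotone induction giving $0\le g_i\le c_i^{1,\beta}$, and the geometric decay rests on exactly the inequality you single out, $\sum_{j=1}^{p-1}(-\tilde c_j^{1,\beta})\gamma_\beta^{-j}\le 1$. The paper proves that inequality with crude but uniform bounds — $\sum_{j=0}^{p-1}\tilde c_j^{1,\beta}\ge r_{p-1}(1-\beta)$ (Lemma~\ref{lem:c_beta_sum_bounds}), $j(-\tilde c_j^{1,\beta})\le r_j(1+\beta)$ (Lemma~\ref{lem:c_beta_bounds}), and $e^x\le 1+\tfrac54 x$ — rather than the sharp asymptotics you propose; note also that your claimed lower bound $\tilde c^p(1)=\Theta(\sqrt{1-\tau}/\sqrt p)$ is not uniform in $\tau$ (for $p=2$ one has $\tilde c^2(1)=\tfrac{1-\tau}{2}\ll\sqrt{1-\tau}$ as $\tau\to 1$); the correct uniform bound $(1-\tau)r_{p-1}$ still suffices because $\epsilon$ already carries a factor $(1-\tau)^2$, so this part is fixable and essentially the same computation.

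The genuine gap is the range $p<i<2p$. You take $i=p$ as the only base case and propose to handle the intermediate indices by substituting $g_{i-j}=c_{i-j}$ for $i-j<p$ into the recurrence plus ``a small residual inequality'', but the geometric bound cannot be extended backwards: since $c_m^{1,\beta}\le c_{m-1}^{1,\beta}\bigl(1-\tfrac{(1-\beta)^2}{2m}\bigr)$, one has $c_m^{1,\beta}>c_p^{1,\beta}\gamma_\beta^{m-p}$ for $m<p$, and for $i=p+k$ with small $k$ the excess contributed by the terms with $j>k$ is of order $\tfrac{1+\beta}{(1-\beta)\sqrt k}$ relative to the target $c_p^{1,\beta}\gamma_\beta^{k}$, which dwarfs the available slack $1-\sum_j(-\tilde c_j^{1,\beta})\gamma_\beta^{-j}=O\bigl(\tfrac{1-\beta}{\sqrt p}\bigr)$; moreover these indices enter the induction hypothesis for $i$ just above $2p$, so the case cannot be deferred. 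The paper closes this hole with a different mechanism: it proves the bound on the whole initial segment $p\le i\le 2p$ directly from $g_i\le c_i^{1,\beta}$ together with Lemma~\ref{lem:c_k_beta_mult_bound}, which yields $c_{p+k}^{1,\beta}\le c_p^{1,\beta}\gamma_\beta^{k}$ for all $k\le p$ (the factor $4p(1+\beta)$ in $\gamma_\beta$ is chosen precisely so that the per-step decay $\tfrac{(1-\beta)^2}{2(p+j)}$ of $c$ beats $\epsilon$ up to index $2p$), and only then runs the recurrence-based induction, whose hypotheses then involve only indices $\ge p$ that are already covered. You need this (or an equivalent) argument for the initial segment; as written, your plan does not go through there.
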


\begin{proof}
The first $p$ values of $C_{\alpha, \beta}^{p}$ are the same as those of $C_{\alpha, \beta}$ since the matrix is Lower Triangular Toeplitz (LTT). For the subsequent values, we first prove the following inequality by induction:

\begin{equation}
    g_i^{\alpha, \beta} = \sum\limits_{j = 1}^{p - 1} (-\tilde{c}_{j}^{\alpha, \beta})g_{i - j}^{\alpha, \beta} \le \sum\limits_{j = 1}^{p - 1} (-\tilde{c}_{j}^{\alpha, \beta})c_{i - j}^{\alpha, \beta} \le \sum\limits_{j = 1}^{i} (-\tilde{c}_{j}^{\alpha, \beta})c_{i - j}^{\alpha, \beta} = c_{i}^{\alpha, \beta} = \alpha^i c_i^{1, \beta/\alpha}.
\end{equation}

We observe that for all sequences $c_i^{\alpha, \beta}$, $\tilde{c}_i^{\alpha, \beta}$, and $g_i^{\alpha, \beta}$, we can factor out $\alpha^i$ by replacing $\beta$ with $\beta/\alpha$. Therefore, it suffices to prove the inequality $g_i^{1, \beta} \le c_{p}^{1, \beta} \gamma_{\beta}^{i - p}$, after which we may substitute $\beta$ with $\beta/\alpha$. For the subsequent $p$ values, we establish the stated bound $g_i^{1, \beta} \le c_{p}^{1, \beta} \gamma_{\beta}^{i - p}$ using Lemma~\ref{lem:c_k_beta_mult_bound}.

\begin{equation}
    \frac{g^{1, \beta}_{p + k}}{c_p^{1, \beta} \gamma_{\beta}^k} \le \frac{c^{1, \beta}_{p + k}}{c_p^{1, \beta}} \left(1 + \frac{(1 - \beta)^2}{4p (1 + \beta)}\right)^k = \prod_{j = 1}^{k} \left(1 - \frac{(1 - \beta)^2}{2(p + j)}\right) \left(1 + \frac{(1 - \beta)^2}{4p(1 + \beta)}\right) \le 1.
\end{equation}

Since each term in the product is less than 1 for $2p + 2j \le 4p$, the inequality holds. For the induction step, we show:
\begin{equation}
\begin{aligned}
    \frac{g^{1, \beta}_{p +k}}{c^{1, \beta}_p \gamma_{\beta}^k} &= \frac{1}{c^{1, \beta}_p \gamma_{\beta}^k}\sum\limits_{j = 1}^{p - 1}(-\tilde{c}^{1, \beta}_j) g^{1, \beta}_{p + k - j} \le \sum\limits_{j = 1}^{p - 1}(-\tilde{c}^{1, \beta}_j) \gamma_{\beta}^{-j} = \sum\limits_{j = 1}^{p - 1}(-\tilde{c}^{1, \beta}_j) \left(1 + \frac{(1 - \beta)^2}{4p (1 + \beta)}\right)^{j}.
\end{aligned}
\end{equation}

For convenience, we denote $\phi_{\beta} = \frac{(1 - \beta)^2}{1 + \beta} < 1$. To proceed, we use the following auxiliary inequality for $j \le p - 1$:
\begin{equation}
    \left(1 + \frac{\phi_{\beta}}{4p}\right)^j \le e^{\frac{j\phi_{\beta}}{4p}} \le 1 + \frac{5j \phi_{\beta}}{16p},
\end{equation}

since $e^{x} \le 1 + 1.25x$ for $x \le \frac{1}{4}$. Combining this inequality with Lemma~\ref{lem:c_beta_sum_bounds}, we obtain:
\begin{equation}
\begin{aligned}
    \frac{g^{1, \beta}_{p +k}}{c^{1, \beta}_p \gamma_{\beta}^k} &\le \sum\limits_{j = 1}^{p - 1}(-\tilde{c}^{1, \beta}_j) \left(1 + \frac{5j \phi_{\beta}}{16p}\right) \le 1 - r_{p - 1}(1 - \beta) + \frac{5\phi_{\beta}}{16p} \sum\limits_{j = 1}^{p - 1} (-\tilde{c}^{1, \beta}_j) j.
\end{aligned}
\end{equation}

By Lemma~\ref{lem:c_beta_bounds}, we can upper bound:

\begin{equation}
    (-\tilde{c}^{1, \beta}_j)j \le (-\tilde{r}_j)j(1 + \beta) = \frac{jr_j}{2j - 1}(1 + \beta) \le r_j (1 + \beta).
\end{equation}

Using the known bounds $\frac{1}{\sqrt{\pi (j + 1)}} \le r_j \le \frac{1}{\sqrt{\pi j}}$, we conclude:
\begin{equation}
\begin{aligned}
    \frac{g^{1, \beta}_{p +k}}{c^{1, \beta}_p \gamma_{\beta}^k} &\le 1 - \frac{1 - \beta}{\sqrt{\pi p}} + \frac{5 (1 - \beta)^2}{16p\sqrt{\pi}} \sum\limits_{j = 1}^{p - 1}\frac{1}{\sqrt{j}} \le 1 - \frac{1 - \beta}{\sqrt{\pi p}} + \frac{5(1 - \beta)^2}{16p\sqrt{\pi}} \cdot 2\sqrt{p} \\
    &\le 1 - \frac{1 - \beta}{\sqrt{\pi p}} \left(1 - \frac{5}{8}(1 - \beta)\right) < 1,
\end{aligned}
\end{equation}

where for the second inequality we used the integral estimate $\sum_{j = 1}^{k - 1} j^{-1/2} \le \int_{0}^{k} x^{-1/2} dx = 2\sqrt{k}$. Thus, we have shown that $\frac{g^{1, \beta}_{p +k}}{c^{1, \beta}_p \gamma_{\beta}^k} \le 1$ for all $k$, which completes the proof.
\end{proof}

\maintheorem*

\begin{proof}

We begin with the case $\alpha < 1$. To analyze this, we first consider the Frobenius norm:
\begin{equation}
\begin{aligned}
\frac{\|B_{\alpha, \beta}^p\|_{\text{Fr}}^2}{n} &\le \sum\limits_{i = 0}^{p - 1} (c_i^{\alpha, \beta})^2 + \sum\limits_{i = p}^{n - 1} (b_{i}^{\alpha, \beta})^2 \le \sum\limits_{i = 0}^{p - 1} (c_i^{\alpha, \beta})^2 + (c^{1, \beta/\alpha}_{p - 1})^2\sum\limits_{i = p}^{n - 1} \alpha^{2i}  \\
&\le  \frac{1}{(\alpha - \beta)^2}\log \left(\frac{1}{1 - \alpha^2}\right) + \frac{\alpha^{2p}}{1 - \alpha^2} =  O_{\alpha, \beta}(1),
\end{aligned}
\end{equation}
where for the second inequality we used Lemma \ref{lem:b_beta_bounds}, and for the third inequality Lemma 7 from \citet{kalinin2024}.

For the $(k, b)$-sensitivity of the matrix $C_{\alpha, \beta}^p$, we use the fact that it is element-wise bounded by the full matrix $C_{\alpha, \beta}$ (see Lemma \ref{lem:c_beta_p_bounds}). For $C_{\alpha, \beta}$, we apply a bound from Theorem 7 of \citet{kalinin2024}, yielding $\senskb(C_{\alpha, \beta}) = O_{\alpha, \beta}(\sqrt{k})$, which concludes the case $\alpha < 1$.

For $\alpha = 1$, we use Lemma~\ref{lem:b_beta_bounds} to get:
\begin{equation}
\begin{aligned}
    \frac{\|B_{1, \beta}^p\|_{\text{Fr}}^2}{n} &\le \sum\limits_{i = 0}^{n - 1} (b_i^{1, \beta})^2 = \sum\limits_{i = 0}^{p - 1}(c^{1, \beta}_i)^2 + \sum\limits_{i = p}^{n - 1}(c^{1, \beta}_{p - 1})^2 =  \frac{1}{(1 - \beta)^2}\sum\limits_{i = 0}^{p - 1}r_i^2 + \frac{1}{(1 - \beta)^2}\sum\limits_{i = p}^{n - 1}r_{p - 1}^2\\
    &\le \frac{1}{(1 - \beta)^2}\left[1 + \log p + \frac{n - p}{p\pi}\right] = O_{\beta} \bigg(\log p + \frac{n}{p}\bigg).
\end{aligned}
\end{equation}

Next, we bound the sensitivity under $k, b$ participation. Using Theorem \ref{thm:b-sensitivity}, combined with Lemma~\ref{lem:monotonically_decreasing_values} we obtain:
\begin{equation}
    \senskb^2(C_{1, \beta}^p) = \sum\limits_{j = 0}^{k - 1}\sum\limits_{i = 0}^{k - 1} \langle (C_{1, \beta}^p)_{:,ib}, (C_{1, \beta}^p)_{:,jb} \rangle.
\end{equation}

We split the sum into the following four terms:
\begin{equation}
\begin{aligned}
    \senskb^2(C_{1, \beta}^p) &= \underbrace{\sum\limits_{i = 0}^{k - 1}\sum\limits_{j \ne i}^{k - 1}\sum\limits_{t = 0}^{\min(p + ib, n) - 1- jb}c^{1, \beta}_{t} c^{1, \beta}_{jb - ib + t}}_{\mathcal{S}_1} + \underbrace{\sum\limits_{i = 0}^{k - 1}\sum\limits_{j \ne i}^{k - 1}\sum\limits_{t = 0}^{\min(p - 1, n - 1 - jb)} c^{1, \beta}_{t} g^{1, \beta}_{jb - ib + t}}_{\mathcal{S}_2}\\
    &\qquad+ \underbrace{\sum\limits_{i = 0}^{k - 1}\sum\limits_{j \ne i}^{k - 1}\sum\limits_{t = p}^{n - 1 - jb} g^{1, \beta}_{t} g^{1, \beta}_{jb - ib + t}}_{\mathcal{S}_3} + \underbrace{\sum\limits_{i = 0}^{k - 1}\left[\sum\limits_{t = 0}^{\min(p -1, n - 1 - ib)}(c^{1, \beta}_t)^2 + \sum\limits_{t = p}^{n - 1 - ib} (g^{1, \beta}_{t})^2\right]}_{\mathcal{S}_4}
\end{aligned}
\end{equation}

\textbf{Step 1 ($\mathcal{S}_1$ Bound)}  
We note that the case $b < p < n$ has not been considered in \citet{kalinin2024} and is technically more challenging. Consider the half of the sum where $j > i$. The sum requires $jb - ib \le p - 1$; otherwise, the upper limit would be negative. We bound the sum as follows:
\begin{equation}
\begin{aligned}
    \sum\limits_{t = 0}^{\min(p + ib, n) - 1- jb}c^{1, \beta}_{t} c^{1, \beta}_{jb - ib + t} &\le \frac{r_{jb - ib}}{1 - \beta} + \frac{1}{\pi(1 - \beta)^2}\sum\limits_{t = 1}^{p - 1 + ib - jb} \frac{1}{\sqrt{t(jb - ib + t)}}\\
    & \le \frac{1}{(1 - \beta)^2}\left[1 + \frac{1}{\pi}\int\limits_{0}^{p - 1 + ib - jb} \frac{dx}{\sqrt{x(jb - ib + x)}}\right]\\
    &= \frac{1}{(1 - \beta)^2}\left[1 + \frac{1}{\pi}f\left(\frac{jb - ib}{p - 1 + ib - jb}\right) \right],
\end{aligned}
\end{equation}
where $f(a) = 2\log\bigg(\sqrt{\frac{1}{a} + 1} + \sqrt{\frac{1}{a}}\bigg)$. We then use the following auxiliary inequality for the function $f(a)$:

\begin{equation}
    f(a) = \log\left(\frac{1}{a} + 1\right) + 2 \log\left(1  + \frac{1}{\sqrt{a + 1}}\right) \le \log\left(\frac{1}{a} + 1\right) + 2\log 2.
\end{equation}

This results in the following inequality:

\begin{equation}    
\sum\limits_{t = 0}^{\min(p + ib, n) - 1- jb}c^{1,\beta}_{t} c^{1,\beta}_{jb - ib + t} \le \frac{1}{(1 - \beta)^2}\left[1 + \frac{2\log 2}{\pi } + \frac{1}{\pi}\log\left(\frac{p - 1 + ib - jb}{jb - ib}\right) \right] \mathbbm{1}_{jb - ib \le p - 1}.
\end{equation}

We can now upper bound the double sum:
\begin{equation}
\begin{aligned}
\sum\limits_{i = 0}^{k - 1}\sum\limits_{j \ne i}^{k - 1} \sum\limits_{t = 0}^{\min(p + ib, n) - 1- jb} c^{1,\beta}_t c^{1,\beta}_{jb -ib + t} \le \frac{2}{(1 - \beta)^2}\sum\limits_{i = 0}^{k - 1} \sum\limits_{j = i + 1}^{\min(k - 1, i + \lfloor\frac{p - 1}{b}\rfloor)}\left[\frac{3}{2} + \frac{1}{\pi}\log\left(\frac{p - 1}{jb - ib}\right) \right].
\end{aligned}
\end{equation}

The first term gives us:
\begin{equation}
    \frac{2}{(1 - \beta)^2}\sum\limits_{i = 0}^{k - 1} \sum\limits_{j = i + 1}^{\min(k - 1, i + \lfloor\frac{p - 1}{b}\rfloor)}\frac{3}{2} \le \frac{3k}{(1 - \beta)^2}\left\lfloor\frac{p - 1}{b}\right\rfloor.
\end{equation}

The second term is more involved. First, we upper bound the upper limit of the sum $\min(k - 1, i + \lfloor\frac{p - 1}{b}\rfloor)$ by $i + \lfloor\frac{p - 1}{b}\rfloor$, since the summands are positive. We can then upper bound the expression by:
\begin{equation}
\begin{aligned}
    \sum\limits_{i = 0}^{k - 1} \sum\limits_{j = i + 1}^{i + \lfloor\frac{p - 1}{b}\rfloor}\log\left(\frac{\frac{p - 1}{b}}{j - i}\right) = \log \prod\limits_{ i = 0}^{k - 1} \frac{(\frac{p - 1}{b})^{\lfloor\frac{p - 1}{b}\rfloor}}{(\lfloor\frac{p - 1}{b}\rfloor)!} = k\log \frac{(\frac{p - 1}{b})^{\lfloor\frac{p - 1}{b}\rfloor}}{(\lfloor\frac{p - 1}{b}\rfloor)!}.
\end{aligned}
\end{equation}

Using the auxiliary inequality $k! \ge (\frac{k}{e})^k$, we show that:
\begin{equation}
\begin{aligned}
\log \frac{(\frac{p - 1}{b})^{\lfloor\frac{p - 1}{b}\rfloor}}{(\lfloor\frac{p - 1}{b}\rfloor)!} &\le \left\lfloor\frac{p - 1}{b}\right\rfloor\log \frac{p - 1}{b} -\left\lfloor\frac{p - 1}{b}\right\rfloor \log \left\lfloor\frac{p - 1}{b}\right\rfloor + \left\lfloor\frac{p - 1}{b}\right\rfloor\\
&= \left\lfloor\frac{p - 1}{b}\right\rfloor\log \left\{\frac{p - 1}{b}\right\} + \left\lfloor\frac{p - 1}{b}\right\rfloor \le  \left\lfloor\frac{p - 1}{b}\right\rfloor.
\end{aligned}
\end{equation}

Resulting in:
\begin{equation}
\begin{aligned}
    \sum\limits_{i = 0}^{k - 1}\sum\limits_{j \ne i}^{k - 1} \sum\limits_{t = 0}^{\min(p + ib, n) - 1- jb} c^{1, \beta}_t c^{1, \beta}_{jb -ib + t} &\le \frac{1}{(1 - \beta)^2}\left(3k\left\lfloor\frac{p - 1}{b}\right\rfloor + \frac{2}{\pi} k \left\lfloor\frac{p - 1}{b}\right\rfloor\right) \\
    &\le \frac{4k}{(1 - \beta)^2}\left\lfloor\frac{p - 1}{b}\right\rfloor,
\end{aligned}
\end{equation}
which concludes this part of the calculations.

\textbf{Step 2 (Bound $\mathcal{S}_2$).} We can bound the inner sum as follows, assuming that $jb - ib \ge p$:
\begin{equation}
\begin{aligned}
\sum\limits_{t = 0}^{\min(p - 1, n - 1 - jb)} c^{1, \beta}_{t} g^{1, \beta}_{jb - ib + t}  &\le c^{1, \beta}_p\sum\limits_{t = 0}^{p - 1} c^{1, \beta}_{t} \gamma_{\beta}^{jb - ib + t - p} =  c^{1, \beta}_p \gamma_{\beta}^{jb - ib - p}\sum\limits_{t = 0}^{p - 1} c^{1,\beta}_{t} \gamma_{\beta}^t\\
&\le c^{1, \beta}_p \gamma_{\beta}^{jb - ib - p}\sum\limits_{t = 0}^{p - 1} c^{1,\beta}_{t} \le  \frac{r_p\gamma_{\beta}^{jb -ib - p}}{(1 - \beta)^2}  \bigg(1 + \frac{1}{\sqrt{\pi}} \sum\limits_{t = 1}^{p - 1}\frac{1}{\sqrt{t}}\bigg)  \\
&\le  \frac{r_p\gamma_{\beta}^{jb -ib - p}}{(1 - \beta)^2}  \bigg(1 + \frac{2\sqrt{p}}{\sqrt{\pi}}\bigg) \le \frac{3 \gamma_{\beta}^{jb - ib - p}}{(1 - \beta)^2}
\end{aligned}
\end{equation}

For our specific choice of $\gamma_{\beta} = 1 - \frac{\phi_{\beta}}{4p + \phi_{\beta}}= \left(1 + \frac{\phi_{\beta}}{4p}\right)^{-1}$, where $\phi_{\beta} = \frac{(1 - \beta)^2}{1 + \beta}$. We rewrite the bound using the following auxiliary inequality:

\begin{equation}
    \gamma_{\beta}^{-p} = \bigg(1 + \frac{\phi_{\beta}}{4p}\bigg)^p \le e^{\phi_{\beta}/4} \le e^{1/4}\le \frac{4}{3}, 
\end{equation}

This yields the upper bound for the whole sum:
\begin{equation}
\begin{aligned}
\sum\limits_{i = 0}^{k - 1}\sum\limits_{j \ne i}^{k - 1}\sum\limits_{t = 0}^{\min(p - 1, n - 1 - jb)} c^{1,\beta}_{t} g^{1,\beta}_{jb - ib + t} \le \frac{8}{(1 - \beta)^2} \sum\limits_{i = 0}^{k - 1}\sum\limits_{j = i + 1}^{k - 1}\gamma_{\beta}^{jb - ib} \le \frac{8k\gamma_{\beta}^b}{(1 - \beta)^2(1 - \gamma_{\beta}^b)}
\end{aligned}
\end{equation}

We bound $\gamma_{\beta}^{b}$ in the following way:

\begin{equation}
    \gamma_{\beta}^b = \left(1 - \frac{\phi_{\beta}}{4p + \phi_{\beta}}\right)^b \le e^{-\frac{b\phi_{\beta}}{4p + \phi_{\beta}}} = e^{-\frac{b\phi_{\beta}}{p} \frac{p}{4p + 1}} \le e^{-\frac{b\phi_{\beta}}{5p}}
\end{equation}

Thus,
\begin{equation}
\begin{aligned}
    \sum\limits_{i = 0}^{k - 1}\sum\limits_{j \ne i}^{k - 1}\sum\limits_{t = 0}^{\min(p - 1, n - 1 - jb)} c^{1, \beta}_{t} g^{1,\beta}_{jb - ib + t}  &\le \frac{8k}{(1 - \beta)^2(\gamma_{\beta}^{-b} - 1)} \\
    &\le \frac{8k}{(1 - \beta)^2(e^{\frac{b\phi_{\beta}}{5p}} - 1)} \le \frac{40kp(1 + \beta)}{b(1 - \beta)^4}.
\end{aligned}
\end{equation}

\textbf{Step 3 (Bound $\mathcal{S}_3$)}
We first bound the inner sum, assuming $j > i$:
\begin{equation}
\begin{aligned}
    \sum\limits_{t = p}^{n - 1 - jb} g^{1, \beta}_{t} g^{1,\beta}_{jb - ib + t}&\le (c_{p}^{1,\beta})^2\sum\limits_{t = p}^{n - 1 - jb} \gamma_{\beta}^{t - p}\gamma_{\beta}^{jb - ib + t - p} \\
    &\le \frac{(c^{1,\beta}_p)^2\gamma_{\beta}^{jb - ib}}{1 - \gamma_{\beta}^2} \le \frac{(c^{1,\beta}_p)^2\gamma_{\beta}^{jb - ib}(4p + \phi_{\beta})}{2\phi_{\beta}}\\
    &\le \frac{r^2_p(4p + 1)\gamma_{\beta}^{jb - ib}}{2\phi_{\beta}( 1 - \beta)^2}
    \le \frac{5r_p^2\gamma_{\beta}^{jb - ib}p}{2\phi_{\beta}( 1 - \beta)^2} \\
    &\le \frac{5\gamma_{\beta}^{jb - ib}}{2\pi \phi_{\beta}( 1 - \beta)^2} \le \frac{\gamma_{\beta}^{jb - ib}(1 + \beta)}{( 1 - \beta)^4}.
\end{aligned}
\end{equation}

This yields the upper bound:

\begin{equation}
    \sum\limits_{i = 0}^{k - 1}\sum\limits_{j \ne i}^{k - 1}\sum\limits_{t = p}^{n - 1 - jb} g^{1, \beta}_{t} g^{1, \beta}_{jb - ib + t} \le \frac{2(1 + \beta)}{(1 - \beta)^4}\sum\limits_{i = 0}^{k - 1}\sum\limits_{j = i + 1}^{k - 1}\gamma_{\beta}^{jb - ib} \le \frac{10kp(1 + \beta)^2}{b(1 - \beta)^6}.
\end{equation}
analogously to the previous step.

\textbf{Step 4 (Bound $\mathcal{S}_4$)}
We bound the sum of squared column norms  as follows: 
\begin{equation}
\begin{aligned}
     \sum\limits_{i = 0}^{k - 1}\left[\sum\limits_{t = 0}^{\min(p -1, n - 1 - ib)}(c^{1,\beta}_t)^2 + \sum\limits_{t = p}^{n - 1 - ib} (g^{1,\beta}_{t})^2\right] &\le \frac{k}{(1 - \beta)^2}\left[\sum\limits_{t = 0}^{p - 1} r_t^2  + \frac{r_p^2}{1 - \gamma_{\beta}^2}\right]\\
     &\le  \frac{k}{(1 - \beta)^2}\left[1 + \log p + \frac{5 ( 1 + \beta)}{2\pi (1 - \beta)^2}\right].
\end{aligned}
\end{equation}

\textbf{Step 5 (Combination)} Combining all steps together, we bound the $k, b$ sensitivity as follows:
\begin{equation}
\begin{aligned}
    \senskb^2(C_{1, \beta}^p) &= \sum\limits_{j = 0}^{k - 1}\sum\limits_{i = 0}^{k - 1} \langle (C_{1, \beta}^p)_{:,ib}, (C_{1, \beta}^p)_{:,jb} \rangle\\
    &\le \frac{k}{(1 - \beta)^2} \bigg(1 + \log p + \frac{5(1 + \beta)}{2\pi} + 10\frac{p(1 + \beta)}{b(1 - \beta)^2}\bigg(4 + \frac{1 + \beta}{(1 - \beta)^2}\bigg)  +  4\left\lfloor\frac{p - 1}{b}\right\rfloor\bigg)\\
    &\le \frac{k(1 + \beta)^2}{(1 - \beta)^6}\bigg(2 + \log p + 54 \frac{p}{b} \bigg)
\end{aligned}
\end{equation}

Thus,

\begin{equation}
    \mathcal{E}( B_{1, \beta}^p, C_{1,\beta}^{p})^2 \le \frac{k(1 + \beta)^2}{(1 - \beta)^8}\left(1 + \log p + \frac{n - p}{p\pi}\right)\left(2 + \log p + 54 \frac{p}{b} \right)
\end{equation}
And 

\begin{equation}
    \mathcal{E}( B_{1,\beta}^p, C_{1, \beta}^{p}) = O_{\beta}\left(\sqrt{k}\log p + \sqrt{\frac{nk}{b}} + \sqrt{\frac{nk\log p}{p}} + \sqrt{\frac{kp\log p}{b}}\right).
\end{equation}
\end{proof}

\OptimalBandInv*

\begin{proof}
If the matrix $C_{\lambda}^{-1}$ is given by $\text{LTT}(1, -\lambda, 0, \dots, 0)$, then its inverse is $C_{\lambda} = \text{LTT}(1, \lambda, \lambda^{2}, \dots, \lambda^{n - 1})$. The product $A_{1, 0} C^{-1}_{\lambda} = \text{LTT}(1, 1 - \lambda, \dots, 1 - \lambda)$, which leads to the following error:
\begin{equation}
    \mathcal{E}(A_{1, 0} C^{-1}_{\lambda}, C_{\lambda})^2 = \frac{1}{n} \left(1 + (1 - \lambda)^2(n - 1)\right) \sum\limits_{k = 0}^{n - 1} \lambda^{2k} = \frac{(1 + (1 - \lambda)^2(n - 1))(1 - \lambda^{2n})}{n(1 - \lambda^2)}.
\end{equation}
Therefore,
\begin{equation}
    \inf_{\lambda \in (0,1)} \mathcal{E}(A_{1, 0} C^{-1}_{\lambda}, C_{\lambda})^2 \le  \frac{\left(2 - \frac{1}{n}\right) \left(1 - \left(1 - \frac{1}{\sqrt{n}}\right)^{n}\right)}{\sqrt{n}} \le \frac{2}{\sqrt{n}},
\end{equation}
when $\lambda =\sqrt{1 - \frac{1}{\sqrt{n}}}$ as $1 - \lambda \le 1 - (1 - \frac{1}{\sqrt{n}}) = \frac{1}{\sqrt{n}}$. The bound follows.
\end{proof}

\clearpage
\section{Additional Materials}

\begin{figure}[t]
    \centering\scriptsize
    \begin{subfigure}[c]{.32\linewidth}
        \includegraphics[width=\linewidth]{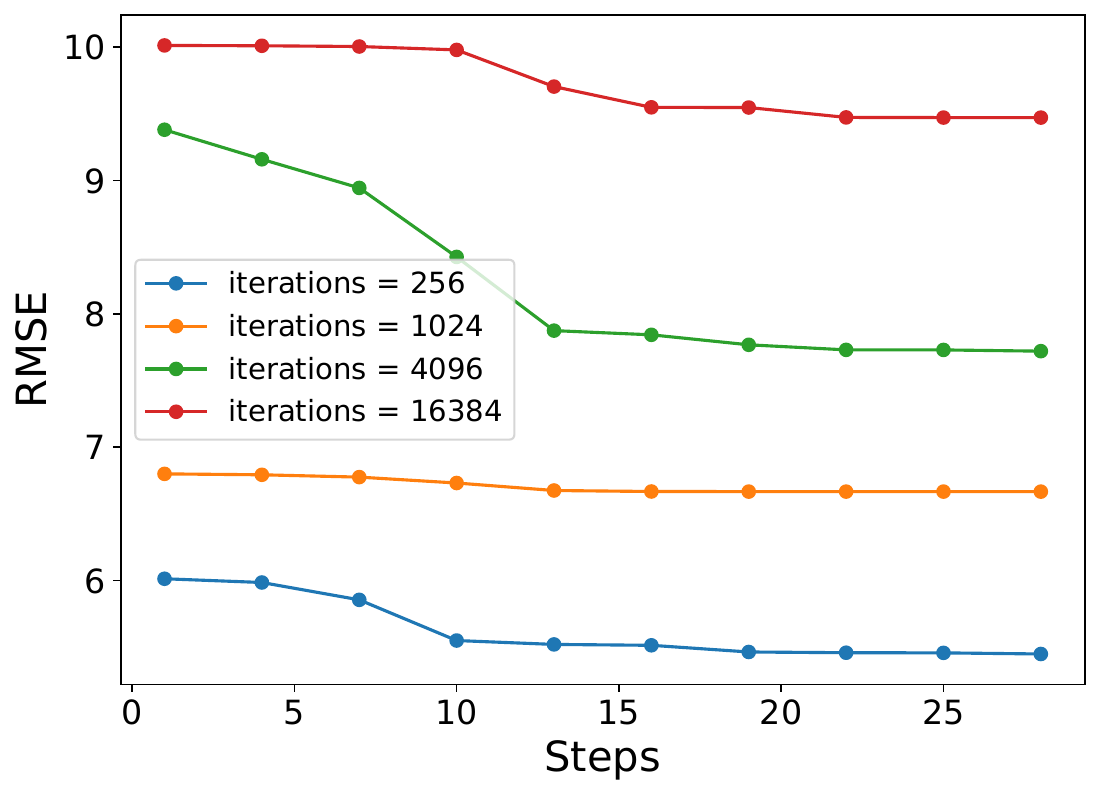}
        \subcaption{$k=4$, $\alpha = 1$, $\beta = 0$}
    \end{subfigure}\hfill
    \begin{subfigure}[c]{.32\linewidth}
        \includegraphics[width=\linewidth]{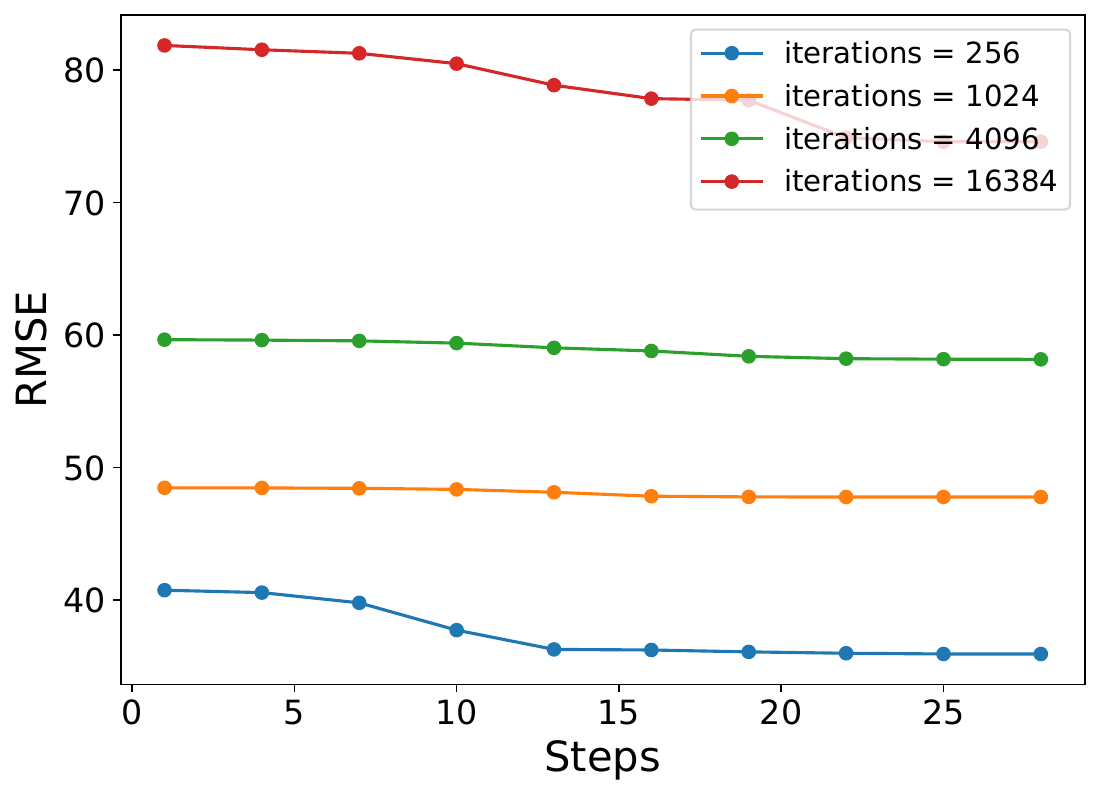}
        \subcaption{$k=4$,  $\alpha=1$, $\beta=0.9$}
    \end{subfigure}\hfill
    \begin{subfigure}[c]{.32\linewidth}
        \includegraphics[width=\linewidth]{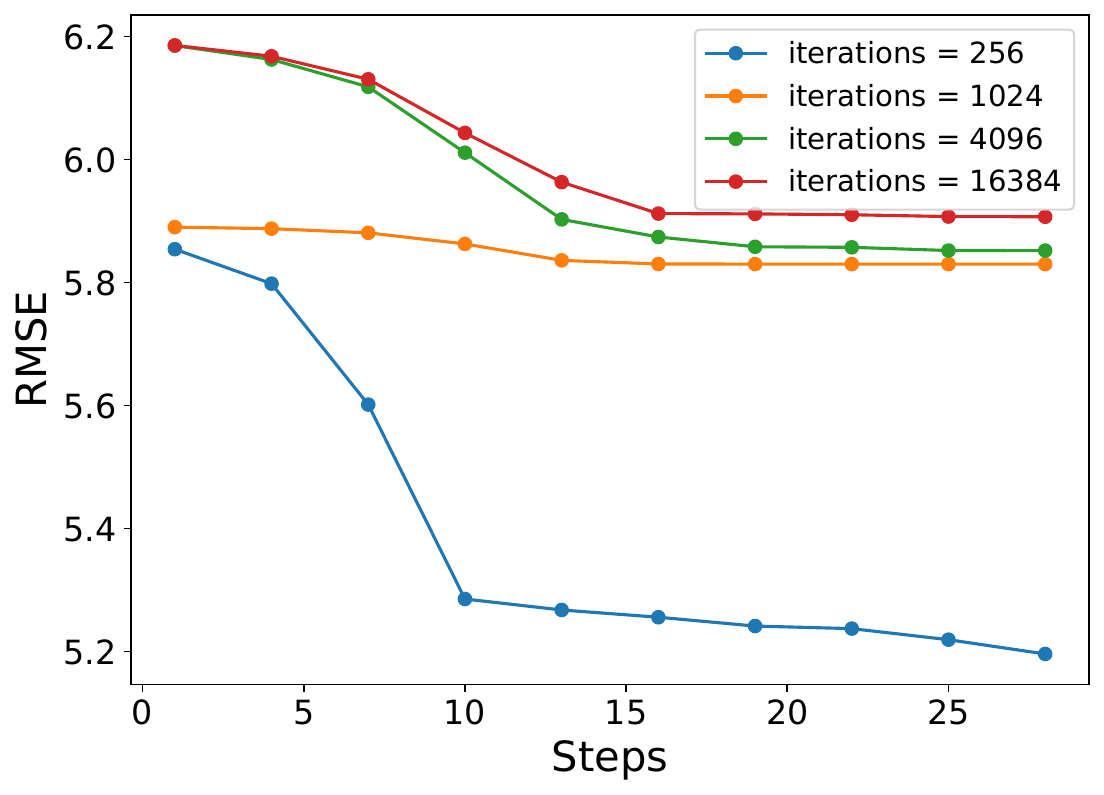}
        \subcaption{$k=4$, $\alpha=0.999$, $\beta=0$}
    \end{subfigure}
    
    \vspace{0.5em} 
    
    \begin{subfigure}[c]{.32\linewidth}
        \includegraphics[width=\linewidth]{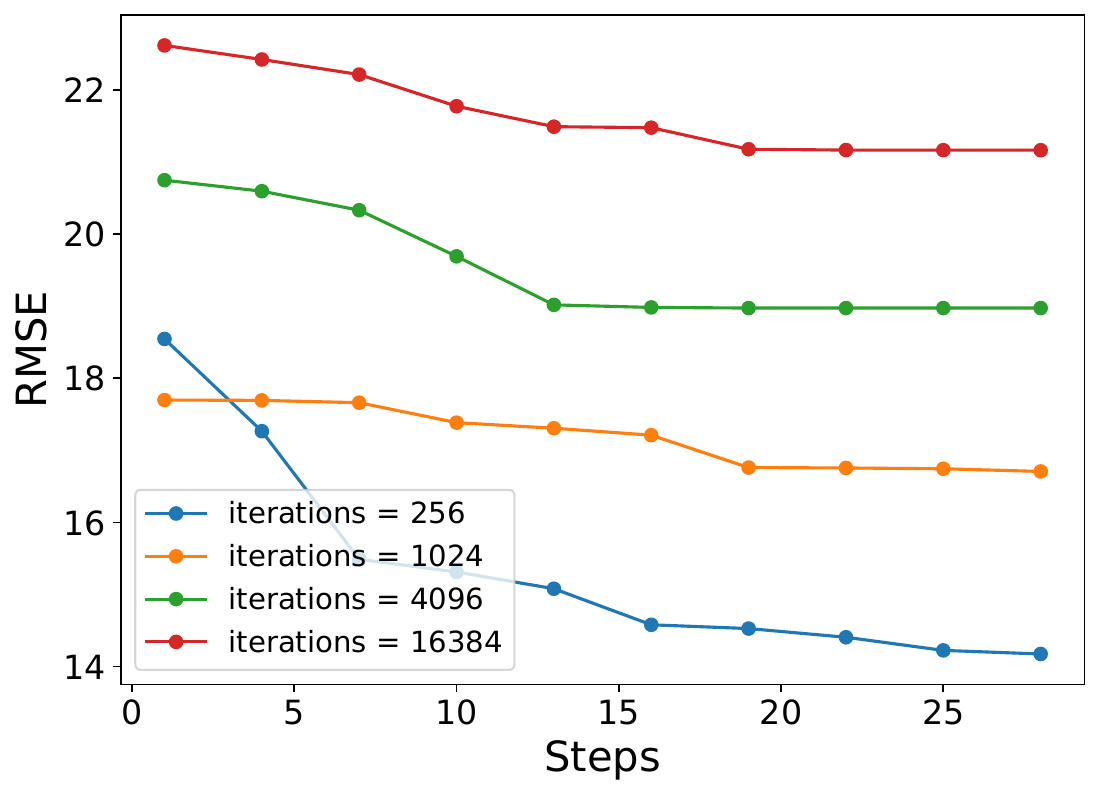}
        \subcaption{$k=16$, $\alpha=1$, $\beta = 0$}
    \end{subfigure}\hfill
    \begin{subfigure}[c]{.32\linewidth}
        \includegraphics[width=\linewidth]{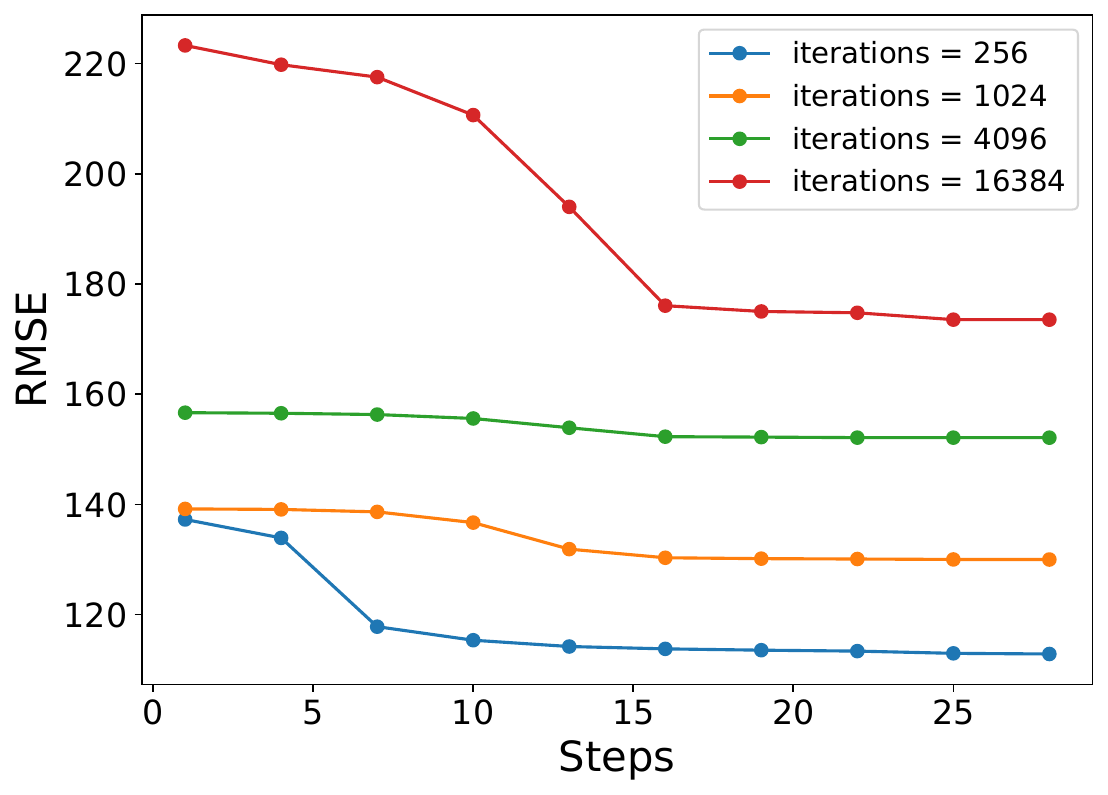}
        \subcaption{$k=16$,  $\alpha=1$, $\beta=0.9$}
    \end{subfigure}\hfill
    \begin{subfigure}[c]{.32\linewidth}
        \includegraphics[width=\linewidth]{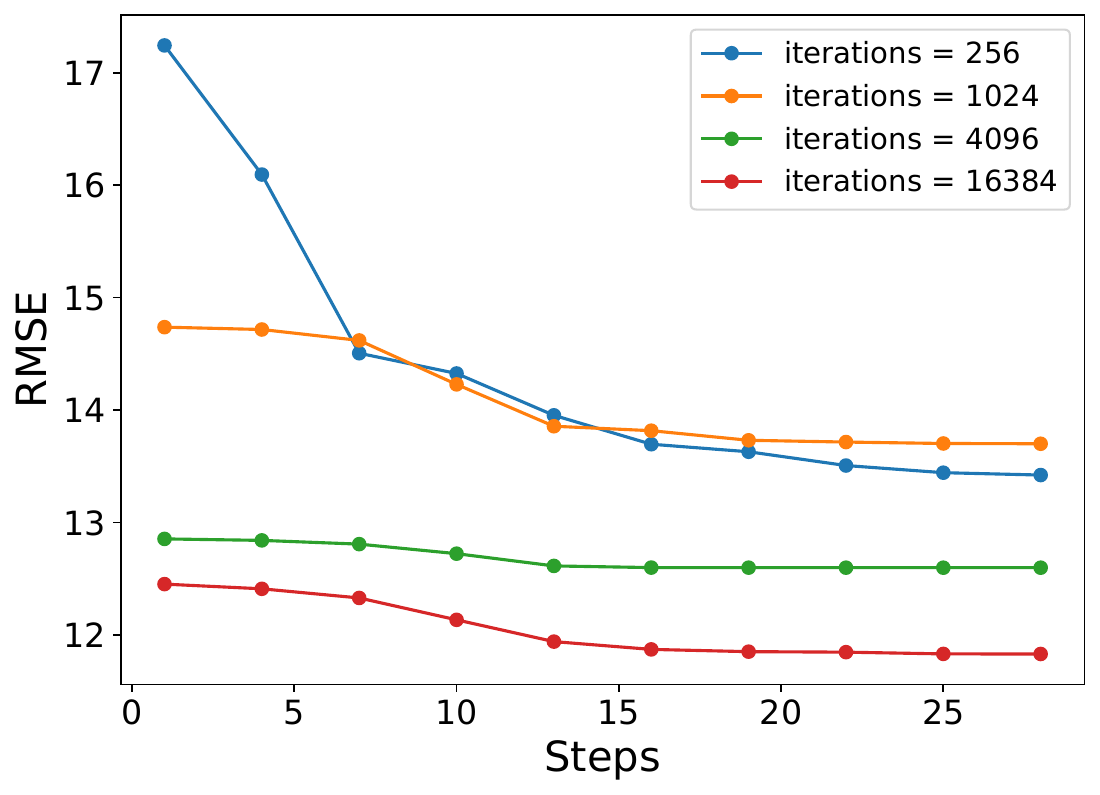}
        \subcaption{$k=16$, $\alpha=0.999$, $\beta=0$}
    \end{subfigure}
    
    \caption{Convergence of Band-Inv-MF under different settings: for participation numbers $k = 4, 16$, with and without momentum ($\beta$) and weight decay ($\alpha$), across various matrix sizes (iterations). In general, we observe that 20 steps are sufficient for the procedure to converge.}
    \label{fig:band_inv_optimizatoin}
\end{figure}

\begin{table}[t]
\caption{Hyperparameters for CIFAR-10 Experiments. We train all methods with and without amplification to achieve \((9, 10^{-5})\)-differential privacy. Training uses a weight decay of \(0.9999\), momentum of \(0.9\), and batch size \(512\). Noise multipliers are computed via an MCMC accountant for the amplified case, and as \(\sigma_{\epsilon, \delta} \times \senskb(C)\) for the non-amplified case, assuming 10 training epochs.}

\centering
\renewcommand{\arraystretch}{1} 
\setlength{\tabcolsep}{4pt}      
\begin{tabular}{llccccc}
\toprule
 & \textbf{Method} & \textbf{Noise Multiplier} & \textbf{Learning Rate}   & \textbf{bandwidth} & \textbf{Clip Norm}\\ 
\midrule
\multirow{5}{*}{Amplified} &   
DP-SGD & $1.2$ & $0.1$ & $1$  & $10$ \\ 
  & BSR& $2.3$ & $0.3$ & $4$ & $10$ \\ 
 & BISR & $4.4$ & $0.7$ & $4$ & $10$ \\ 
 & Band-MF&  $2.4$ & $0.3$ & $4$ & $10$\\
 & Band-Inv-MF& $8.2$ & $0.4$ & $4$ & $10$ \\
\midrule
\multirow{5}{*}{Non-amplified} &  
DP-SGD & $1.8$ & $0.1$ & $1$  & $10$ \\ 
  & BSR& $3.3$ & $0.2$ & $4$ & $10$ \\ 
 & BISR & $5.8$ & $0.7$ & $4$ & $10$ \\ 
 & Band-MF&  $3.5$ & $0.2$ & $4$ & $10$\\
 & Band-Inv-MF& $9.1$ & $0.5$ & $4$ & $10$ \\

\bottomrule
\end{tabular}
\label{tab:hyperparameters}
\end{table}

\begin{table}[b!]
\caption{CIFAR-10 experiments with and without amplification, for   $\varepsilon = 9$, $\delta = 10^{-5}$ showing test accuracy ($\%$) over 10 epochs. Mean $\pm$ standard error computed over 3 runs.}

\centering
\resizebox{\textwidth}{!}{%
\begin{tabular}{llcccccccccc}
\toprule
 & \textbf{Method} & \textbf{Epoch 1} & \textbf{Epoch 2} & \textbf{Epoch 3} & \textbf{Epoch 4} & \textbf{Epoch 5} & \textbf{Epoch 6} & \textbf{Epoch 7} & \textbf{Epoch 8} & \textbf{Epoch 9} & \textbf{Epoch 10} \\
\midrule
\multirow{5}{*}{Amp.}
& DP-SGD          & 12.7 $\pm$ 2.2 & 28.0 $\pm$ 1.1 & 34.4 $\pm$ 0.4 & 37.6 $\pm$ 0.7 & 39.8 $\pm$ 1.2 & 41.6 $\pm$ 0.2 & 42.3 $\pm$ 0.8 & 42.8 $\pm$ 0.3 & 43.5 $\pm$ 0.4 & 44.6 $\pm$ 0.7 \\
& BSR             & 28.3 $\pm$ 0.7 & 40.2 $\pm$ 1.1 & 43.6 $\pm$ 1.1 & 46.5 $\pm$ 0.9 & 48.0 $\pm$ 2.0 & 48.8 $\pm$ 1.4 & 48.9 $\pm$ 1.4 & 49.4 $\pm$ 0.7 & 49.2 $\pm$ 1.2 & 49.8 $\pm$ 0.3 \\
& BISR            & 32.3 $\pm$ 0.7 & 42.7 $\pm$ 1.1 & 47.5 $\pm$ 1.1 & 50.3 $\pm$ 0.9 & 52.8 $\pm$ 2.0 & 56.5 $\pm$ 1.4 & 57.9 $\pm$ 1.4 & 58.5 $\pm$ 0.7 & 60.5 $\pm$ 1.2 & 61.8 $\pm$ 0.3 \\
& Band-MF         & 27.7 $\pm$ 2.0 & 38.5 $\pm$ 0.3 & 43.1 $\pm$ 1.6 & 43.7 $\pm$ 1.8 & 46.8 $\pm$ 0.8 & 47.7 $\pm$ 0.3 & 48.2 $\pm$ 0.6 & 47.8 $\pm$ 2.6 & 49.1 $\pm$ 0.6 & 50.0 $\pm$ 0.4 \\
& Band-Inv-MF     & 23.6 $\pm$ 2.8 & 34.6 $\pm$ 1.3 & 40.0 $\pm$ 2.4 & 44.6 $\pm$ 1.3 & 48.6 $\pm$ 1.0 & 50.4 $\pm$ 1.0 & 50.6 $\pm$ 0.5 & 53.4 $\pm$ 0.8 & 56.2 $\pm$ 0.6 & 57.4 $\pm$ 1.2 \\
\midrule
\multirow{5}{*}{Non-Amp.}
& DP-SGD          & 19.5 $\pm$ 3.0 & 31.0 $\pm$ 1.1 & 36.7 $\pm$ 0.2 & 37.2 $\pm$ 0.4 & 37.7 $\pm$ 1.2 & 39.3 $\pm$ 2.0 & 39.8 $\pm$ 1.2 & 39.1 $\pm$ 0.3 & 39.5 $\pm$ 0.5 & 39.0 $\pm$ 0.7 \\
& BSR             & 25.4 $\pm$ 1.2 & 36.7 $\pm$ 1.2 & 40.8 $\pm$ 1.1 & 41.6 $\pm$ 2.0 & 43.6 $\pm$ 0.9 & 44.5 $\pm$ 0.7 & 45.0 $\pm$ 0.9 & 44.4 $\pm$ 2.1 & 45.3 $\pm$ 1.8 & 45.2 $\pm$ 0.8 \\
& BISR            & 31.8 $\pm$ 1.5 & 41.7 $\pm$ 2.2 & 45.4 $\pm$ 1.4 & 48.5 $\pm$ 1.3 & 51.1 $\pm$ 1.0 & 51.4 $\pm$ 2.7 & 53.8 $\pm$ 1.0 & 54.0 $\pm$ 1.2 & 55.5 $\pm$ 0.8 & 56.2 $\pm$ 0.2 \\
& Band-MF         & 25.9 $\pm$ 1.5 & 36.7 $\pm$ 0.9 & 41.1 $\pm$ 1.4 & 43.2 $\pm$ 1.3 & 42.8 $\pm$ 1.4 & 45.0 $\pm$ 0.2 & 45.5 $\pm$ 0.4 & 45.4 $\pm$ 1.8 & 46.7 $\pm$ 0.9 & 45.8 $\pm$ 0.2 \\
& Band-Inv-MF     & 27.4 $\pm$ 3.0 & 36.0 $\pm$ 2.5 & 39.5 $\pm$ 2.4 & 43.7 $\pm$ 1.2 & 46.7 $\pm$ 0.5 & 47.0 $\pm$ 2.0 & 49.7 $\pm$ 1.7 & 53.5 $\pm$ 0.5 & 54.4 $\pm$ 1.5 & 57.9 $\pm$ 0.4 \\
\bottomrule
\end{tabular}
}
\label{tab:accuracies}
\end{table}

\begin{table}[t!]
\caption{Hyperparameters for IMDB Sentiment Analysis Experiments with BERT-base. We train all methods with and without amplification to achieve \((9, 10^{-5})\)-differential privacy. Training uses a weight decay of \(0.99999\), momentum of \(0.95\), and batch size \(512\). Noise multipliers are computed via an MCMC accountant for the amplified case, and as \(\sigma_{\epsilon, \delta} \times \senskb(C)\) for the non-amplified case, assuming 10 training epochs.}

\centering
\renewcommand{\arraystretch}{1} 
\setlength{\tabcolsep}{4pt}      
\begin{tabular}{llccccc}
\toprule
 & \textbf{Method} & \textbf{Noise Multiplier} & \textbf{Learning Rate}   & \textbf{Bandwidth} & \textbf{Clip Norm}\\ 
\midrule
\multirow{5}{*}{Amplified} &   
DP-SGD & $1.2$ & $0.02$ & $1$  & $10$ \\ 
  & BSR& $2.3$ & $0.02$ & $4$ & $10$ \\ 
 & BISR & $4.4$ & $0.15$ & $4$ & $10$ \\ 
 & Band-MF&  $2.4$ & $0.02$ & $4$ & $10$\\
 & Band-Inv-MF& $8.2$ & $0.1$ & $4$ & $10$ \\
\midrule
\multirow{5}{*}{Non-amplified} &  
DP-SGD & $1.8$ & $0.02$ & $1$  & $10$ \\ 
  & BSR& $3.3$ & $0.02$ & $4$ & $10$ \\ 
 & BISR & $5.8$ & $0.15$ & $4$ & $10$ \\ 
 & Band-MF&  $3.5$ & $0.02$ & $4$ & $10$\\
 & Band-Inv-MF& $9.1$ & $0.1$ & $4$ & $10$ \\
\bottomrule
\end{tabular}
\label{tab:hyperparameters_imdb}
\end{table}

\begin{table}[b!]
\caption{IMDB sentiment analysis (BERT-base) with and without amplification, for $\varepsilon = 9$, $\delta = 10^{-5}$ showing test accuracy ($\%$) over 10 epochs. Mean $\pm$ standard error computed over 3 runs.}

\centering
\resizebox{\textwidth}{!}{%
\begin{tabular}{llcccccccccc}
\toprule
 & \textbf{Method} & \textbf{Epoch 1} & \textbf{Epoch 2} & \textbf{Epoch 3} & \textbf{Epoch 4} & \textbf{Epoch 5} & \textbf{Epoch 6} & \textbf{Epoch 7} & \textbf{Epoch 8} & \textbf{Epoch 9} & \textbf{Epoch 10} \\
\midrule
\multirow{5}{*}{Amp.}
& DP-SGD      & 71.23 $\pm$ 0.79 & 82.42 $\pm$ 0.54 & 84.45 $\pm$ 0.21 & 85.57 $\pm$ 0.19 & 86.18 $\pm$ 0.05 & 86.64 $\pm$ 0.03 & 86.93 $\pm$ 0.03 & 86.99 $\pm$ 0.14 & 87.26 $\pm$ 0.03 & 87.43 $\pm$ 0.12 \\
& BSR         & 75.26 $\pm$ 2.67 & 84.46 $\pm$ 0.25 & 86.11 $\pm$ 0.29 & 86.96 $\pm$ 0.11 & 87.48 $\pm$ 0.13 & 87.77 $\pm$ 0.07 & 87.84 $\pm$ 0.08 & 87.91 $\pm$ 0.13 & 88.19 $\pm$ 0.14 & 88.11 $\pm$ 0.06 \\
& BISR        & 83.27 $\pm$ 0.21 & 87.08 $\pm$ 0.05 & 88.16 $\pm$ 0.09 & 88.83 $\pm$ 0.06 & 89.20 $\pm$ 0.05 & 89.21 $\pm$ 0.09 & 89.49 $\pm$ 0.03 & 89.41 $\pm$ 0.11 & 89.60 $\pm$ 0.05 & 89.58 $\pm$ 0.17 \\
& Band-MF     & 76.66 $\pm$ 1.24 & 84.66 $\pm$ 0.17 & 86.36 $\pm$ 0.17 & 87.22 $\pm$ 0.15 & 87.62 $\pm$ 0.40 & 87.89 $\pm$ 0.23 & 88.04 $\pm$ 0.19 & 88.16 $\pm$ 0.21 & 88.33 $\pm$ 0.26 & 88.31 $\pm$ 0.19 \\
& Band-Inv-MF & 82.91 $\pm$ 0.53 & 86.92 $\pm$ 0.31 & 88.25 $\pm$ 0.05 & 88.60 $\pm$ 0.17 & 89.02 $\pm$ 0.19 & 89.23 $\pm$ 0.06 & 89.50 $\pm$ 0.06 & 89.50 $\pm$ 0.06 & 89.65 $\pm$ 0.00 & 89.53 $\pm$ 0.25 \\
\midrule
\multirow{5}{*}{Non-Amp.}
& DP-SGD      & 62.94 $\pm$ 1.47 & 78.21 $\pm$ 0.56 & 83.05 $\pm$ 0.19 & 84.17 $\pm$ 0.20 & 85.01 $\pm$ 0.06 & 85.40 $\pm$ 0.20 & 85.53 $\pm$ 0.17 & 85.80 $\pm$ 0.13 & 85.81 $\pm$ 0.17 & 85.71 $\pm$ 0.09 \\
& BSR         & 76.74 $\pm$ 1.04 & 84.09 $\pm$ 0.09 & 85.70 $\pm$ 0.22 & 86.59 $\pm$ 0.26 & 86.93 $\pm$ 0.03 & 87.15 $\pm$ 0.13 & 87.20 $\pm$ 0.11 & 87.19 $\pm$ 0.09 & 87.31 $\pm$ 0.15 & 87.18 $\pm$ 0.10 \\
& BISR        & 81.85 $\pm$ 0.86 & 86.65 $\pm$ 0.33 & 87.60 $\pm$ 0.39 & 88.64 $\pm$ 0.10 & 88.90 $\pm$ 0.05 & 88.83 $\pm$ 0.18 & 89.02 $\pm$ 0.09 & 89.43 $\pm$ 0.07 & 89.37 $\pm$ 0.00 & 89.42 $\pm$ 0.15 \\
& Band-MF     & 73.39 $\pm$ 1.39 & 83.63 $\pm$ 0.38 & 85.39 $\pm$ 0.16 & 86.09 $\pm$ 0.29 & 86.99 $\pm$ 0.06 & 87.08 $\pm$ 0.13 & 87.23 $\pm$ 0.16 & 87.34 $\pm$ 0.14 & 87.34 $\pm$ 0.07 & 87.08 $\pm$ 0.14 \\
& Band-Inv-MF & 81.99 $\pm$ 0.23 & 86.06 $\pm$ 0.19 & 87.98 $\pm$ 0.11 & 88.46 $\pm$ 0.12 & 88.63 $\pm$ 0.03 & 88.79 $\pm$ 0.32 & 88.85 $\pm$ 0.29 & 89.02 $\pm$ 0.21 & 88.89 $\pm$ 0.37 & 89.23 $\pm$ 0.16 \\
\bottomrule
\end{tabular}
}
\label{tab:imdb_accuracies}
\end{table}

\clearpage

\lstset{
  language=Python,                 
  basicstyle=\ttfamily\footnotesize, 
  caption={Python code for Band-Inv-MF factorization. The function "Band\_Inv\_MF" takes the matrix size ($n$), the minimum separation ($b$), the number of participations ($k$), the bandwidth ($p$), the weight decay ($\alpha$), the momentum ($\beta$), and the number of optimization steps.}, 
  frame=single,                    
  breaklines=true,                 
  aboveskip=1em, belowskip=1em,     
  showstringspaces=false,          
  tabsize=2                        
}

\begin{lstlisting}[label={lst:bandinv_code}]
import jax_privacy
from jax_privacy.maxtrix_factorization import toeplitz
import jax.numpy as jnp
import functools
import numpy as np

def expected_mean_error(inv_coef, n, k, workload_coef) -> float:
    inv_coef = jnp.pad(inv_coef, (0, n - inv_coef.size))
    B_norm_squared = toeplitz.mean_error(noising_coef=inv_coef, n=n,  workload_coef=workload_coef, skip_checks=True)
    
    coef = toeplitz.inverse_coef(inv_coef)
    min_sep = n // k # assume divisible

    sensitivity_squared = toeplitz.minsep_sensitivity_squared(coef, min_sep, k, n, skip_checks=True)

    return sensitivity_squared * B_norm_squared

def compute_square_root(x, n) -> np.ndarray:
    y = np.zeros(n)
    y[0] = np.sqrt(x[0])
    for k in range(1, n):
        y[k] = (x[k] -np.dot(y[1:k], y[1:k][::-1])) / (2 * y[0])
    return y

def init(n, p, alpha = 1.0, beta = 0.0) -> jnp.ndarray:
    x = jnp.array([1, -alpha - beta, alpha * beta] + [0]*(n-3))
    return jnp.array(compute_square_root(x, n)[:p])

def Band_Inv_MF(n, b, k, p, alpha, beta, steps = 20):
    # compute workload matrix
    M = jnp.array([(alpha ** (k + 1) - beta ** (k + 1)) / (beta - alpha) for k in range(n)])
    
    # initialize with BISR coefficients
    C_inv_init = init(n, p, alpha, beta)
    
    # optimize!
    C_inv_opt = toeplitz.optimize_banded_toeplitz(
      n=n,
      bands=p,
      strategy_coef=C_inv_init,
      loss_fn=functools.partial(expected_mean_error, k=k, workload_coef=M),
      max_optimizer_steps=steps,
    )
    return C_inv_opt
\end{lstlisting}

\clearpage

\end{document}